\documentclass[11pt]{article}
\usepackage{rheaj}
\usepackage{setspace}
\usepackage{xparse}
\newcommand{\mypara}[1]{\medskip \noindent {\bf #1}}

\newcommand{\racke}{R\"{a}cke\xspace}

\newcommand{\load}{\textnormal{load}}
\newcommand{\rload}{\textnormal{rload}}
\newcommand{\maxflow}{\textnormal{maxflow}}

\begin{document}
\title{Approximation Algorithms for Network Design \\ in Non-Uniform
  Fault Models\footnote{A preliminary version of this paper appeared
    in Proc.\ of ICALP, 2023 \cite{ChekuriJ23}. This version contains
    extensions and improved approximation ratios for some problems via
    an alternate approach.}
}
\author{Chandra Chekuri \thanks{Dept. of Computer Science, Univ. of
    Illinois, Urbana-Champaign, Urbana, IL 61801. {\tt
      chekuri@illinois.edu}. Supported in part by NSF grants CCF-1910149 and CCF-1907937.}
\and
Rhea Jain \thanks{Dept. of Computer Science, Univ. of Illinois,
  Urbana-Champaign, Urbana, IL 61801. {\tt rheaj3@illinois.edu}. Supported in part by NSF grant CCF-1910149.}
}
\date{\today}
\maketitle

\begin{abstract}
  The Survivable Network Design problem (SNDP) is a well-studied
  problem, (partly) motivated by the design of networks that are
  robust to faults under the assumption that any subset of edges up to
  a specific number can fail.  We consider \emph{non-uniform} fault
  models where the subset of edges that fail can be specified in
  different ways. Our primary interest is in the flexible graph
  connectivity model
  \cite{Adjiashvili13,AdjiashviliHM22,BoydCHI22}, in
  which the edge set is partitioned into \emph{safe} and \emph{unsafe}
  edges. The goal is to design a network that has desired connectivity
  properties under the assumption that only unsafe edges up to a
  specific number can fail.
  We also discuss the bulk-robust model
  \cite{AdjiashviliSZ15,Adjiashvili15} and the relative survivable
  network design model \cite{DinitzKK22,DinitzKKN23}.
  While SNDP admits a $2$-approximation \cite{Jain01}, the
  approximability of problems in these more complex models is much
  less understood even in special cases. We make two contributions.

  Our first set of results are in the flexible graph connectivity
  model. Motivated by a conjecture that a constant factor
  approximation is feasible when the robustness parameters are
  fixed constants, we consider two important special cases, namely the
  single pair case, and the global connectivity case. For both these,
  we obtain constant factor approximations in several parameter
  ranges of interest. These
  are based on an augmentation framework and via decomposing the families
  of cuts that need to be covered into a small number of uncrossable
  families.
  
  Our second set of results are poly-logarithmic approximations for
  the bulk-robust model \cite{AdjiashviliSZ15} when the ``width'' of
  the given instance (the maximum number of edges that can fail in any
  particular scenario) is fixed. Via this, we derive corresponding
  approximations for the flexible graph connectivity model and the
  relative survivable network design model.  The results are obtained
  via two algorithmic approaches and they have different tradeoffs in terms of
  the approximation ratio and generality.
\end{abstract}

\section{Introduction}
\label{sec:intro}
The Survivable Network Design Problem (SNDP) is an important problem
in combinatorial optimization that generalizes many well-known
problems related to connectivity and is also motivated by practical
problems related to the design of fault-tolerant networks. The input
to this problem is an undirected graph $G=(V,E)$ with non-negative
edge costs $c: E \to \R_+$ and a collection of source-sink pairs
$(s_1,t_1),\ldots,(s_h,t_h)$, each with an integer connectivity requirement $r_i$.
The goal is to find a minimum-cost subgraph $H$ of $G$ such
that $H$ has $r_i$ connectivity for each pair $(s_i,t_i)$.
We focus on edge-connectivity requirements in this paper.\footnote{In the literature
the term EC-SNDP and VC-SNDP are used to distinguish edge and vertex connectivity
requirements. In this paper we use SNDP in place of EC-SNDP.}
SNDP contains as special cases classical problems such as $s$-$t$ shortest path,
minimum spanning tree (MST), minimum $k$-edge-connected subgraph
($k$-ECSS), Steiner tree, Steiner forest and several others. It is
NP-Hard and APX-Hard to approximate. There is a $2$-approximation
via the iterated rounding technique \cite{Jain01}.

A pair $(s,t)$ that is $k$-edge-connected in $G$ is robust to the
failure of \emph{any} set of $k-1$ edges. In various settings,
the set of edges that can fail can be correlated and/or exhibit
non-uniform aspects. We are interested in network design
in such settings, and discuss a few models of interest that
have been studied in the (recent) past. We start with the flexible
graph connectivity model (flex-connectivity for short) that was the
impetus for this work.

\mypara{Flexible graph connectivity:} In this model, first
introduced by Adjiashvili \cite{Adjiashvili13} and studied in several
recent papers
\cite{AdjiashviliHM22,AdjiashviliHMS20,BoydCHI22,BansalCGI22,Bansal23},
the input is an edge-weighted undirected graph $G=(V,E)$ where the
edge set $E$ is partitioned to \emph{safe} edges $\calS$ and
\emph{unsafe} edges $\calU$. The assumption, as the names suggest, is
that unsafe edges can fail while safe edges cannot. We say that a
vertex-pair $(s,t)$ is $(p,q)$-flex-connected in a subgraph $H$ of $G$
if $s$ and $t$ are $p$-edge-connected after deleting from $H$ any
subset of at most $q$ unsafe edges. The input, as in SNDP, consists of $G$ and
$h$ source-sink pairs; the $i$'th pair now specifies a $(p_i,q_i)$-flex-connectivity
requirement. The goal is to find a min-cost subgraph $H$ of $G$
such that for each $i \in [h]$, $s_i$ and $t_i$ are $(p_i,q_i)$-flex-connected
in $H$.  We refer to this as the Flex-SNDP problem. Note that
Flex-SNDP generalizes SNDP in two ways. If all edges are safe
($E = \calS$), then $(p,0)$-flex-connectivity is equivalent to
$p$-edge-connectivity. Similarly, if all edges are unsafe ($E =
\calU$), then $(1,q-1)$-flex-connectivity is equivalent to
$q$-edge-connectivity.

\mypara{Bulk-robust network design:} This fairly general
non-uniform model was introduced by Adjiashvili, Stiller and Zenklusen
\cite{AdjiashviliSZ15}.  Here an explicit \emph{scenario} set $\Omega
= \{F_1,F_2,\ldots,F_m\}$ is given as part of the input where each
$F_j \subseteq E$. The goal is to find a min-cost subgraph $H$ of $G$
such that each of the given pairs $(s_i,t_i)$ remains connected in $H
- F_j$ for each $j \in [m]$. We consider a slight generalization of
this problem in which each scenario is now a pair $(F_j,
\mathcal{K}_j)$ where $\mathcal{K}_j$ is a set of source-sink
pairs. As earlier, the goal is to find a min-cost subgraph $H$ of $G$
such that for each $j \in [m]$, each pair $(s_i,t_i)$ in $\mathcal{K}_j$ is
connected in $H-F_j$. The \emph{width} of the failure scenarios is
$\max_{1\le j \le \ell} |F_j|$. We use Bulk-SNDP to refer to this
problem.

The advantage of the bulk-robust model is that one can specify
arbitrarily correlated failure patterns, allowing it to capture many
well studied problems in network design. We observe that SNDP and
Flex-SNDP problem can be cast as special cases of Bulk-SNDP model
where the width is $\max_i (r_i-1)$ in the former case and $\max_i
(p_i+q_i-1)$ in the latter case. The slight generalization on
Bulk-SNDP described above also allows us to model a new problem
recently proposed by Dinitz, Koranteng, and Kortsarz
\cite{DinitzKK22} called \textbf{Relative Survivable Network Design}
(RSNDP). This problem allows one to ask for higher connectivity even
when the underlying graph $G$ has small cuts. The input is an
edge-weighted graph $G=(V,E)$ and source-sink pairs $(s_i,t_i)$ each
with requirement $r_i$; the goal is to find a min-cost subgraph $H$ of
$G$ such that for each $F \subseteq E$ with $|F| < r_i$,
$(s_i,t_i)$ is connected in $H-F$ if $s_i$ and $t_i$ are connected in
$G-F$. It is easy to see that RSNDP is a special case of Bulk-SNDP
with width at most $\max_i (r_i-1)$. A disadvantage of Bulk-SNDP is that
scenarios have to be explicitly listed, while the other models
discussed specify failure scenarios implicitly. However, when
connectivity requirements are small/constant, one can reduce to
Bulk-SNDP by explicitly listing the failure sets.

We are also interested in the generalization of non-uniform models to 
Group Connectivity (also known as Set or Generalized Connectivity),
first introduced by \cite{ReichW89} and later studied in approximation by 
\cite{GargKR98} and several others; Section \ref{subsec:related_work}
provides more details.
In this setting, instead of source-sink pairs 
of vertices, we are given source-sink pairs of \emph{sets} $S_i, T_i \subseteq V$. 
We say $S_i$ and $T_i$ are $k$-edge-connected if there is a path from 
$S_i$ to $T_i$ even after the removal of any set of $k-1$ 
edges. The definitions of flex-connectivity, bulk-robust network design,
and relative survivable network design can all be extended to the group 
setting by requiring edge connectivity between set pairs instead of vertex pairs.

While SNDP admits a $2$-approximation, the approximability of network
design in the preceding models is not well-understood.  The known
results mostly focus on two special cases: (i) the single pair case
where there is only one pair $(s,t)$ with a connectivity requirement
and (ii) the spanning or global connectivity case when all pairs of
vertices have identical connectivity requirement.  Even in the single
pair case, there are results that show that problems in the
non-uniform models are hard to approximate to poly-logarithmic or
almost-polynomial factors when the connectivity requirement is not
bounded \cite{AdjiashviliSZ15,AdjiashviliHMS20}. Further, natural LP
relaxations in some cases can also be shown to have large integrality
gaps (see Appendix \ref{sec:integrality_gap_1k}). 
Motivated by these negative results and practical
considerations, we focus our attention on Flex-SNDP when the max
connectivity requirement $p,q$ are small, and similarly on Bulk-SNDP
when the width is small. Other network design problems with similar
hardness results have admitted approximation ratios that depend on the
max connectivity requirement, for example, VC-SNDP
\cite{ChakrabortyCK08,ChuzhoyK12,Nutov12} and the single-pair case of
Bulk-Robust \cite{AdjiashviliSZ15}.

\subsection{Our Contributions and Comparison to Existing Work}
\label{subsec:contribution}
We are mainly motivated by Flex-SNDP and insights for it via Bulk-SNDP.
We make two broad contributions. Our first set of results is on
special cases of Flex-SNDP for which we obtain constant factor
approximations. Our second contribution is a poly-logarithmic approximation
for Flex-SNDP, Bulk-SNDP, and RSNDP when the requirements are small. 
Some of these results were initially presented in a conference version 
\cite{ChekuriJ23}; we extend and improve on those results here.

We use the terminology $(p,q)$-Flex-ST to refer to the single-pair problem
with requirement $(p,q)$. We use the term $(p,q)$-FGC to refer to the
spanning/global-connectivity problem where all pairs of vertices
have the $(p,q)$-flex-connectivity requirement (the term FGC is to be
consistent with previous usage \cite{AdjiashviliHM22,BoydCHI22}).

\mypara{$(p,q)$-FGC:} Adjiashvili et al.\ \cite{AdjiashviliHM22}
considered $(1,1)$-FGC and obtained a constant factor approximation
that was subsequently improved to $2$ by Boyd et
al. \cite{BoydCHI22}. \cite{BoydCHI22} obtained several results for
$(p,q)$-FGC including a $4$-approximation for $(p,1)$-FGC, a
$(q+1)$-approximation for $(1,q)$-FGC, and an $O(q \log
n)$-approximation for $(p,q)$-FGC. The first non-trivial case of small
$p,q$ for which we did not know a constant factor is $(2,2)$-FGC. We
prove several results that, as a corollary, yield constant factor
approximation for small values of $p,q$.

\begin{theorem}
\label{thm:introfgc}
  For any $q \ge 0$ there is a $(2q+2)$-approximation for $(2,q)$-FGC.
  For any $p \ge 1$  there is a $(2p+4)$-approximation for $(p,2)$-FGC,  and a
  $(4p+4)$-approximation for $(p,3)$-FGC.  Moreover, for
  all \emph{even} $p \ge 2$ there is a $(6p+4)$-approximation for $(p,4)$-FGC.
\end{theorem}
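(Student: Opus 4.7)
The plan is to employ the augmentation framework alluded to in the introduction: build up the solution in phases, each of which augments the current subgraph to cover an uncrossable family of cut constraints via the standard $2$-approximation, so that the overall approximation ratio is $2k$ where $k$ is the number of phases. Observe that a cut $\delta(S)$ of the current subgraph, with $a$ safe and $b$ unsafe edges, is $(p,q)$-flex-covered iff $a \ge p$ or $a + b \ge p + q$. The disjunctive nature of this condition is precisely what prevents a direct application of the Williamson--Goemans $2$-approximation and motivates decomposing the requirement into several uncrossable subrequirements.

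For each parameter regime in the statement I would design a decomposition that yields the targeted number of phases: $k = q+1$ for $(2,q)$-FGC, $k = p+2$ for $(p,2)$-FGC, $k = 2p+2$ for $(p,3)$-FGC, and $k = 3p+2$ for $(p,4)$-FGC with $p$ even. A natural starting point is to stratify by the number of safe edges already present in a cut of the current partial solution: this gives $p$ safe strata, and within each stratum one further splits according to how many additional edges are still needed to reach the $p+q$ total threshold. The coefficients of $p$ appearing in the ratios for $q=2,3,4$ (namely $1$, $2$, $3$) suggest that each safe stratum is partitioned into one, two, or three uncrossable subfamilies based on a residue or pairing structure of the unsafe deficiency, and the parity restriction in the $(p,4)$ case reflects a pairing that goes through only when $p$ is even.

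The main technical obstacle is proving uncrossability of each subfamily. Given two deficient cuts $\delta(A)$ and $\delta(B)$ in a common family, we have posimodular inequalities for both the safe-edge and total-edge counts on the uncrossed pairs $\bigl(A\cap B, A\cup B\bigr)$ and $\bigl(A\setminus B, B\setminus A\bigr)$. The subtlety is that after uncrossing, a cut whose deficiency was due to the ``too few safe edges'' disjunct can become deficient due to the ``too few total edges'' disjunct, so the two disjuncts can exchange roles; the stratification must be designed so that this exchange either keeps the uncrossed cut inside the family or leaves it already covered by a previous phase. Once uncrossability is verified, a phase-by-phase augmentation using the primal-dual $2$-approximation for uncrossable families covers each family at cost at most $2\cdot \mathrm{OPT}$, and summing over the $k$ phases gives the stated ratios.
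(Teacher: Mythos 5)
Your high-level plan (iterated augmentation, decomposing the violated cuts of each augmentation round into uncrossable families, paying $2\cdot\opt$ per family) is indeed the paper's strategy, and your phase counts are consistent with the stated ratios. But the proposal has two genuine gaps, one structural and one substantive.

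First, the decomposition you guess is not the right one, and the arithmetic you use to justify it misreads where the coefficients come from. The paper augments one unit of $q$ at a time: in the round from $(p,j-1)$ to $(p,j)$, every violated cut has \emph{exactly} $p+j-1$ edges (it has fewer than $p$ safe edges, so feasibility for $(p,j-1)$ forces at least $p+j-1$ edges, and violation for $(p,j)$ forces at most that many). Consequently there is no ``unsafe deficiency'' left to stratify by within a round: the only stratification is by the number $i\in\{0,\dots,p-1\}$ of safe edges, giving $p$ uncrossable families per round, processed in increasing order of $i$ so that when stage $i$ runs every remaining violated cut has at least $i$ safe edges. The coefficients $1,2,3$ of $p$ in the ratios for $q=2,3,4$ are the number of augmentation rounds beyond the $(p,1)$ base (namely $q-1$), each contributing $2p\cdot\opt$; the additive $4$ is the cost of the known $4$-approximation for $(p,1)$-FGC used as the starting point (and the additive $2$ in the $(2,q)$ case is the $2$-ECSS base). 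Your proposed further splitting of each safe stratum into one, two, or three subfamilies ``based on a residue or pairing structure'' does not correspond to anything that is needed or that the counting supports.

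Second, and more importantly, the uncrossability proofs are the entire technical content of the theorem and are absent. For $q\le 3$ the argument is: if $A,B\in\calC_i$ fail to uncross, then by submodularity and posimodularity two of the four corner sets (one from each of $\{A\cap B,A\cup B\}$ and $\{A-B,B-A\}$) have at most $p+q-2$ edges, hence (by feasibility for $(p,q-1)$) at least $p$ safe edges, hence at most $q-2$ unsafe edges each; since $\delta(A)\subseteq\delta(A\cap B)\cup\delta(A-B)$ (and the analogous containments for the other cases), $\delta(A)$ has at most $2(q-2)<q$ unsafe edges, contradicting that a violated cut has at least $q$ unsafe edges. This is exactly where the restriction $q\le 3$ enters; the $q=4$ case needs a sharper count of safe edges shared between $\delta(A\cap B)$ and $\delta(A-B)$, and the parity of $p$ enters through a ceiling $\lceil(2p-i)/2\rceil$ in that count at the last stratum $i=p-1$ (the family $\calC_{p-1}$ is provably \emph{not} uncrossable for odd $p$). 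The $(2,q)$ case is different again: there the whole violated family in each round is uncrossable with no stratification, via an argument exploiting that each violated cut carries at most one safe edge. You correctly identify that ``the two disjuncts can exchange roles'' as the obstacle, but identifying the obstacle is not the same as overcoming it, and without these lemmas the claimed ratios do not follow.
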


\begin{remark}
  In independent work Bansal et al.\ \cite{BansalCGI22} obtained an
  $O(1)$-approximation for $(p,2)$-FGC for all $p \ge 1$ ($6$ when $p$
  is even and $20$ when $p$ is odd).
  In more recent work Bansal \cite{Bansal23} obtained an $O(1)$
  approximation for $(p,3)$-FGC for all $p \ge 1$.
\end{remark}

\mypara{$(p,q)$-Flex-ST:} Adjiashvili et al.\ \cite{AdjiashviliHM22}
considered $(1,q)$-Flex-ST and $(p,1)$-Flex-ST and obtained several
results. They described a $q$-approximation for $(1,q)$-Flex-ST and a
$(p+1)$-approximation for $(p,1)$-Flex-ST; when $p$ is a fixed
constant they obtain a $2$-approximation.  Also implicit in
\cite{AdjiashviliSZ15} is an $O(q(p+q)\log n)$-approximation algorithm
for $(p,q)$-Flex-ST that runs in $n^{O(p+q)}$-time.  No constant
factor approximation was known when $p,q \ge 2$ with $(2,2)$-Flex-ST
being the first non-trivial case. We prove a constant factor approximation
for this and several more general settings via the following theorem.

\begin{theorem}
\label{thm:intro-flex-st}
  For all $p,q$ where $(p+q) > pq/2$, there is an $O((p+q)^{O(p)})$-approximation 
  algorithm for $(p,q)$-Flex-ST that runs in $n^{O(p+q)}$ time.
  In particular, there is an $O(1)$ approximation for $(p,2)$ and $(2,q)$-Flex-ST 
  when $p,q$ are fixed constants.
\end{theorem}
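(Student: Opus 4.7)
My plan is to cast $(p,q)$-Flex-ST as a single-pair Bulk-SNDP instance of bounded width and then obtain the stated approximation via LP rounding that exploits the structure of small $s$-$t$ cuts in the Flex-SNDP setting.

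\smallskip

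\emph{Reduction to Bulk-ST.} A subgraph $H \subseteq E$ is $(p,q)$-flex-$(s,t)$-connected iff for every $F_u \subseteq \calU$ with $|F_u| \le q$ and every $F' \subseteq E \setminus F_u$ with $|F'| \le p-1$, the graph $H - (F_u \cup F')$ still contains an $s$-$t$ path. Taking the scenario set $\Omega$ of a Bulk-ST instance to be all such unions $F_u \cup F'$ gives a single-pair Bulk-ST instance of width $w = p+q-1$, with $|\Omega|$ and enumeration time both $n^{O(p+q)}$. This accounts for the $n^{O(p+q)}$ running time in the statement, and any $\rho$-approximation for Bulk-ST on the reduced instance yields a $\rho$-approximation for $(p,q)$-Flex-ST.

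\smallskip

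\emph{Rounding.} I would write the natural cut LP for this Bulk-ST instance: $x_e \in [0,1]$, and $\sum_{e \in C} x_e \ge 1$ for every $F \in \Omega$ and every $s$-$t$ cut $C$ in $G - F$, for which min-cut gives a polynomial-time separation oracle. The rounding then proceeds recursively: at each level I identify an $s$-$t$ cut $C$ of size at most $p+q-1$ in the current instance, enumerate the $(p+q)^{O(1)}$ ways to choose which edges of $C$ to include in the solution, pay for them using the LP mass on $C$, contract them, and recurse on a residual instance whose effective connectivity requirement has dropped by at least one. The depth of recursion is $O(p)$, and multiplying the per-level factor of $O(p+q)$ over $O(p)$ levels gives the claimed $(p+q)^{O(p)}$ approximation ratio.

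\smallskip

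\emph{Main obstacle.} The crux of the proof is showing that the recursion terminates in $O(p)$ levels with per-level factor $O(p+q)$. Concretely, one must argue that whenever the current LP solution is non-integral there exists a cut $C$ with $|C| \le p+q-1$ whose total LP mass is $O(1)$, so that paying for all of $C$ costs only $O(p+q)$ times the LP value restricted to $C$. This is precisely where the hypothesis $(p+q) > pq/2$ (equivalently $\tfrac{1}{p} + \tfrac{1}{q} > \tfrac{1}{2}$) enters: it rules out the ``balanced'' regime in which safe-dominated cuts and bulk-dominated cuts would conspire to force the LP to spread its mass over many edges at every scale, thereby preventing the recursion from making bounded progress. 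A careful charging argument over the Flex-SNDP cut function $f(C) = |C \cap \calS| + \max(|C \cap \calU| - q,\, 0)$, using cut submodularity together with the arithmetic constraint on $p$ and $q$, should make the analysis go through and yield the stated constant-factor bound when $p$ and $q$ are fixed.
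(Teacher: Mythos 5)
Your proposal has a genuine gap: the rounding step, which is where all the difficulty lies, is not an algorithm but a wish. ``Identify an $s$-$t$ cut $C$ of size at most $p+q-1$, pay for all of it, contract, and recurse on a residual instance whose effective connectivity requirement has dropped by one'' is not well-defined: there are exponentially many violated cuts, covering one of them does nothing for the others, and no invariant is given that would make the ``effective connectivity requirement'' drop or bound the recursion depth by $O(p)$. Likewise, the claim that non-integrality forces the existence of a small cut carrying $O(1)$ LP mass, and that the hypothesis $(p+q)>pq/2$ ``rules out the balanced regime,'' is asserted rather than proved; you explicitly defer the crux to a charging argument that ``should make the analysis go through.'' The reduction to width-$(p+q-1)$ Bulk-ST is correct but buys nothing by itself, since the known single-pair Bulk-ST guarantees are $O(k^2\log n)$, not constant.

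For comparison, the paper's proof proceeds quite differently. It augments from $(p,q-1)$ to $(p,q)$ in stages indexed by the number $i$ of safe edges on a violated cut. Before augmenting, it buys a $(p+q)$-approximate solution to the $p(p+q)$-Cap-ST instance (safe edges capacity $p+q$, unsafe capacity $p$) and decomposes the resulting flow into at least $p(p+q)$ unit paths $\calP$. The condition $(p+q)>pq/2$ enters in one precise counting step (Lemma~\ref{lemma:st_allcutscovered}): if some safe edge on a violated cut with $i$ safe edges had no ``private'' flow path, the number of paths crossing the cut would be at most $(i-1)(p+q)+(p+q-1-i)p = p(p+q)+(iq-2p-q) < p(p+q)$, a contradiction exactly when $2(p+q)>pq$ and $i\le p-1$. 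This lets the violated cuts in stage $i$ be partitioned into subfamilies $\calC_i^{\calQ}$ indexed by $i$-subsets $\calQ$ of flow paths, each of which is shown to be a \emph{ring family} and hence coverable optimally; summing $\binom{p(p+q)}{i}$ subproblems over stages and over $q$ augmentation rounds gives $(p+q)^{O(p)}$. Your proposal contains neither the flow-decomposition structure nor any mechanism for covering an exponential family of cuts, so it does not constitute a proof.
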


\mypara{Flex-SNDP, Bulk-SNDP, and RSNDP:} We show that these problems
admit poly-logarithmic approximation algorithms when the widths/connectivity
requirements are small. No previous approximation 
algorithms were known for SNDP versions of
flexible graph connectivity (with both $p,q \ge 2$) or bulk-robustness, and 
previous approximations for RSNDP were limited to $k \leq 3$ \cite{DinitzKKN23}. 
We rely on two different algorithmic approaches to obtain these results. The first
uses a cost-sharing argument to show a reduction to the Hitting Set
problem. \footnote{This approach, Theorem \ref{thm:bulk-sndp} and corollaries 
\ref{cor:flex-sndp},\ref{cor:rsndp} are new to this version, and
provide improved approximation rations when compared to the ones in
\cite{ChekuriJ23}.}

\begin{theorem}
\label{thm:bulk-sndp}
  There is a randomized algorithm that yields an $\tilde O(k^2 \log^2 
  n)$-approximation for Bulk-SNDP on instances with width at most $k$ and runs in 
  $n^{O(k)}$ time.
\end{theorem}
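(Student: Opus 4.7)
The plan is to reduce Bulk-SNDP to a weighted Set Cover instance via a cost-sharing argument, and then apply the greedy algorithm for Set Cover. The target $\tilde O(k^2 \log^2 n)$ factor splits naturally as $(k \log n) \cdot (k \log n)$: one $k \log n$ from the greedy Set Cover guarantee applied to the $N = n^{O(k)}$ scenario-pair requirements, and one $k \log n$ from an approximate min-density subroutine used at each greedy step.

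First, I would enumerate the scenario-pair requirements. Since each $F_j$ has $|F_j| \le k$, there are at most $\binom{|E|}{\le k} = n^{O(k)}$ distinct failure sets, and combined with the pairs inside each $\mathcal{K}_j$ this gives $N = n^{O(k)}$ requirements $r = (F, s, t)$ of the form ``$s, t$ must be connected in $H - F$''; this is where the $n^{O(k)}$ running time enters. I would then cast Bulk-SNDP as weighted Set Cover: for each edge subset $H' \subseteq E$, let $S_{H'}$ cover those $r = (F, s, t)$ for which $H' - F$ contains an $s$-$t$ path, at cost $c(H')$. A feasible Bulk-SNDP solution is a single set covering all requirements, so the Set Cover optimum equals $\mathit{OPT}$.

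Next, I would run greedy Set Cover, which iteratively picks $H'$ minimizing the density $c(H') / |S_{H'} \cap U'|$ on the current uncovered set $U'$. The cost-sharing observation is that for any feasible $H^*$, fractionally sharing $c(H^*) = \mathit{OPT}$ evenly across the $|U'|$ uncovered requirements shows that some subgraph has density at most $\mathit{OPT}/|U'|$. Turning this into an algorithm is the density subroutine: I would solve a scenario-aware LP relaxation and apply randomized rounding with amplification $\Theta(k \log n)$, where the amplification handles a union bound over the $N = n^{O(k)}$ requirements while incurring the standard $O(\log n)$ rounding loss. This yields an integral $H'$ with density $\tilde O(k \log n) \cdot \mathit{OPT}/|U'|$. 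Composing with the $O(\log N) = O(k \log n)$ greedy guarantee gives the claimed $\tilde O(k^2 \log^2 n) \cdot \mathit{OPT}$.

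The main obstacle will be the density subroutine, because the natural cut LP for Bulk-SNDP has large integrality gap (as the paper notes). The cost-sharing argument must therefore exploit the small-width assumption to decompose $\mathit{OPT}$ across requirements with manageable sharing ratios, and the rounding must carefully union-bound over $N = n^{O(k)}$ potential scenario-pair events. A related subtlety is ensuring that a single $H'$ returned by the density oracle meets the density bound across all uncovered requirements with constant probability, which likely requires either derandomization or repeated sampling.
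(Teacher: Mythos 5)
There is a genuine gap in your proposal, and it sits exactly where you flag the ``main obstacle'': the approximate min-density subroutine. Your reformulation of Bulk-SNDP as weighted Set Cover over requirements $(F,s,t)$ is fine, and the $n^{O(k)}$ enumeration of requirements matches the paper. But in your formulation the ``sets'' are arbitrary edge subsets $H'\subseteq E$, so the greedy step needs an oracle that (approximately) minimizes $c(H')/|S_{H'}\cap U'|$ over a doubly-exponential family. The cost-sharing observation you invoke only shows \emph{existence} of a set of density $\mathit{OPT}/|U'|$ (namely $H^*$ itself); it does not yield one algorithmically. Your proposed construction --- ``solve a scenario-aware LP and round with amplification $\Theta(k\log n)$'' --- is not substantiated: you yourself note the natural cut LP has a large integrality gap, and you do not say what LP you would write for the min-density problem, why it would have small gap, or how the rounding would produce a single connected $s$--$t$ structure avoiding each $F$. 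This min-density problem is essentially as hard as the original problem, so the proof does not close.

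The paper avoids this oracle entirely by combining two ingredients you are missing. First, it works in the augmentation framework (from $\Omega_{\ell-1}$ to $\Omega_\ell$), which guarantees that every violated failure set $F$ is \emph{minimal} with respect to the current partial solution $H$. Second, it samples a low-stretch spanning tree $T$ (Lemma~\ref{lem:spanningtree_embeddings}) and applies the cut-cover lemma of \cite{gupta2009online} (Lemma~\ref{lem:cutcover}): for a minimal $F$, a \emph{single} fundamental cycle $\{e\}\cup P_T(e)$ with $e\in E^*$ already reconnects the two sides. Consequently the covering objects can be taken to be single edges (with cost $c(\{e\}\cup P_T(e))$), giving a Hitting Set instance with only $n^2$ elements and $n^{O(k)}$ sets, which greedy solves with an $O(k\log n)$ loss. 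The final factor is (number of augmentation stages $k$) $\times$ ($O(k\log n)$ Hitting Set loss) $\times$ ($O(\log n\log\log n)$ tree distortion), not your $(k\log n)\cdot(k\log n)$ split. To repair your argument you would need to either supply a genuine min-density oracle (which is the hard open part) or restructure along the augmentation-plus-fundamental-cycle lines above.
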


\begin{corollary}
\label{cor:flex-sndp}
  There is a randomized algorithm that yields an $\tilde O(q(p+q)\log^2
  n)$-approximation for Flex-SNDP when $(p_i, q_i) \leq (p,q)$ for all 
  pairs $(s_i, t_i)$ and runs in $n^{O(p+q)}$-time.
\end{corollary}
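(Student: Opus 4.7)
The plan is to reduce Flex-SNDP to Bulk-SNDP and apply Theorem \ref{thm:bulk-sndp}. The reduction itself is already flagged in the introduction: a subgraph $H$ of $G$ meets the $(p_i,q_i)$-flex-connectivity requirement on pair $(s_i,t_i)$ if and only if, for every failure set of the form $F = F' \cup F''$ with $F' \subseteq \calU$, $|F'| \leq q_i$, $F'' \subseteq E$, and $|F''| \leq p_i - 1$, the pair $s_i,t_i$ remains connected in $H - F$. This is because $p_i$-edge-connectivity of $(s_i,t_i)$ in $H - F'$ is exactly the statement that no further removal of $p_i - 1$ edges disconnects them.

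Using this, I would enumerate all such $F$ as the scenario set of an equivalent Bulk-SNDP instance. Each scenario has size at most $p_i + q_i - 1 \leq p + q - 1$, so the width of the Bulk-SNDP instance is at most $p + q - 1$; the number of scenarios per pair is at most $|\calU|^{q_i} \cdot |E|^{p_i - 1} = n^{O(p+q)}$, keeping the reduction polynomial-time when $p + q$ is constant. Applying Theorem \ref{thm:bulk-sndp} with $k = p + q - 1$ yields an $\tilde O((p+q)^2 \log^2 n)$-approximation in $n^{O(p+q)}$ time, already a slightly weaker form of the corollary.

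To recover the sharper $\tilde O(q(p+q) \log^2 n)$ bound I would exploit the two-part structure of the Flex-SNDP scenarios inside the proof of Theorem \ref{thm:bulk-sndp}. Each scenario decomposes into an unsafe component $F'$ with $|F'| \leq q$ and an arbitrary component $F''$ with $|F''| \leq p - 1$. Conditioned on the unsafe part, the remaining requirement is essentially SNDP on $G - F'$, where Jain's $2$-approximation is available. The expectation is that one of the two factors of $k$ in the $\tilde O(k^2 \log^2 n)$ bound comes from an outer hitting-set / cost-sharing step over choices of $F'$ and can therefore be replaced by $q$, while the other remains as the overall width $p+q-1$. This last step is the main obstacle: the reduction is immediate and gives only the weaker $\tilde O((p+q)^2 \log^2 n)$ bound, and the improvement requires opening the proof of Theorem \ref{thm:bulk-sndp} to verify that one of its factors of $k$ reflects the unsafe-failure width alone rather than the total width.
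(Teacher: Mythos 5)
Your black-box reduction to Bulk-SNDP is correct as far as it goes: the scenario set $\{F'\cup F'' : F'\subseteq \calU,\ |F'|\le q_i,\ |F''|\le p_i-1\}$ does capture $(p_i,q_i)$-flex-connectivity, has width $p+q-1$ and size $n^{O(p+q)}$, and Theorem~\ref{thm:bulk-sndp} then gives $\tilde O((p+q)^2\log^2 n)$ in $n^{O(p+q)}$ time. But the step you flag as the obstacle is a genuine gap, and the fix is not the one you are reaching for. Opening up the Hitting Set analysis and arguing that one factor of $k$ tracks only the unsafe-failure width does not work as stated: conditioning on the unsafe part $F'$ would leave you with a separate SNDP instance on $G-F'$ for each of the $n^{O(q)}$ choices of $F'$, and there is no clean way to charge the union of those solutions to a single optimum.

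The intended argument instead runs the augmentation framework of Section~\ref{sec:prelim} at the level of the flex-connectivity parameters. The factor $k^2$ in Theorem~\ref{thm:bulk-sndp} is really $\sum_{\ell=1}^{k} O(\ell\log^2 n\log\log n)$: there are $k$ augmentation rounds, and round $\ell$ costs $O(\ell \log^2 n \log\log n)\opt$ because the Hitting Set instance in that round has $n^{O(\ell)}$ sets. For Flex-SNDP one skips the first $p$ rounds entirely: a feasible solution to the $(p_i,0)$ requirements is exactly an EC-SNDP solution with requirement $p_i$ per pair, which Jain's algorithm provides within factor $2$. One then augments from $(p_i,j-1)$ to $(p_i,j)$ for $j=1,\dots,q$; each such round is an instance of the augmentation problem with parameter $\ell\in\{p,\dots,p+q\}$, hence costs $O((p+q)\log^2 n\log\log n)\opt$, and summing over the $q$ remaining rounds gives $O(q(p+q)\log^2 n\log\log n)\opt$. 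So the factor $q$ is the number of augmentation stages left after the EC-SNDP warm start, and the factor $p+q$ is the per-stage Hitting Set loss. Your instinct that Jain's $2$-approximation is the source of the savings is right, but it is applied once, up front, to the whole $(p,0)$ requirement on $G$ itself---not per unsafe failure set.
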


\begin{corollary}
\label{cor:rsndp}
  There is a randomized algorithm that yields an $\tilde O(k^2 \log^2
  n)$-approximation for RSNDP where $k$ is the maximum connectivity requirement
  and runs in $n^{O(k)}$ time.
\end{corollary}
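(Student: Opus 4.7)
The plan is to reduce RSNDP to Bulk-SNDP by explicitly enumerating failure scenarios and then directly invoke Theorem~\ref{thm:bulk-sndp}. Recall from the introduction the observation that RSNDP sits inside Bulk-SNDP with width at most $\max_i(r_i-1) \le k-1$; what I need to spell out is that this reduction can be carried out algorithmically within the claimed time bound, and that the generalization of Bulk-SNDP that allows each scenario to carry its own pair-set $\mathcal{K}_j$ is exactly what is needed.

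Given an RSNDP instance $(G,c,\{(s_i,t_i,r_i)\}_i)$ with $k=\max_i r_i$, first I would enumerate every subset $F\subseteq E$ with $|F|\le k-1$; there are at most $\binom{|E|}{k-1}=n^{O(k)}$ such subsets. For each such $F$, I form a scenario $(F,\mathcal{K}_F)$ of the Bulk-SNDP instance where
\[
  \mathcal{K}_F \;=\; \bigl\{(s_i,t_i) : |F| < r_i \text{ and } s_i,t_i \text{ are connected in } G-F\bigr\}.
\]
Checking connectivity in $G-F$ for each pair takes polynomial time, so building the Bulk-SNDP instance takes $n^{O(k)}$ time overall. By construction, a subgraph $H\subseteq G$ is feasible for this Bulk-SNDP instance if and only if for every pair $(s_i,t_i)$ and every $F$ with $|F|<r_i$ under which $s_i,t_i$ remain connected in $G-F$, they remain connected in $H-F$; this is precisely the RSNDP feasibility condition. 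Moreover, by construction the width is at most $k-1 \le k$, and any RSNDP-feasible solution is Bulk-SNDP-feasible for the constructed instance and vice versa, so the optimal values coincide.

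Now I would apply Theorem~\ref{thm:bulk-sndp} to the constructed Bulk-SNDP instance. Its randomized algorithm yields an $\tilde O(k^2\log^2 n)$-approximation in time $n^{O(k)}$, which combined with the $n^{O(k)}$ preprocessing above gives the claimed bound. Since the approximation is relative to the Bulk-SNDP optimum, which equals the RSNDP optimum on this instance, we recover an $\tilde O(k^2 \log^2 n)$-approximation for RSNDP in $n^{O(k)}$ total time.

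There is no real obstacle here beyond carefully verifying the equivalence between the RSNDP feasibility condition and the scenario list $\{(F,\mathcal{K}_F)\}_F$, and ensuring that the slight generalization of Bulk-SNDP (scenarios equipped with their own pair-sets) is precisely the variant covered by Theorem~\ref{thm:bulk-sndp}, as already emphasized in the problem description. The only subtlety worth flagging is that sets $F$ with $|F|\ge r_i$ do not need to be included for pair $i$, which is what keeps the width bounded by $k-1$ rather than $|E|$; this is automatic from the definition of $\mathcal{K}_F$.
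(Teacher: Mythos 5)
Your proposal is correct and matches the paper's approach: the paper treats RSNDP as a special case of Bulk-SNDP with width at most $\max_i(r_i-1)$ via exactly this explicit enumeration of scenarios $(F,\mathcal{K}_F)$ (sketched in the introduction), and then states that the corollary "follows immediately" from Theorem~\ref{thm:bulk-sndp}. You have simply written out the feasibility equivalence and the $n^{O(k)}$ preprocessing cost that the paper leaves implicit.
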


The second set of algorithms are based on natural LP relaxations 
(see Section \ref{sec:prelim}). Although the approximation ratios in these 
algorithms are slightly worse than those above, they have some additional
benefits.
First, the bottleneck in the running time is based on the time it takes to 
solve the LP relaxations of the given problem, 
giving improved runtimes (as compared to Theorem \ref{thm:bulk-sndp}
and Corollaries \ref{cor:flex-sndp},\ref{cor:rsndp}) in some settings. 
Second, all approximation ratios are with respect to the optimal \emph{fractional}
solution, providing upper bounds on the integrality gaps of the LP relaxations.
Third, these results extend to the group connectivity generalizations of 
Bulk-SNDP, Flex-SNDP, and RSNDP. For these problems, we let $r$ denote 
the total number of terminal pairs. Note that unlike the SNDP variants,
$r$ is not necessarily polynomial in $n$.
\footnote{Theorem \ref{thm:bulk-group} and corollaries 
\ref{cor:flex-group},\ref{cor:rgroup} were discussed in \cite{ChekuriJ23} for 
the SNDP problems. In this paper we explicitly show the extension to 
group connectivity.}

\begin{theorem}
\label{thm:bulk-group}
  There is a randomized algorithm that yields an 
  $O((\log r + k\log n)k^3\log^6 n)$- approximation for Group
  Bulk-SNDP on instances with width at most $k$ and runs in 
  expected polynomial time.
\end{theorem}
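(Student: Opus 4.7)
The plan is to extend the LP-based algorithm underlying Theorem \ref{thm:bulk-sndp} to the group-connectivity setting by composing it with the tree-embedding machinery used for Group Steiner Tree. I would begin by writing the natural cut-covering LP relaxation: introduce $x_e \in [0,1]$ for each edge, and, for every scenario $j$, every set pair $(S,T) \in \mathcal{K}_j$, and every vertex subset $A$ separating $S$ from $T$, impose $\sum_{e \in \delta(A) \setminus F_j} x_e \ge 1$, minimizing $\sum_e c_e x_e$. The LP admits an efficient separation oracle: for each triple $(j,S,T)$, a most-violated constraint is a minimum $S$-$T$ cut in $G$ with capacities $x_e$ on $E \setminus F_j$ and $0$ on $F_j$. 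Because there are at most $m \cdot r$ such triples, this yields a polynomial-time separation procedure.

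The next step is to reduce from general graphs to trees using a Räcke-style cut-based hierarchical decomposition, which produces a distribution over tree structures approximating every cut of $G$ up to an $O(\log n)$ factor in expectation. This costs one $\log n$ factor in the final ratio but, critically, maps each failure set $F_j$ of size at most $k$ to a collection of at most $k$ edges in the sampled tree $T$. After removing those $k$ edges, $T$ becomes a forest on at most $k+1$ components, and the requirement that $S_i$ reach $T_i$ in $H - F_j$ becomes a purely combinatorial group-connectivity constraint on that forest.

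On each sampled tree, I would solve the per-scenario group connectivity constraints by a Garg--Konjevod--Ravi style recursive randomized rounding of the tree LP obtained from $x^*$, which yields an $O(\log n \log r)$ approximation for a single group-Steiner instance. To handle all scenarios and all pairs simultaneously I would boost the rounding $\Theta(k \log n)$ independent times: since width at most $k$ means each scenario contributes only $O(k)$ cuts per pair to cover, a union bound across the $m$ scenarios and $r$ pairs succeeds with high probability after this much repetition, producing the $(\log r + k\log n)$ term. An outer augmentation loop, in the style of the algorithm for Theorem \ref{thm:bulk-sndp}, is then run $O(k)$ times, re-solving the LP on the residual instance after each round; the augmentation analysis contributes an additional $k^2$ factor (one $k$ per iteration, with $k$ iterations) and a few additional $\log n$ factors from concentration. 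Multiplying the $O(\log n)$ Räcke loss, the $O(\log^2 n \log r)$ per-round GKR loss, the $O(k\log n)$ boosting factor, and the $O(k^2 \log^2 n)$ from augmentation yields the claimed $O((\log r + k\log n) k^3 \log^6 n)$ bound.

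The main obstacle I expect is controlling the interaction between the bulk-robust failure scenarios and the group-connectivity requirements through the rounding, since different scenarios share edges in $G$ and cut constraints do not decouple cleanly even in the tree embedding. In particular, showing that the GKR-style rounding, applied in parallel across all scenarios of width at most $k$, satisfies every cut simultaneously at the claimed cost requires a careful scenario-by-scenario accounting of residual LP values, and an argument that the augmentation makes measurable progress on \emph{every} unmet scenario rather than only on average. Handling this requires re-deriving GKR's integrality gap analysis for the cut system induced by all $(j, S, T)$ triples rather than for a single group-Steiner instance, and that is where the most technical work lies.
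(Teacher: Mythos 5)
Your high-level architecture matches the paper's: cut-covering LP with a polynomial separation oracle, an outer augmentation loop over $O(k)$ rounds, a R\"acke capacity-based tree embedding costing $O(\log n)$, and repeated GKR-style randomized tree rounding with a union bound over scenarios. However, there is a genuine gap in the step where the failure set meets the tree embedding, and it is exactly the step you flag at the end as "the main obstacle."

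Your claim that a failure set $F_j$ of size at most $k$ "maps to a collection of at most $k$ edges in the sampled tree," leaving a forest with at most $k+1$ components, is false for R\"acke embeddings: a single graph edge $e$ can appear on the embedded paths $\calM_2(f)$ of \emph{many} tree edges $f$, so $\calM^{-1}(F)$ can be large and its removal can shatter the tree arbitrarily. The paper handles this differently. First, it rescales capacities so that every edge of the current partial solution $H$ (and every "large" LP edge, which is bought outright) has capacity exactly $\frac{1}{4\ell\beta}$; then the expected tree capacity of $\calM^{-1}(F)$ is at most $1/4$, so a sampled tree is "good" (blocks flow at most $1/2$) with probability $1/2$. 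Second, even in a good tree, connected components of $H\setminus F$ may be \emph{shattered} in $\calT\setminus\calM^{-1}(F)$, so one cannot round for the original pairs directly; the paper instead builds an auxiliary Set Connectivity instance $Z_F$ over all partitions of the shattered components, proves each pair of $Z_F$ retains flow $\ge \frac{1}{4\ell\beta}$ in the punctured tree, and bounds $|Z_F| \le 2^{2\ell\beta}|\calK_F|$ (because a good tree can shatter at most $2\ell\beta$ components). That bound is what makes $t = O(\log r + \ell\log n)$ rounding repetitions suffice, and the obliviousness of the tree rounding to $F$ is what permits a union bound over all $n^{O(\ell)}$ violating edge sub-sets $F \subseteq F_j$ (not just over the $m$ listed scenarios and $r$ pairs, as in your accounting). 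Without the capacity rescaling, the good-tree argument, and the shattered-component reduction, the "scenario-by-scenario accounting" you defer does not go through.
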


\begin{corollary}
\label{cor:flex-group}
  There is a randomized algorithm that yields an 
  $O((\log r + (p+q)\log n)(p+q)^3\log^6 n)$- approximation for Group
  Flex-SNDP when $(p_i,q_i) \le (p,q)$ for all pairs 
  $(s_i,t_i)$ and runs in expected $n^{O(q)}$-time.
\end{corollary}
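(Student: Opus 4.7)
The plan is to reduce Group Flex-SNDP to Group Bulk-SNDP and invoke Theorem~\ref{thm:bulk-group}. Given an instance of Group Flex-SNDP with pairs $(S_i,T_i)$ and requirements $(p_i,q_i) \le (p,q)$, for each pair $i$ and each choice of $F_1 \subseteq \calU$ with $|F_1| \le q_i$ together with $F_2 \subseteq E \setminus F_1$ with $|F_2| \le p_i - 1$, I would create a Bulk-SNDP scenario $(F_1 \cup F_2, \{(S_i,T_i)\})$. Unrolling the definition of $(p_i,q_i)$-flex-connectivity, a subgraph $H$ satisfies the Flex-SNDP requirement for pair $i$ iff $(S_i,T_i)$ is $p_i$-edge-connected in $H - F_1$ for every valid $F_1$, and this in turn holds iff $(S_i,T_i)$ is connected in $H - (F_1 \cup F_2)$ for every such $F_1, F_2$. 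Thus the reduction preserves feasibility and cost exactly. Each resulting scenario $F = F_1 \cup F_2$ has $|F| \le p_i + q_i - 1 \le p+q-1$, so the Bulk-SNDP instance has width at most $k = p+q$. Applying Theorem~\ref{thm:bulk-group} yields the claimed $O((\log r + (p+q)\log n)(p+q)^3 \log^6 n)$-approximation.

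To obtain $n^{O(q)}$-time rather than $n^{O(p+q)}$-time, I would avoid an explicit listing of all scenarios. The algorithm of Theorem~\ref{thm:bulk-group} is LP-based and needs the constraint set only through a polynomial-time separation oracle. For a fixed pair $i$ and fixed unsafe-failure set $F_1$ (with $|F_1| \le q_i$), the Bulk-SNDP constraints obtained by ranging over all allowable $F_2$ together form exactly the $p_i$-edge-connectivity cut family in $G - F_1$, namely
\[
\sum_{e \in \delta(S) \setminus F_1} x_e \ge p_i
\]
for every $S$ separating $S_i$ from $T_i$; this admits a standard min-cut separation oracle. Only the outer failure $F_1$ needs to be enumerated, giving at most $r \cdot n^{O(q)}$ compact constraint blocks and an LP solvable in $n^{O(q)}$ time. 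The randomized rounding of Theorem~\ref{thm:bulk-group} then runs in expected time polynomial in the LP size, yielding the overall expected $n^{O(q)}$ runtime.

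The main obstacle I anticipate is justifying that the rounding guarantees of Theorem~\ref{thm:bulk-group} still apply when the underlying LP is encoded in this compact cut-covering form rather than as one constraint per Bulk-SNDP scenario. Intuitively this should be transparent, since the compact constraint for $(i, F_1)$ is precisely the conjunction of all Bulk-SNDP scenario constraints sharing $F_1$, and the rounding analysis is driven by fractional values on individual edges and uses the width $k = p+q$ only to bound how many edges can vanish in any single scenario. Nevertheless one needs to verify that the concentration and charging arguments in Theorem~\ref{thm:bulk-group} do not require scenarios to be listed separately, and that the $\log r$ factor in the guarantee refers to the original pair count rather than the blown-up scenario count.
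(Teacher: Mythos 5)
Your proposal is correct and matches the paper's (very terse) argument: reduce Group Flex-SNDP to Group Bulk-SNDP of width $p+q$ and run the algorithm of Theorem~\ref{thm:bulk-group} using the compact Flex-SNDP LP of Section~\ref{sec:prelim}, whose separation oracle enumerates only the unsafe failure sets $F_1$ and therefore runs in $n^{O(q)}$ time. One small imprecision: the compact constraint $\sum_{e \in \delta(S)\setminus F_1} x_e \ge p_i$ is strictly stronger than (not equivalent to) the conjunction of the scenario constraints over all $F_2$, but this only helps, since it remains valid for integral solutions and implies every scenario constraint used in the rounding analysis (cf.\ Lemma~\ref{lemma:augmentation_lp}).
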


\begin{corollary}
\label{cor:rgroup}
  There is a randomized algorithm that yields an 
  $O((\log r + k\log n)k^3\log^6 n)$- approximation for Group
  RSNDP where $k$ is the maximum connectivity requirement and runs in 
  expected polynomial time.
\end{corollary}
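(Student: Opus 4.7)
The plan is to realize Group RSNDP as a special case of Group Bulk-SNDP and then invoke Theorem~\ref{thm:bulk-group}. Given a Group RSNDP instance on $G=(V,E)$ with set-pairs $\{(S_i,T_i)\}_{i=1}^r$ and requirements $r_i \le k$, one constructs a Group Bulk-SNDP instance by taking, for every pair and every $F \subseteq E$ with $|F| < r_i$ such that $S_i$ and $T_i$ remain connected in $G-F$, the scenario $(F,\{(S_i,T_i)\})$. Directly from the definition of RSNDP, a subgraph $H$ is feasible for the RSNDP instance if and only if it is feasible for this Bulk-SNDP instance, and the width of the Bulk-SNDP instance is at most $k-1$. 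Plugging into Theorem~\ref{thm:bulk-group} with width parameter $k-1$ then yields the claimed $O((\log r + k\log n)k^3\log^6 n)$-approximation ratio.

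The main obstacle is the running time, since the scenario list just constructed has up to $r\cdot n^{2(k-1)}$ elements, which is not polynomial once $k$ grows with $n$. To get around this, the plan is to avoid materializing the scenarios and instead work with a compact LP that exploits the cut structure of RSNDP: the Bulk-SNDP constraints arising from a single pair $(S_i,T_i)$ and a single separating cut $U$ all collapse into a single inequality on the values $\{x_e : e \in \delta_G(U)\}$, supplemented by $x_e = 1$ on every edge lying in a separating cut of $G$ of size less than $r_i$. This family of constraints admits a polynomial-time separation oracle based on minimum-cut computations in auxiliary graphs obtained by contracting $S_i$ and $T_i$, so the LP can be solved in polynomial time in $n$ and $r$.

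The rounding procedure from the proof of Theorem~\ref{thm:bulk-group} takes the fractional LP optimum as input and, in its analysis, uses only the cut conditions that the fractional solution satisfies, so it transfers verbatim to this compact formulation. Feasibility of the resulting integral solution can be certified by polynomially many additional min-cut computations, and the random rounding itself runs in polynomial time. Therefore the overall expected running time is polynomial in $n$ and $r$, yielding the corollary.
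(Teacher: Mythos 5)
Your first step---casting Group RSNDP as Group Bulk-SNDP with scenarios $(F,\{(S_i,T_i)\})$ for all $F$ with $|F|<r_i$ that keep $S_i,T_i$ connected in $G-F$, giving width at most $k-1$, and then invoking the machinery behind Theorem~\ref{thm:bulk-group}---is exactly the paper's route, as is your identification of the obstacle that the scenario list (and hence the explicit Bulk-SNDP LP) is of size $n^{\Theta(k)}$. The paper resolves that obstacle by plugging the polynomial-time-solvable LP relaxation for RSNDP from \cite{DinitzKK22} into Algorithm~\ref{augmentation_algo} and rerunning the same rounding. You instead construct your own compact LP, and that construction has a genuine gap.

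The claim that the Bulk-SNDP constraints for a fixed pair and a fixed separating cut $U$ ``collapse into a single inequality on the values $\{x_e : e \in \delta_G(U)\}$'' is not justified and, under its natural reading, is false. For a fixed $U$ the relevant scenarios are those $F$ for which $G-F$ still connects $S_i$ to $T_i$; this is a global condition on $G$, not a condition on $F\cap\delta_G(U)$, so the binding constraint at $U$ cannot be expressed as a function of the cut's $x$-values alone. The only single linear inequality on $\{x_e : e\in\delta_G(U)\}$ that would imply $\sum_{e\in\delta_G(U)\setminus F}x_e\ge 1$ for \emph{every} $F$ with $|F|\le r_i-1$ is $\sum_{e\in\delta_G(U)}x_e\ge r_i$ (up to discarding the $r_i-1$ largest values), and this is not a valid relaxation: take $V=\{s,a,t\}$ with a single edge $st$ and ten parallel edges between $a$ and $t$, and $r_i=3$; then $H=\{st\}$ is feasible for RSNDP, yet the cut $U=\{s,a\}$ has $|\delta_G(U)|=11$ while $|\delta_H(U)|=1$, so the collapsed inequality cuts off a feasible integral solution. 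Forcing $x_e=1$ on edges of small cuts of $G$ does not repair this. If instead you mean the constraint $\min\{\sum_{e\in\delta_G(U)\setminus B}x_e : B\subseteq\delta_G(U),\ |B|<r_i,\ G-B \text{ connects the pair}\}\ge 1$, that is a correct reformulation but it is no longer a single inequality on the cut values, and your assertion that the whole family admits a polynomial-time separation oracle via min-cut computations in contracted graphs is exactly the hard step and is left unproved. This is the step the paper outsources to \cite{DinitzKK22}; without it (or an equivalent polynomial-time-solvable relaxation whose solutions satisfy the augmentation cut constraints used in the rounding analysis), the corollary does not follow from your argument. The rest of your proposal---that the rounding and its analysis transfer once such a fractional solution is available---is correct and matches the paper.
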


\subsection{Related work}
\label{subsec:related_work}
Network design has substantial literature. We describe closely related  work
and results to put ours in context.

\mypara{SNDP and related connectivity problems:} SNDP is a
canonical problem in network design for connectivity that captures
many problems. We refer the reader to some older
surveys \cite{GuptaK11,KortsarzN10} on approximation algorithms for connectivity problems,
and several recent papers with exciting progress on TSP and weighted Tree and Cactus augmentation.
Frank's books is an excellent source for polynomial-time solvable exact algorithms \cite{Frank-book}. For
SNDP, the augmentation approach was pioneered in \cite{WilliamsonGMV95},
and was refined in \cite{GoemansGPSTW94}. These led to $2H_k$
approximation where $k$ is the maximum connectivity
requirement. Jain's iterated rounding approach \cite{Jain01} obtained
a $2$-approximation. See \cite{FleischerJW06,CheriyanVV06} for
element-connectivity  and \cite{ChuzhoyK12,Nutov12} for 
vertex-connectivity requirements. 

\mypara{Flexible Graph Connectivity:} Flexible graph connectivity
has been a topic of recent interest, although the model was introduced
earlier in the context of a single pair \cite{Adjiashvili13}.
Adjiashvili, Hommelsheim and M\"uhlenthaler \cite{AdjiashviliHM22}
introduced FGC (which is the same as $(1,1)$-FGC) and pointed out that it
generalizes the well-known MST and 2-ECSS problems.
Several approximation algorithms for various special cases of FGC 
and Flex-ST were obtained by Adjiashvili et al. \cite{AdjiashviliHM22}
and Boyd et al. \cite{BoydCHI22}, as described in Section \ref{subsec:contribution}.

Adjiashvili et al.\ \cite{AdjiashviliHMS20} also showed hardness results
in the single pair setting. 
They prove that $(1,k)$-Flex-ST in
directed graphs is at least as hard as directed Steiner tree which
implies poly-logarithmic factor inapproximability
\cite{HalperinK03}. They prove that $(k,1)$-Flex-ST in directed graphs is
at least as hard to approximate as directed Steiner forest (which has
almost polynomial factor hardness \cite{DodisK99}).  The hardness
results are when $k$ is part of the input and large, and show that
approximability of network design in this model is substantially
different from the edge-connectivity model.

\mypara{Bulk-Robust Network Design:} This model was initiated in
\cite{AdjiashviliSZ15} where they obtained an $O(\log n + \log m)$
approximation for the Bulk-Robust spanning tree problem.  The same paper
shows that the \emph{directed} single pair problem (Bulk-Robust
shortest path) is very hard to approximate. The hardness reduction
motivated the definition of width and the paper describes an
$O(k^2 \log n)$-approximation for Bulk-Robust shortest path via a 
reduction to the Hitting Set problem, and the use of the augmentation
approach, that we build upon here. For the special case of $k = 2$
the authors obtain an $O(1)$-approximation. Adjiashvili
\cite{Adjiashvili15} showed that if the graph is planar then one can
obtain an $O(k^2)$-approximation for both Bulk-Robust shortest path
and spanning tree problems --- he uses the augmentation approach from
\cite{AdjiashviliSZ15} and shows that the corresponding covering
problem in each augmentation phase corresponds to a Set Cover problem
that admits a constant factor approximation. As far as we are aware,
there has not been any progress beyond these special cases.

\mypara{Relative Network Design;} This model was introduced in 
recent work \cite{DinitzKK22}. The authors obtain a $2$-approximation
for the spanning case via the iterated rounding technique
even though the requirement function is not skew-supermodular.
They also obtain a simple $2$-approximation when the maximum requirement is $2$.
They obtain a $\frac{27}{4}$-approximation for the $(s,t)$-case when the maximum
demand is $3$. In more recent work \cite{DinitzKKN23}, the authors obtain 
a 2-approximation for general RSNDP when the maximum demand is 3, and a
$2^{O(k^2)}$-approximation for the $(s,t)$-case when the maximum demand is $k$.

\mypara{Group Connectivity:}
Group connectivity was first studied in the single-source, one-connectivity 
setting known as Group Steiner Tree. Garg et al.\ \cite{GargKR98}
described a randomized algorithm to round a fractional solution to a
cut-based LP relaxation when $G$ is a tree --- it achieves an $O(\log n
\log k)$-approximation. This result is essentially tight 
\cite{HalperinK03,HalperinKKSW07}.
Using tree embeddings with $O(\log n)$ expected distortion 
\cite{Bartal98,FRT03}, one can extend the algorithm to general graphs
and obtain an $O(\log^2 n \log k)$-approximation. 
For general graphs an $O(\log^2 k)$ approximation can be obtained in
quasi-polynomial time \cite{ChekuriP05,GrandoniLL19,GhugeN22}.
Set Connectivity is a generalization of group Steiner tree problem to 
multiple source-sink pairs of sets. This was first introduced by Alon et 
al. in the online setting \cite{aaabn06}. In the offline setting, Chekuri et al. 
\cite{ChekuriEGS11} obtained a polylogarithmic approximation ratio with 
respect to the LP by extending ideas from Group Steiner Tree. 

The higher connectivity setting, known as Survivable Set Connectivity,
was studied by \\Chalermsook, Grandoni and Laekhanukit \cite{CGL15},
motivated by earlier work in \cite{GuptaKR10}. Here each pair
$(S_i,T_i)$ has a connectivity requirement $r_i$ which implies that
one seeks $r_i$ edge-disjoint paths between $S_i$ and $T_i$ in the
chosen subgraph $H$. \cite{CGL15} obtained a bicriteria-approximation
via \racke tree and group Steiner tree rounding.  The recent work of
Chen et al. \cite{ChenLLZ22} uses related but more sophisticated ideas
to obtain the first true approximation for this problem. They refer to
the problem as Group Connectivity problem and obtain an
$O(r^2 \log r \log^6 n(r\log n + \log q))$-approximation where 
$r = \max_i r_i$ is the maximum connectivity requirement and $q$ is the 
number of groups
(see \cite{ChenLLZ22} for more
precise bounds). 

\subsection{Overview of techniques} As we remarked, the non-uniform models have been difficult to handle
for existing algorithmic techniques. The structures that underpin the
known algorithms for SNDP (primal-dual \cite{WilliamsonGMV95} and
iterated rounding \cite{Jain01}) are
skew-supermodularity of the requirement function and
submodularity of the cut function in graphs. Since non-uniform models
do not have such clean structural properties, these known techniques
cannot be applied directly. Another technique for network design,
based on several previous works, is augmentation. In the augmentation
approach we start with an initial set of edges $F_0$ that partially
satisfy the connectivity constraints. We then augment $F_0$ with a set
$F$ in the graph $G-F_0$; the augmentation is typically done to
increase the connectivity by one unit for pairs that are not yet
satisfied. We repeat this process in several stages until all
connectivity requirements are met. The utility of the 
augmentation approach is that it allows one to reduce a
higher-connectivity problem to a series of problems that solve a
potentially simpler $\{0,1\}$-connectivity problem.  An important tool
in this area is a $2$-approximation for covering an \emph{uncrossable}
function (a formal definition is given in
Section~\ref{sec:fgc}) \cite{WilliamsonGMV95}. 

In trying to use the
augmentation approach for Flex-SNDP and its special cases, we see that
the resulting functions are usually not uncrossable. To prove 
Theorems~\ref{thm:introfgc} and \ref{thm:intro-flex-st}, we
overcome this difficulty by decomposing the family of cuts to be
covered in the augmentation problem into a sequence of cleverly chosen
uncrossable subfamilies. Our structural results hold for certain range
of values of $p$ and $q$ and hint at additional structure that may be
available to exploit in future work. Boyd et al. also show a
connection to \emph{capacitated} network design (also implicitly in \cite{AdjiashviliHMS20})
which has been studied in several works
\cite{GoemansGPSTW94,CarrFLP00,ChakCKK15,ChakKLN13}. This model
generalizes standard edge connectivity by allowing each edge $e$ to
have an integer capacity $u_e \geq 1$. One can reduce capacitated
network design to standard edge connectivity by replacing each edge
$e$ with $u_e$ parallel edges, blowing up the approximation factor by
$\max_e u_e$. Boyd et al. show that $(1,k)$-Flex-SNDP and
$(k,1)$-Flex-SNDP can be reduced to Cap-SNDP with maximum capacity
$k$. While this reduction does not extend when $p,q \geq 2$, it
provides a useful starting point that we exploit for Theorem~\ref{thm:intro-flex-st}.

For the more general SNDP problems, we continue to employ the augmentation 
framework, but use completely different algorithmic approaches to solve these 
problems. To prove Theorem~\ref{thm:bulk-sndp}, we use a characterization of the 
\emph{edge-cuts} of the existing graph given by Gupta, Ravi, and Ravishankar 
\cite{gupta2009online}. They show that the augmentation problem in the online 
version of the SNDP problem can be reduced to a polynomial-sized Hitting Set instance. 
We show that these ideas extend to the general Bulk-SNDP setting as well. 
To prove Theorem~\ref{thm:bulk-group},
we rely on a recent novel
framework of Chen, Laekhanukit, Liao, and Zhang \cite{ChenLLZ22} to
tackle survivable network design in group connectivity setting. They
used the seminal work of \racke \cite{Racke08} on probabilistic
approximation of capacitated graphs via trees, and the group Steiner
tree rounding techniques of Garg, Konjevod and Ravi \cite{GargKR98},
and subsequent developments~\cite{GrandoniLL19}. We adapt their ideas 
to handle the augmentation problem for Bulk-SNDP.  

\mypara{Organization:} Section~\ref{sec:prelim} sets up the
relevant background on the LP relaxations for Flex-SNDP and
Bulk-SNDP and on the augmentation framework. 
Section~\ref{sec:fgc} provides the proofs of Theorems~\ref{thm:introfgc}
and \ref{thm:intro-flex-st}. Section~\ref{sec:sndp_cutcover} provides the proofs of
Theorems~\ref{thm:bulk-sndp} and the resulting corollaries
\ref{cor:flex-sndp} and \ref{cor:rsndp}. Section \ref{sec:sndp-racke} provides the 
proofs of Theorems \ref{thm:bulk-group} and the resulting corollaries 
\ref{cor:flex-group} and \ref{cor:rgroup}.

\section{Preliminaries}
\label{sec:prelim}
Throughout the paper we will assume that we are given an undirected graph
$G=(V,E)$ along with a cost function $c: E \to \R_{\geq 0}$. When we say that 
$H$ is a subgraph of $G=(V,E)$ we
implicitly assume that $H$ is an edge-induced subgraph, i.e. $H=(V,F)$ for some 
$F \subseteq E$. For any
subset of edges $F \subseteq E$ and any set $S \subseteq V$
we use the notation $\delta_F(S)$ to denote the set of edges in $F$ that
have exactly one endpoint in $S$.  We may drop $F$ if it is clear from the
context. For all discussion of flex-connectivity, we will let $\calS$ denote the 
set of safe edges and $\calU$ denote the set of unsafe edges. 

\subsection{LP Relaxations}
\mypara{Flex-SNDP:} We describe an LP relaxation for
$(p,q)$-Flex-SNDP problem. 
We assume for simplicity that $(p_i, q_i)$ is either $(p,q)$ or $(0,0)$ for all 
pairs of vertices. Thus we can describe the input as 
a set of $h$ terminal pairs 
$(s_i, t_i) \subseteq V \times V$ and the goal is to
choose a min-cost subset of the edges $F$ such that in the subgraph
$H=(V,F)$, $s_i$ and $t_i$ are $(p,q)$-flex-connected for any $i \in [h]$. Let 
$\calC = \{S \subset V \mid \exists i \in [h] \text{ s.t. }|S \cap \{s_i, t_i\}| = 1\}$ 
be the set of all vertex sets that separate some terminal pair. 
For a set of edges $F$ to be feasible for the given $(p,q)$-Flex-SNDP instance, 
we require that for all $S \in \calC$, $|\delta_F(S) \setminus B| \ge p$ for any 
$B \subseteq \calU$ with $|B| \le q$. 

We can write cut covering constraints 
expressing this condition, but these constraints are not adequate by themselves.
To see this, consider a cut $S$ with $N$ unsafe edges, each with capacity 
$p/(N-q)$. For all subsets $B$ of $q$ unsafe edges, 
$\sum_{e \in \delta(S) - B} x_e \geq p$. However, the total 
sum of capacities on this cut $\sum_{e \in \delta(S)} x_e = p + \frac{qp}{N-q}$, 
which we can set to be arbitrarily close to $p$ by increasing $N$. Since 
all edges on $S$ are unsafe, any integral solution would need at least $p+q$ 
edges, so the integrality gap is at least $(p+q)/p$.

To improve this LP, we consider the connection to capacitated network design: 
we give each safe edge a capacity of $p+q$, each unsafe edge a capacity of $p$, 
and require $p(p+q)$ connectivity for the terminal pairs; it is not difficult to 
verify that this is a valid constraint. These two sets of constraints yield the 
following LP relaxation with variables $x_e \in [0,1]$, $e \in E$.

\begin{align*}
    \min \sum_{e \in E} c(e)x_e &\\
    \text{subject to} \sum_{e \in \delta(S) - B} x_e &\geq p  
    &S \in \calC, B \subseteq \calU, |B| \le q \\
    (p+q)\sum_{e \in \delta_{\calS}(S)} x_e + p \sum_{e \in \delta_{\calU}(S)}
  x_e &\geq p(p+q) & S \in \calC \\
    x_e & \in [0,1] &e \in E
\end{align*}

The following lemma borrows ideas from \cite{BoydCHI22,ChakCKK15}. 
\begin{lemma}
\label{lemma:lp_polytime_fgc}
  The Flex-SNDP LP relaxation can be solved in $n^{O(q)}$ time. 
  For $(p,q)$-FGC, it can be solved in polynomial time.
\end{lemma}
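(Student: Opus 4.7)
The plan is to apply the ellipsoid method with a separation oracle for the two families of non-trivial constraints in the LP. The capacitated cut constraints of type~(2) are separated by constructing the auxiliary graph $G' = (V,E)$ in which each safe edge carries capacity $(p+q)x_e$ and each unsafe edge carries capacity $p\,x_e$: a constraint is violated iff for some terminal pair $(s_i,t_i)$ (or, in the $(p,q)$-FGC case, globally) the minimum cut in $G'$ is strictly less than $p(p+q)$, which is a standard max-flow/min-cut computation and runs in polynomial time.

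For the residual cut constraints of type~(1), the natural separation enumerates every $B\subseteq\calU$ with $|B|\le q$; for each $B$ and each pair $(s_i,t_i)$ we compute the $s_i$-$t_i$ minimum cut in $(V,E\setminus B)$ under capacities $x_e$ and report a violation whenever this value falls below $p$. Because the number of such sets $B$ is at most $|\calU|^q = n^{O(q)}$, the overall separation runs in $n^{O(q)}$ time, and the ellipsoid method then solves the Flex-SNDP LP within this time bound.

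The improvement for $(p,q)$-FGC comes from replacing the enumeration over $B$ by a cut-enumeration in the $x$-capacitated graph. Any violating pair $(S,B)$ must satisfy $\sum_{e\in\delta(S)} x_e < p+q$, since $|B|\le q$ and each $x_e\le 1$; furthermore, for a fixed cut $S$ the worst-case choice of $B$ is simply the set of top-$q$ unsafe edges in $\delta(S)$ by $x_e$-value, so only one candidate $B$ needs to be tested per cut. Dividing the type~(2) inequality by $(p+q)$ shows that it implies $\sum_{e\in\delta(S)} x_e \ge p$ for every cut $S$, so after separating type~(2) we may assume the global min-cut of the $x$-capacitated graph is at least $p$; hence the cuts we must list have value within a factor $(1+q/p)$ of this minimum. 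Karger's bound on the number of near-minimum cuts in an undirected graph together with the associated enumeration procedure then produces this list in polynomial time (polynomial in $n$ for the fixed-$p,q$ regime in which FGC is of interest), and each listed cut is tested against its single worst-case $B$.

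The main obstacle is formalising the FGC step: one must confirm that the near-minimum cut enumeration captures every potentially violating cut and that the top-$q$ reduction for each fixed $S$ is indeed the worst case. By contrast, for general Flex-SNDP the terminal structure restricts which cuts matter and no such reduction to pure cut-enumeration is available, which is why the $n^{O(q)}$ enumeration over $B$ remains the approach reflected in the lemma.
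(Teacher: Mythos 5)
Your separation oracle for the general Flex-SNDP LP matches the paper's: check the capacitated constraints by a single weighted min-cut computation, then enumerate all $B\subseteq\calU$ with $|B|\le q$ and run $s_i$-$t_i$ min-cut computations in $G-B$, for $n^{O(q)}$ total time. Your reduction of the choice of $B$ for a fixed cut $S$ to the top-$q$ unsafe edges by $x_e$-value is also exactly the paper's argument and is correct.

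The FGC step, however, has a genuine weakness. You enumerate near-minimum cuts of the graph with \emph{uniform} capacities $x_e$: a violating cut satisfies $\sum_{e\in\delta(S)}x_e < p+q$ while the min cut is at least $p$, so you only conclude that violating cuts are $(p+q)/p$-approximate min cuts, and Karger's bound then gives $n^{O((p+q)/p)}$ candidates. This is polynomial only when $p,q$ are fixed or $q/p$ is bounded; for $(1,q)$-FGC or $(2,q)$-FGC with $q$ part of the input it degenerates back to $n^{O(q)}$. The lemma claims polynomial time for $(p,q)$-FGC unconditionally, and the paper relies on exactly that for the $(2q+2)$-approximation for $(2,q)$-FGC with arbitrary $q$. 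The paper sidesteps this by enumerating near-minimum cuts in the graph with the \emph{weighted} capacities $(p+q)x_e$ on safe edges and $p\,x_e$ on unsafe edges --- the same weights used to separate the type-(2) constraint. There a violating cut has weight less than $(p+q)p + p^2 + pq = 2p(p+q)$, i.e., less than twice the guaranteed min-cut value $p(p+q)$, so violating cuts are always $2$-approximate min cuts and Karger's theorem gives $O(n^4)$ candidates for every $p,q$. Your argument goes through essentially verbatim once the cut enumeration is performed in this weighted graph rather than the $x$-capacitated one.
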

\begin{proof}
  We show an $n^{O(q)}$ separation oracle for the given
  LP. Suppose we are given some vector $x \in [0,1]^{|E|}$. We first
  check if the capacitated min-cut constraints are satisfied. 
  This can be done in polynomial time by giving every
  safe edge a weight of $p+q$ and every unsafe edge a weight of $p$,
  and checking that the min-cut value is at least $p(p+q)$. If it is
  not, we can find the minimum cut and output the corresponding
  violated constraint. Suppose all capacitated constraints are
  satisfied. Then, for each $B \subseteq \calU, |B| \le q$
  we remove $B$ and check that for each $s,t \in T$, the $s$-$t$ min-cut
  value in the graph $G-B$ with edge-capacities given by $x$ is at least
  $p$. Since there are at most $n^{O(q)}$ such possible sets $B$, 
  we get our desired separation oracle.
  
  In the FGC case, if there is a remaining unsatisfied constraint, then there 
  must be some $S \subset V$ and some
  $B \subseteq \calU$, $|B| \le q$, such that
  $\sum_{e \in \delta(S) - B} x_e < p$. In particular,
  $\sum_{e \in \calS \cap \delta(S) - B} x_e < p$ and
  $\sum_{e \in \calU \cap \delta(S) - B} x_e < p$. We claim that the
  total weight (according to weights $(p+q)$ for safe edges and $p$
  for unsafe edges) going across $\delta(S)$ is at most $2p(p+q)$: at
  most $(p+q)p$ from $\calS \cap (\delta(S) - B)$, at most $p^2$ from
  $\calU \cap (\delta(S) - B)$, and at most $pq$ from $B$. Recall that
  the min-cut of the graph according the weights has already been
  verified to be at least $p(p+q)$. Hence, any violated cut from the
  first set of constraints corresponds to $2$-approximate min-cut.  It
  is known via Karger's theorem that there are at most $O(n^4)$
  $2$-approximate min-cuts in a graph, and moreover they can also be
  enumerated in polynomial time \cite{karger1993global,Karger00}. We
  can enumerate all $2$-approximate min-cuts and check each of them to
  see if they are violated. To verify whether a candidate cut $S$ is
  violated we consider the unsafe edges in $\delta(S) \cap \calU$ and
  sort them in decreasing order of $x_e$ value. Let $B'$ be a prefix
  of this sorted order of size $\min\{q, |\delta(S) \cap \calU|\}$. It
  is easy to see that $\delta(S)$ is violated iff it is violated
  when $B = B'$. Thus, we can verify all candidate cuts efficiently.
\end{proof}

\mypara{Bulk-SNDP:} We can similarly define an LP relaxation for the 
Bulk-SNDP problem. As above, we have a variable $x_e \in [0,1]$ for each edge 
$e \in E$. 
\begin{align*}
    \min \sum_{e \in E} c(e)x_e& &  \\
    \text{subject to } \sum_{e \in \delta(S) \setminus F_j} x_e 
    &\geq 1 &\forall (F_j, \calK_j) \in \Omega, 
    S \text{ separates a terminal pair in } \calK_j \\
    x_e \in [0,1] & 
\end{align*}

\begin{lemma}
  The Bulk-SNDP LP relaxation can be solved in time polynomial in $n$ and $m$.
\end{lemma}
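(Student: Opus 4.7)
The plan is to apply the ellipsoid method to the LP, so it suffices to exhibit a polynomial-time separation oracle. The constraints $0 \le x_e \le 1$ are trivial to check, so the real task is to decide, given a candidate $x \in [0,1]^{|E|}$, whether every cut-covering constraint
\[
\sum_{e \in \delta(S) \setminus F_j} x_e \;\ge\; 1
\]
is satisfied, or else to produce one that is violated.

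First, I would iterate over each scenario $(F_j, \calK_j) \in \Omega$ (there are $m$ of these, and $m$ is part of the input size) and over each terminal pair $(s_i,t_i) \in \calK_j$. For each such choice, form the capacitated graph $G_j = (V, E \setminus F_j)$ with edge capacity $x_e$ for every $e \in E \setminus F_j$, and compute a minimum $s_i$-$t_i$ cut in $G_j$ using a standard max-flow/min-cut algorithm. By LP duality for flows, the minimum value over all sets $S$ separating $s_i$ from $t_i$ of $\sum_{e \in \delta(S) \setminus F_j} x_e$ is precisely this min-cut value. If it is at least $1$, every constraint indexed by $j$ and the pair $(s_i,t_i)$ is satisfied; if it is less than $1$, the corresponding minimum cut $S^\star$ yields an explicitly violated constraint that we can return to the ellipsoid algorithm.

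Summing over all scenarios and all pairs gives at most $\sum_{j=1}^{m} |\calK_j|$ max-flow computations per oracle call, which is polynomial in the input size (since the pairs in each scenario are listed explicitly). Each max-flow on $G_j$ runs in time polynomial in $n$, so the whole separation oracle runs in time polynomial in $n$ and $m$. Plugging this into the ellipsoid method with the standard bit-complexity bounds for rational LPs yields a polynomial-time algorithm.

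There is no real obstacle here; the only thing to verify carefully is that within a single scenario one can reduce the (exponentially many) cut constraints to one max-flow per terminal pair by deleting $F_j$ and using $x$ as capacities, which is exactly the classical max-flow min-cut correspondence.
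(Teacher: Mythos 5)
Your proposal is correct and matches the paper's proof: both construct a separation oracle that, for each scenario $(F_j,\calK_j)$ and each pair in $\calK_j$, deletes $F_j$ and checks via a min-cut computation with capacities $x$ that the pair's cut value is at least $1$. You simply spell out the max-flow/min-cut correspondence and the ellipsoid step in more detail than the paper does.
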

\begin{proof}
  We show a polynomial time separation oracle for the LP. For each scenario 
  $(\calF_j, \calK_j)$, let $G_j = G \setminus \calF_j$. For each pair $(u,v) 
  \in \calK_j$, we check that the minimum $u$-$v$ cut in $G_f$ is at least 1. 
\end{proof}

\begin{remark}
  The above discussion centered around LP relaxations for SNDP variants of problems. 
  These can easily be extended to group connectivity by hitting all sets that 
  separate terminal sets instead of terminal pairs.
\end{remark}

\subsection{Augmentation Framework} 
\mypara{Flex-SNDP:} Suppose 
$G=(V,E), \{s_i, t_i\}_{i \in [h]}$ is an
instance of $(p,q)$-Flex-SNDP.
As above, we assume for simplicity in this discussion that $p_i = p$, $q_i = q$ 
for all $s_i, t_i$ pairs.
We observe that  $(p,0)$-Flex-SNDP instance can be solved via
$2$-approximation to EC-SNDP. Hence, we are interested in $q \ge 1$.
Let $F_1$ be a feasible solution for the $(p,q-1)$-Flex-SNDP instance.
This implies that for any cut $S$ that separates a terminal pair we have
$|\delta_{F_1 \cap \calS}(S)| \ge p$ or $|\delta_{F_1}(S)| \ge p+q-1$.
We would like to augment $F_1$ to obtain a feasible solution to
satisfy the $(p,q)$ requirement. Define a function $f: 2^{|V|} \to \{0, 1\}$ where
$f(S) = 1$ iff (i) $S$ separates a terminal pair and (ii) $|\delta_{F_1 \cap
\calS}(S)| < p$ \emph{and} $|\delta_{F_1}(S)| = p+q-1$.  We call $S$ a 
\emph{violated} cut with respect to $F_1$. Since
$F_1$ satisfies $(p, q-1)$ requirement, if $|\delta_{F_1
\cap \calS}| < p$ it must be the case that $|\delta_{F_1}(S)| \ge p+q-1$. 
The following lemma is simple.

\begin{lemma}
  Suppose $F_2 \subseteq E \setminus F_1$ is a feasible cover for $f$, that is,
  $\delta_{F_2}(S) \ge f(S)$ for all $S$. Then $F_1 \cup F_2$ is a feasible solution 
  to $(p,q)$-Flex-SNDP.
\end{lemma}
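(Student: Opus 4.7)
The plan is to argue by case analysis on a cut $S$ that separates some terminal pair, using the stated characterization that $(p,q)$-flex-connectivity across $S$ is equivalent to $|\delta_{\calS}(S)| \ge p$ or $|\delta(S)| \ge p+q$ (obtained by taking $B$ to be, if possible, $q$ unsafe edges of $\delta(S)$ of largest weight).

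Fix a cut $S$ that separates some terminal pair in the subgraph induced by $F_1 \cup F_2$; I need to show $(p,q)$-flex-connectivity holds across $S$. First I would dispose of the case $f(S) = 1$. By definition of $f$, this means $|\delta_{F_1}(S)| = p+q-1$, and since $F_2$ covers $f$ we have $|\delta_{F_2}(S)| \ge 1$. Because $F_2 \subseteq E \setminus F_1$, the two sets of edges are disjoint, so $|\delta_{F_1 \cup F_2}(S)| \ge p+q$, which immediately yields $(p,q)$-flex-connectivity.

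Now suppose $f(S) = 0$. By the $(p,q-1)$-feasibility of $F_1$, either $|\delta_{F_1 \cap \calS}(S)| \ge p$ or $|\delta_{F_1}(S)| \ge p+q-1$. In the first case, since $F_1 \cap \calS \subseteq (F_1 \cup F_2) \cap \calS$, we have $|\delta_{(F_1 \cup F_2) \cap \calS}(S)| \ge p$, which certifies the $(p,q)$-requirement. Otherwise $|\delta_{F_1 \cap \calS}(S)| < p$, and by the definition of $f$ the equality $|\delta_{F_1}(S)| = p+q-1$ must fail; combined with $|\delta_{F_1}(S)| \ge p+q-1$ from the $(p,q-1)$-feasibility, this forces $|\delta_{F_1}(S)| \ge p+q$, and hence $|\delta_{F_1 \cup F_2}(S)| \ge p+q$, giving the desired property.

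There isn't really a hard obstacle here: the lemma is almost a definitional unfolding, and the only mildly delicate point is being careful that the condition $f(S)=1$ pins down $|\delta_{F_1}(S)|$ to be \emph{exactly} $p+q-1$, so that the ``or'' in $f(S)=0$ (namely, either the safe edges already suffice, or the total already exceeds $p+q-1$) is what lets a single added edge suffice to upgrade from $(p,q-1)$- to $(p,q)$-flex-connectivity. The disjointness $F_2 \subseteq E \setminus F_1$ is what guarantees additivity of the contributions across $\delta(S)$ and must be invoked explicitly.
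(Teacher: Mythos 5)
Your proof is correct and is exactly the definitional unfolding the paper has in mind when it calls the lemma ``simple'' and omits the proof: the case $f(S)=1$ uses the cover plus disjointness to push the total to $p+q$, and the case $f(S)=0$ uses $(p,q{-}1)$-feasibility of $F_1$ together with the negation of the definition of $f$ to conclude that either the safe edges already number $p$ or the total is already at least $p+q$. No issues.
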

The augmentation problem is then to find a min-cost subset of edges to
cover $f$ in $G-F_1$.  The key observation is that the augmentation
problem does not distinguish between safe and unsafe edges and hence
we can rely on traditional connectivity augmentation ideas. Note that if we 
instead tried to augment from $(p-1,q)$ to
$(p,q)$-flex-connectivity, we would still need to distinguish between
safe and unsafe edges. The following lemma shows that the LP relaxation for the 
original instance provides a valid
cut-covering relaxation for the augmentation problem.

\begin{lemma}
\label{lemma:augmentation_lp}
  Let $x \in [0,1]^{|E|}$ be a feasible LP solution for a given
  instance of $(p,q)$-Flex-Steiner. Let $F_1$ be a feasible solution
  that satisfies $(p,q-1)$ requirements for the terminal.  Then, for
  any violated cut $S \subseteq V$ in $(V, F_1)$, we have
  $\sum_{e \in \delta(S) \setminus F_1} x_e \geq 1$.
\end{lemma}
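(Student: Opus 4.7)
The plan is to leverage the cut covering constraint of the Flex-SNDP LP with a carefully chosen set $B \subseteq \calU$ of $q$ unsafe edges drawn entirely from $F_1$. The idea is that the LP constraint with such a $B$ forces $p$ units of capacity across $\delta(S)$ excluding $B$; and since all edges of $\delta(S) \cap F_1$ other than $B$ are a small set with $x_e \le 1$, most of the $p$ units must come from edges outside $F_1$.

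The first step is to verify that enough unsafe edges of $F_1$ cross $S$ to form such a $B$; that is, that $|\delta_{F_1 \cap \calU}(S)| \ge q$. This follows from the two defining properties of a violated cut. Since $|\delta_{F_1}(S)| = p+q-1$ and $|\delta_{F_1 \cap \calS}(S)| < p$, I have $|\delta_{F_1 \cap \calU}(S)| \ge (p+q-1) - (p-1) = q$. (An alternative derivation: the $(p,q-1)$-feasibility of $F_1$ already forces $|\delta_{F_1 \cap \calU}(S)| \ge q$ whenever $|\delta_{F_1 \cap \calS}(S)| < p$, since otherwise removing all unsafe edges would leave fewer than $p$ edges on the cut.) Pick any subset $B \subseteq \delta_{F_1 \cap \calU}(S)$ of size exactly $q$.

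The second step is to apply the LP cut covering constraint to $S$ and this $B$: since $|B| \le q$ and $B \subseteq \calU$, we have $\sum_{e \in \delta(S) \setminus B} x_e \ge p$. Because $B \subseteq F_1$, the set $\delta(S) \setminus B$ decomposes disjointly as $(\delta(S) \cap F_1 \setminus B) \sqcup (\delta(S) \setminus F_1)$. By construction $|\delta(S) \cap F_1 \setminus B| = (p+q-1) - q = p-1$, and since $x_e \le 1$ for all $e$, the first piece contributes at most $p-1$ to the sum. Rearranging yields $\sum_{e \in \delta(S) \setminus F_1} x_e \ge p - (p-1) = 1$, which is the desired conclusion.

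The argument is short and the only subtle point is ensuring the existence of $B$; the choice of $B$ entirely inside $F_1$ is what lets the $x_e \le 1$ bounds fully absorb the $F_1$-contribution, leaving exactly one unit for edges outside $F_1$. No use of the capacitated constraint is needed; the standard cut constraint suffices once $B$ is chosen correctly.
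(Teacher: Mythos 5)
Your proof is correct and follows essentially the same route as the paper's: both choose $B$ as $q$ unsafe edges of $F_1$ crossing $S$ (which exist since $|\delta_{F_1}(S)| = p+q-1$ and fewer than $p$ of them are safe), apply the first LP cut constraint, and use $x_e \le 1$ to bound the contribution of the remaining $p-1$ edges of $F_1$ on the cut. No issues.
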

\begin{proof}
  Suppose $S \subseteq V$ is violated in $(V,F_1)$, then
  $|\delta_{F_1}(S)| = p+q-1$, and $|\delta_{F_1 \cap \calS}(S)| <
  p$. Therefore, there are at least $q$ unsafe edges in
  $\delta_{F_1}(S)$. Let $B \subseteq \delta_{F_1}(S) \cap \calU$ such that $|B| =
  q$. Then, $|\delta_{F_1}(S) \setminus B| \leq p-1$, and since
  $x_e \leq 1$ for all $e \in E$,
  $\sum_{e \in \delta_{F_1}(S) - B} x_e \leq p-1$. By the first LP
  constraint, $\sum_{e \in \delta(S) - B} x_e \geq p$. Therefore,
  $\sum_{e \in \delta(S) \setminus F_1} x_e \geq 1$ as desired.
\end{proof}

\mypara{Bulk-SNDP:} Adjiashvili et al. \cite{AdjiashviliSZ15} also define a corresponding augmentation 
problem for the bulk robust network design model as follows: given an instance to 
Bulk-SNDP, let $\Omega_\ell = \bigcup_{j \in [m]} \{(F, \calK_j): |F| 
\leq \ell \text{ and } F \subseteq F_j\}.$ Let $H_{\ell} \subseteq E$ 
be a subset of edges that satisfy the constraints defined by the scenarios in 
$\Omega_\ell$. Then, a solution to the augmentation problem from $\ell-1$ to 
$\ell$ is a set of edges $H'$ such that $H_{\ell-1} \cup H'$ satisfies the 
constraints defined by scenarios in $\Omega_{\ell}$. It is not difficult to 
verify that any solution to the original instance $\Omega$ is also a solution 
to any of the augmentation problems, and any solution satisfying all scenarios 
in $\Omega_{k}$ also satisfies all scenarios in $\Omega$, where $k$ is the width 
of the Bulk-SNDP instance. 

\section{Uncrossability-Based Approximation Algorithms}
\label{sec:fgc}

In this section, we prove Theorem~\ref{thm:introfgc} and 
Theorem~\ref{thm:intro-flex-st}. Recall that we are given a graph $G = (V, E)$
with cost function on the edges $c: E \to \R_{\geq 0}$ and a partition of the 
edge set $E$ into safe edges $\calS$ and unsafe edges $\calU$. In the $(p,q)$-FGC 
problem, our goal is to find the cheapest set of edges such that every cut has 
either $p$ safe edges or $p+q$ total edges. The $(p,q)$-Flex-ST problem is 
similar, except that we are also given $s, t \in V$ as part of the input and 
we focus only on cuts separating $s$ from $t$. 

We start by providing some necessary background on uncrossable/ring families 
and submodularity of the cut function. We then prove a simple $O(q)$-approximation 
for $(2,q)$-FGC by directly applying existing algorithms for covering uncrossable 
functions. Next, we devise a framework for augmentation when the requirement 
function is not uncrossable. Finally, we prove our results for special cases of 
$(p,q)$-FGC and $(p,q)$-Flex-ST using this framework.

\mypara{Uncrossable functions and families:} Uncrossable functions are a general 
class of requirement functions that are an important ingredient in network design 
\cite{WilliamsonGMV95,GoemansW97,GuptaK11,KortsarzN10}.

\begin{definition}
  A function $f: 2^{V} \to \{0, 1\}$ is \emph{uncrossable} if for
  every $A, B \subseteq V$ such that $f(A) = f(B) = 1$, one of the
  following is true: 
  (i) $f(A \cup B) = f(A \cap B) = 1$, 
  (ii)  $f(A - B) = f(B - A) = 1$.
  A family of cuts $\calC \subset 2^V$ is an uncrossable \emph{family} 
  if the indicator function
  $f_{\calC}:2^V \rightarrow \{0,1\}$ with $f(S) = 1$ iff $S \in \calC$, 
  is uncrossable.
\end{definition}

For a graph $G = (V, E)$, a requirement function $f: 2^{V} \to \{0, 1\}$, 
and a subset of edges $A \subseteq E$, we say a set $S \subseteq V$ is violated 
with respect to $A, f$ if $f(S) = 1$ and $\delta_A(S) = \emptyset$. The following 
important result gives a 2-approximation algorithm for the problem of covering an 
uncrossable requirement function.

\begin{theorem}[\cite{WilliamsonGMV95}]
\label{thm:uncrossable}
  Let $G=(V,E)$ be an edge-weighted graph and let $f:2^V \rightarrow \{0,1\}$ 
  be an uncrossable function. Suppose there is an efficient oracle that for any 
  $A \subseteq E$ outputs all the
  minimal violated sets of $f$ with respect to $A$. Then there is an efficient 
  $2$-approximation for the
  problem of finding a minimum cost subset of edges that covers $f$.
\end{theorem}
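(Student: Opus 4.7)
The plan is to use the primal-dual framework of Goemans and Williamson adapted to covering an uncrossable $\{0,1\}$-function, exactly in the spirit of \cite{WilliamsonGMV95}. Write the natural cut-covering LP relaxation $\min\sum_e c_e x_e$ subject to $\sum_{e\in\delta(S)} x_e \ge 1$ for every $S$ with $f(S)=1$ and $x_e\ge 0$, together with its dual which has a variable $y_S$ for each such set $S$ and constraints $\sum_{S:\,e\in\delta(S)} y_S \le c_e$ for each edge $e$. The algorithm maintains a partial solution $F\subseteq E$ and a dual-feasible $y$, initialized to $\emptyset$ and $0$. It proceeds in phases: while $F$ does not cover $f$, invoke the given oracle to list the collection $\mathcal{C}$ of minimal violated sets with respect to $F$, raise $y_S$ uniformly for all $S\in\mathcal{C}$ until some edge $e$ outside $F$ becomes tight, and add $e$ to $F$. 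When $F$ is feasible, perform a reverse-delete pass: scan the added edges in reverse order of insertion and discard any edge whose removal leaves $F$ still covering $f$.

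The analysis follows the standard identity $\sum_{e\in F} c_e = \sum_{e\in F}\sum_{S:\,e\in\delta(S)} y_S = \sum_S |\delta_F(S)|\, y_S$, so it suffices to prove the ``two-per-phase'' bound: at every moment when $\mathcal{C}$ is the current set of minimal violated sets and $F^\ast$ denotes the \emph{final} (post reverse-delete) solution, one has $\sum_{S\in\mathcal{C}} |\delta_{F^\ast}(S)| \le 2|\mathcal{C}|$. Combined with weak duality $\sum_S y_S \le \mathrm{OPT}_{LP}$, this immediately yields $c(F^\ast)\le 2\,\mathrm{OPT}$.

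The heart of the argument is the combinatorial structure forced by uncrossability. The first step is to observe that minimal violated sets are pairwise disjoint: if $A,B\in\mathcal{C}$ intersect, uncrossability gives either $f(A\cap B)=f(A\cup B)=1$ or $f(A-B)=f(B-A)=1$, and in both cases one obtains a violated proper subset of $A$ or of $B$, contradicting minimality (unless $A=B$). The second step is a token/charging argument on the auxiliary multigraph $H$ obtained from $F^\ast$ by contracting each $S\in\mathcal{C}$ and also the complement of $\bigcup_{S\in\mathcal{C}} S$; the degree of the node corresponding to $S$ in $H$ is exactly $|\delta_{F^\ast}(S)|$. Using reverse-delete minimality (every edge of $F^\ast$ has a ``witness'' minimal violated set in the laminar family generated during the run that only it covers) together with the disjointness of $\mathcal{C}$ and the uncrossability of the overall witness laminar family, one argues that no node of $H$ outside $\mathcal{C}$ has degree $1$, and hence $\sum_{S\in\mathcal{C}} \deg_H(S)\le 2(|\mathcal{C}|-1)+2\le 2|\mathcal{C}|$ by handshaking.

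The main obstacle is this last step: carefully setting up the laminar witness structure so that the reverse-delete property actually rules out degree-one non-$\mathcal{C}$ nodes in $H$. This requires tracking, for each surviving edge $e$, a witness set $W_e$ that $e$ alone covers in $F^\ast\setminus\{e\}$; showing that the $W_e$'s together with $\mathcal{C}$ form a laminar family uses uncrossability again, and one must verify that a degree-one non-$\mathcal{C}$ node would allow deletion of the unique edge at it without violating any witness, contradicting reverse-delete. The remaining ingredients --- polynomial termination, dual feasibility being preserved when raising $y$, and the fact that the oracle plus a min-cut computation suffices to detect violated sets --- are straightforward once the $2|\mathcal{C}|$ bound is in hand.
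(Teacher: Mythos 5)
The paper does not actually prove this statement --- it is quoted verbatim from \cite{WilliamsonGMV95} --- so the benchmark is the known primal--dual analysis there. Your skeleton is the right one and matches that source: uniform dual growth on the minimal violated sets, reverse delete in reverse order of insertion, the tightness identity $c(F^\ast)=\sum_S|\delta_{F^\ast}(S)|\,y_S$, weak duality, and the per-iteration bound $\sum_{S\in\mathcal{C}}|\delta_{F^\ast}(S)|\le 2|\mathcal{C}|$. Your disjointness argument for minimal violated sets is also essentially correct, though you should note that violatedness (not just $f=1$) of $A\cap B$, $A\cup B$, $A-B$, $B-A$ is what is needed, and it follows from submodularity/posimodularity of $|\delta_F(\cdot)|$ applied to $\delta_F(A)=\delta_F(B)=\emptyset$.

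The genuine gap is in the counting step. Your multigraph $H$, obtained by contracting each $S\in\mathcal{C}$ and the single set $V\setminus\bigcup_{S\in\mathcal{C}}S$, has $|\mathcal{C}|+1$ nodes, and the handshake bound $\sum_{S\in\mathcal{C}}\deg_H(S)\le 2|\mathcal{C}|$ would require $|E(H)|\le|\mathcal{C}|$, i.e.\ that $H$ is a forest. Nothing in your setup forces this: minimal violated sets of an uncrossable function need not induce connected subgraphs, nor does the complement of their union, so contraction can create cycles (even granting that $F^\ast\setminus F$ is itself acyclic, which also needs the witness/parity argument). Moreover, ``no node of $H$ outside $\mathcal{C}$ has degree $1$'' concerns exactly one node in your construction and does not exclude, e.g., three parallel $H$-edges between two $\mathcal{C}$-nodes, which would already give $\sum_S\deg_H(S)=6>2|\mathcal{C}|=4$. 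Excluding such configurations is exactly where the laminar witness family must be used quantitatively: each surviving edge $e$ added at or after the current iteration has a witness $W_e$ with $\delta_{F^\ast}(W_e)=\{e\}$; after uncrossing, these form a laminar family in which every witness contains some member of $\mathcal{C}$ or is disjoint from all of them, and the degree count is performed on the Hasse forest of this laminar family (equivalently, via a token assignment from witnesses to distinct minimal violated sets) --- not on the contracted multigraph. As written, you invoke the witness family only to assert the degree-one condition, which is not the role it plays, and the bound $\sum_{S\in\mathcal{C}}|\delta_{F^\ast}(S)|\le 2|\mathcal{C}|$ is not established.
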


A special case of uncrossable family of sets is a \emph{ring family}.
We say that an uncrossable family $\calC \subseteq 2^V$ is a
ring family if the following conditions hold: (i) if $A,B \in \calC$
and $A,B$ properly intersect\footnote{$A, B$ properly intersect if 
$A \cap B \neq \emptyset$ and $A-B, B-A \neq \emptyset$.} then 
$A \cap B$ and $A \cup B$ are in
$\calC$ and (ii) there is a unique minimal set in $\calC$.  We observe
that if $\calC$ is an uncrossable family such that there is a
vertex $s$ contained in every $A \in \calC$ then $\calC$ is
automatically a ring family.  Theorem~\ref{thm:uncrossable} can be
strengthened for this case. There is an optimum algorithm to find a min-cost
cover of a ring family --- see \cite{Nutov10,Nutov12,Frank1979}.

In order to use
Theorem~\ref{thm:uncrossable} in the augmentation framework, we need to be able 
efficiently find all the
minimal violated sets of the family. As above, we let $F_1$ denote a feasible 
solution for the $(p,q-1)$-Flex-SNDP instance. For any fixed $p,q$, we can
enumerate all minimal violated sets in $n^{O(p+q)}$ time by trying all
possible subsets of $p+q-1$ edges in $F_1$. In the context of $(p,q)$-FGC, the 
total number of violated cuts in the augmentation problem is bounded by $O(n^4)$. 
See \cite{BoydCHI22} and the proof of Lemma~\ref{lemma:lp_polytime_fgc} for details.

For the following sections on $(p, q)$-FGC, we let $\calC$ denote the family of 
violated cuts. Note that such families are symmetric, since 
$\delta(S) = \delta(V - S)$. For any two sets $A, B \in \calC$, if 
$A \cup B = V$ then by symmetry, $V - A, V - B \in \calC$. In this case,
 $V - A = B - A$ and $V - B = A - B$, so $A$ and $B$ uncross.
Therefore, when proving uncrossability of $A$ and $B$, we assume without loss 
of generality that $(A \cup B) \neq V$.

\mypara{Submodularity and posimodularity of the cut function:} It is
well-known that the cut function of an undirected graph is symmetric
and submodular.  Submodularity implies that for all
$A, B \subseteq V$, 
$|\delta(A)| + |\delta(B)| \ge |\delta(A \cap B)| + |\delta(A \cup B)|$.
Symmetry and submodularity also implies posimodularity: for all $A, B
\subseteq V$, $|\delta(A)| + |\delta(B)| \ge |\delta(A - B)| + |\delta(B - A)|$.

\subsection{An $O(q)$-approximation for $(2, q)$-FGC}
\label{subsec:2q_fgc}
The following lemma shows that the augmentation problem for increasing 
flex-connectivity from $(2,q-1)$ to $(2,q)$ for any $q \ge 1$ corresponds to 
covering an uncrossable function. 

\begin{lemma}
\label{lemma:fgc_2q}
  The set of all violated cuts when augmenting from $(2, q-1)$-FGC to $(2, q)$-FGC is 
  uncrossable.
\end{lemma}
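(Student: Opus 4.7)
The plan is to exploit submodularity and posimodularity of the cut function applied separately to two graphs: the subgraph $F_1$ and the subgraph $F_1 \cap \calS$. First, characterize the violated cuts: since $F_1$ is feasible for $(2,q-1)$-FGC, a set $S$ is violated if and only if $s(S) \le 1$ and $\ell(S) = q+1$, where $s(S) := |\delta_{F_1 \cap \calS}(S)|$ and $\ell(S) := |\delta_{F_1}(S)|$. Fix two violated cuts $A, B$. By the symmetry reduction in the paragraph before the lemma, I may assume $A \cup B \ne V$; and if $A \cap B = \emptyset$ then $A - B = A$ and $B - A = B$ are both trivially violated, so I also assume $A \cap B \ne \emptyset$.

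Submodularity and posimodularity of $s$ (applied to the subgraph of safe $F_1$-edges) yield $s(A \cup B) + s(A \cap B) \le s(A) + s(B) \le 2$ and $s(A - B) + s(B - A) \le 2$. Call a pair \emph{good} if both sets in it have $s \le 1$. If $(A \cup B, A \cap B)$ is good, then feasibility of $F_1$ for $(2,q-1)$-FGC forces $\ell(A \cup B), \ell(A \cap B) \ge q+1$, while submodularity of $\ell$ gives the matching upper bound on the sum, so both equal $q+1$ and are violated. The pair $(A - B, B - A)$ is handled identically via posimodularity; if $A - B$ or $B - A$ is empty, then $A \subseteq B$ or $B \subseteq A$, which reduces to the previous case automatically.

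The main obstacle is ruling out the remaining case where neither pair is good, i.e., one set in each pair has $s = 2$ and the other has $s = 0$. By symmetry it suffices to rule out $s(A \cup B) = s(A - B) = 2$. Partition $V$ into its four Venn regions $\alpha := A \cap B$, $\beta := A \setminus B$, $\gamma := B \setminus A$, $\delta := V \setminus (A \cup B)$, and let $a_{xy}$ count safe $F_1$-edges with one endpoint in $x$ and one in $y$. A direct bookkeeping gives the identity $s(A \cup B) + s(A - B) - s(A) - s(B) = a_{\beta\delta} - a_{\alpha\gamma} - a_{\alpha\delta} - a_{\beta\gamma}$. Using the assumed equalities and $s(A), s(B) \le 1$, the left-hand side is $\ge 2$, forcing $a_{\beta\delta} \ge 2$; but every $\beta$-$\delta$ edge crosses $A$, contradicting $s(A) \le 1$. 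The three other bad configurations (swapping $A \cup B \leftrightarrow A \cap B$ and/or $A - B \leftrightarrow B - A$) produce the analogous contradiction on $s(A)$ or $s(B)$. This completes the case analysis: every two violated cuts admit an uncrossing, so the family is uncrossable.
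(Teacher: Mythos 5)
Your proof is correct and follows essentially the same strategy as the paper's: show that if both sets in one of the pairs $(A\cup B, A\cap B)$ or $(A-B, B-A)$ have at most one safe edge then that pair uncrosses via sub/posimodularity, and rule out the remaining mixed case by counting safe edges among the four Venn regions to force two safe edges across $A$ or $B$, contradicting violation. The only cosmetic difference is that you carry out the final count via an explicit inclusion--exclusion identity (and a different but symmetric choice of WLOG corner sets), whereas the paper argues via the union of the two safe-edge boundaries; both reduce to the same combinatorial fact.
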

\begin{proof}
  Let $F \subseteq E$ be a feasible solution to $(2, q-1)$-FGC. Suppose $A$, $B$ are 
  two violated cuts that properly intersect. Since $A$ is a violated cut, 
  $\delta_F(A)$ contains at most $1$ safe edge and at most $q+1$ total edges, 
  similarly $\delta_F(B)$. Since $F$ satisfies the requirements of $(2, q-1)$-FGC, 
  $\delta_F(A)$ and $\delta_F(B)$ must each have exactly $q+1$ total edges.

  First, suppose $|\delta_{F \cap \calS}(A \cap B)| < 2$
  and $|\delta_{F \cap \calS}(A \cup B)| < 2$. Since $F$ is feasible for
  $(2,q-1)$-FGC, $|\delta_F(A \cap B)| \geq q+1$, and
  $|\delta_F(A \cup B)| \geq q+1$. By submodularity of the cut function,
  $|\delta_F(A \cap B)| + |\delta_F(A \cup B)| \leq |\delta_F(A)| +
  |\delta_F(B)| = 2(q+1)$. Therefore,
  $|\delta_F(A \cap B)| = |\delta_F(A \cup B)| = q+1$. Since both $\delta_F(A \cup B)$ and
  $\delta_F(A \cap B)$ have exactly $q+1$ edges and at most one safe edge, 
  $A \cup B$ and $A \cup B$ are violated. This implies $A$ and $B$ uncross. 

  By the same reasoning, using posimodularity of the cut function, we
  can argue that at least one of $\delta_F(A - B)$ and $\delta_F(B - A)$
  must have at least $2$ safe edges otherwise both $A-B,B-A$ are
  violated which implies that $A,B$ uncross. Suppose 
  $|\delta_{F \cap \calS}(A - B)| < 2$ and
  $|\delta_{F \cap \calS}(B - A)| < 2$. Since $F$ is feasible for
  $(2,q-1)$-FGC, $|\delta_F(A - B)| \geq q+1$, and
  $|\delta_F(B - A)| \geq q+1$. By posimodularity of the cut function,
  $|\delta_F(A - B)| + |\delta_F(B - A)| \leq |\delta_F(A)| +
  |\delta_F(B)| = 2(q+1)$. Therefore,
  $|\delta_F(A - B)| = |\delta_F(B - A)| = q+1$. Since $A - B$ and $B - A$ 
  are crossed by exactly $q+1$ edges and at most one safe edge, 
  they are both violated cuts.

  Thus, we can assume that one of $\delta_F(A \cap B)$ and
  $\delta_F(A \cup B)$ has $2$ safe edges and one of $\delta_F(A - B)$
  and $\delta_F(B - A)$ has $2$ safe edges.  Assume that 
  $|\delta_{F \cap \calS}(A \cap B)| \geq 2$ and
  $|\delta_{F \cap \calS}(A - B)| \geq 2$; the other cases are
  similar. Recall that $\delta_F(A)$ and $\delta_F(B)$ each have at most
  one safe edge. Therefore,
  $|\delta_{F \cap \calS}(A \cap B) \cup \delta_{F \cap \calS}(A - B)|
  \leq 2$, so there must be two safe edges in
  $\delta_F(A - B) \cap \delta_F(A \cap B)$, i.e. both safe edges must
  be shared between the boundaries of $A - B$ and $A \cap B$. However
  this implies that $\delta_F(B)$ has two safe edges contradicting that
  $B$ was violated.
\end{proof}

The preceding lemma yields a $2(q+1)$-approximation for $(2,q)$-FGC
as follows. We start with a $2$-approximation for $(2,0)$-FGC that can
be obtained by using an algorithm for $2$-ECSS. Then for we augment in
$q$-stages to go from a feasible solution to $(2,0)$-FGC to
$(2,q)$-FGC. The cost of augmentation in each stage is at most $\opt$
where $\opt$ is the cost of an optimum solution to $(2,q)$-FGC. We can
use the known $2$-approximation algorithm in each augmentation stage
since the family is uncrossable. Recall from Section~\ref{sec:prelim}
that the violated cuts can be enumerated in polynomial time, and
hence the primal-dual $2$-approximation for covering an uncrossable
function can be implemented in polynomial-time. This leads to the
claimed approximation and running time.

\subsection{Identifying Uncrossable Subfamilies}
\label{subsec:uncrossable_families_algo}
We have seen that the augmentation problem from $(2,q-1)$-FGC to
$(2,q)$-FGC leads to covering an uncrossable function. Boyd et
al. \cite{BoydCHI22} showed that augmenting from $(p,0)$-FGC to $(p,1)$-FGC
also leads to an uncrossable function for any $p \ge 1$. However this approach 
fails for most cases of augmenting from $(p,q-1)$ to
$(p,q)$. We give an example for $(3,1)$ to $(3,2)$ in
Figure~\ref{fgc_32_counterex}.

\begin{figure}
  \begin{center}
    \includegraphics[width = 0.3\linewidth]{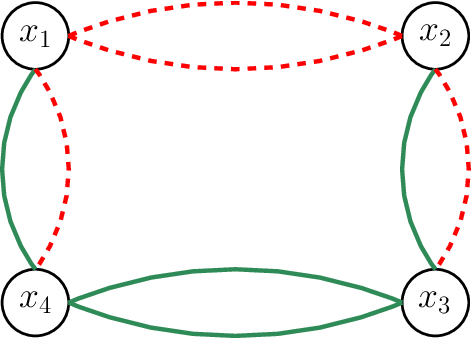}
  \end{center}
  \caption{Violated cuts when augmenting from 
  $(3,1)$-FGC to $(3,2)$-FGC are not uncrossable. Red dashed edges are unsafe, 
  green solid edges are safe. Let $A = \{x_1, x_2\}, B = \{x_2, x_3\}$. 
  $A$ and $B$ are
  violated, since they each are crossed by two safe and two unsafe edges, but
  $A \cup B = \{x_1, x_2, x_3\}$ and $B - A = \{x_3\}$ are not, since
  they each are crossed by three safe edges.}
  \label{fgc_32_counterex}
\end{figure}

However, in certain cases, we can take a more
sophisticated approach where we consider the violated cuts in a small number of stages. 
In each stage, we choose a subfamily of the violated cuts that is
uncrossable.  In such cases, we can obtain a $2\ell$-approximation
for the augmentation problem, where $\ell$ is the upper bound on the
number of stages.

Suppose we want to augment from $(p, q-1)$ to $(p, q)$ (for either FGC or the Flex-ST 
setting). Let $G = (V, E)$ be the original input graph, and let $F$ be our current
partial solution. Recall that a cut $\emptyset \neq A \subsetneq V$ is violated iff 
$|\delta_F(A)| = p+q-1$, $|\delta_{F\cap \calS}(A)| < p$, and in the Flex-ST case, 
$A$ separates $s$ from $t$. We cover violated cuts in stages. In each stage we 
consider the violated cuts based on the number of safe edges. We begin by covering 
all violated sets with no safe edges, then with one safe edge, and iterate until 
all violated sets are covered. 
This is explained in Algorithm~\ref{algo:augmentation_in_stages} below.

\begin{algorithm}[H]
\caption{Augmenting from $(p,q-1)$ to $(p,q)$ in stages}
\label{algo:augmentation_in_stages}
\begin{algorithmic}[1]
    \State $F' \gets F$
    \For{$i = 0, \dots, p-1$}
        \State $\calC_i \gets \{S : S \text{ is violated and } |\delta_{F' \cap \calS}(S)| = i\}$
        \State $F_i' \gets$ approximation algorithm to cover cuts in $\calC_i$
        \State $F' \gets F' \cup F_i'$
    \EndFor\\
    \Return $F'$
\end{algorithmic}
\end{algorithm}

\subsection{Approximating $(p, q)$-FGC for $q \leq 4$}
\label{subsec:fgc_p4}
In this section, we show that the above approach works to augment
from $(p,q-1)$-FGC to $(p,q)$-FGC whenever $q \le 3$ and also for $q=4$ when
$p$ is even.
The only unspecified part Algorithm~\ref{algo:augmentation_in_stages} is to cover 
cuts in $\calC_i$ in the $i$'th stage.  If we can prove that $\calC_i$ forms
an uncrossable family then we can obtain a $2$-approximation in each
stage. First, we prove a generic and useful lemma regarding cuts
in $\calC_i$. We let $F_i \subseteq E$ denote the set of edges $F'$ at the start of 
iteration $i$. In other words, $F_i$ is a set of edges such that for all 
$\emptyset \neq A \subsetneq V$, if $|\delta_{F_i}(A)| = p+q-1$, then
$|\delta_{F_i \cap \calS}(A)| \geq i$.

\begin{lemma}
\label{lemma:fgc_structure}
  Fix an iteration $i \in \{0, \dots, p-1\}$. Let $\calC_i$ be as defined in 
  Algorithm~\ref{algo:augmentation_in_stages}. Then, if $A, B \in \calC_i$ and
  \begin{enumerate}
      \item $|\delta_{F_i}(A \cap B)| = |\delta_{F_i}(A \cup B)| = p+q-1$, or
      \item $|\delta_{F_i}(A - B)| = |\delta_{F_i}(B - A)| = p+q-1$
  \end{enumerate}
  then $A$ and $B$ uncross, i.e. $A \cap B, A \cup B \in \calC_i$ or 
  $A - B, B - A \in \calC_i$.
\end{lemma}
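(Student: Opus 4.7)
The plan is to combine submodularity (resp.\ posimodularity) of the cut function with the invariant maintained on $F_i$, namely that any cut $S$ with $|\delta_{F_i}(S)| = p+q-1$ satisfies $|\delta_{F_i \cap \calS}(S)| \geq i$. Since $A, B \in \calC_i$, by definition $|\delta_{F_i}(A)| = |\delta_{F_i}(B)| = p+q-1$ and $|\delta_{F_i \cap \calS}(A)| = |\delta_{F_i \cap \calS}(B)| = i$.

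For case (1), I would first apply submodularity of the total cut function to $F_i$:
\[
|\delta_{F_i}(A)| + |\delta_{F_i}(B)| \geq |\delta_{F_i}(A\cap B)| + |\delta_{F_i}(A\cup B)|.
\]
The hypothesis forces equality, so no edge is ``lost'' when uncrossing. Next, I would apply submodularity restricted to the safe edges, which holds because $F_i \cap \calS$ is itself the edge set of a graph on $V$:
\[
|\delta_{F_i\cap\calS}(A)| + |\delta_{F_i\cap\calS}(B)| \geq |\delta_{F_i\cap\calS}(A\cap B)| + |\delta_{F_i\cap\calS}(A\cup B)|,
\]
so the right-hand side is at most $2i$. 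On the other hand, since both $A\cap B$ and $A\cup B$ have exactly $p+q-1$ edges of $F_i$ crossing them, the invariant on $F_i$ gives $|\delta_{F_i\cap\calS}(A\cap B)|\ge i$ and $|\delta_{F_i\cap\calS}(A\cup B)|\ge i$. Summing these lower bounds against the upper bound $2i$ forces each to equal $i$ exactly. Combined with the hypothesis on the total cut sizes, this shows $A \cap B, A\cup B \in \calC_i$.

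For case (2), I would run the identical argument with posimodularity in place of submodularity, which is valid because the undirected cut function is symmetric and submodular. The invariant on $F_i$ again upgrades the posimodular inequality on safe edges from a one-sided bound to an exact equality, yielding $|\delta_{F_i\cap\calS}(A-B)| = |\delta_{F_i\cap\calS}(B-A)| = i$ and hence $A-B, B-A \in \calC_i$.

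The argument is entirely structural and there is no real obstacle; the only thing worth being careful about is that the $\ge i$ bound on the number of safe edges across the new cuts comes from the invariant on $F_i$ (which in turn comes from having successfully completed iterations $0,1,\ldots,i-1$), and not from the definition of $\calC_i$ itself. In later subsections this lemma is used in conjunction with a case analysis that rules out the ``bad'' scenario in which neither (1) nor (2) holds, and that is where the restriction on $p,q$ in Theorem~\ref{thm:introfgc} enters; the present lemma is problem-independent up to the invariant.
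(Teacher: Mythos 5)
Your proposal is correct and follows essentially the same route as the paper's proof: submodularity (resp.\ posimodularity) applied to the safe-edge subgraph gives the upper bound $2i$ on the sum of safe edges across the two new cuts, while the invariant maintained on $F_i$ (every cut with exactly $p+q-1$ edges has at least $i$ safe edges) gives the matching lower bound, forcing equality. Your closing remark correctly identifies that the $\ge i$ bound comes from the invariant on $F_i$ rather than from the definition of $\calC_i$, which is exactly the point the paper relies on.
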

\begin{proof}
  By submodularity and posimodularity of the cut function, 
  $|\delta_{F_i}(A \cap B)| + |\delta_{F_i}(A \cup B)|$ and 
  $|\delta_{F_i}(A - B)| + |\delta_{F_i}(B - A)|$ are both at most $2(p+q-1)$.
  First, suppose $|\delta_{F_i}(A \cap B)| = |\delta_{F_i}(A \cup B)| = p+q-1$. 
  By submodularity of the cut function applied to the safe edges $F_i \cap \cal S$,
  \[2i = |\delta_{F_i \cap \calS}(A)| + |\delta_{F_i \cap \calS}(B)| 
  \geq |\delta_{F_i \cap \calS}(A \cup B)| + |\delta_{F_i \cap \calS}(A \cap B)|.\]
  Since we are in iteration $i$, neither $\delta_{F_i}(A \cup B)$ nor 
  $\delta_{F_i}(A \cap B)$ can have less than $i$ safe edges. 
  Therefore, $|\delta_{F_i \cap \calS}(A \cup B)| = 
  |\delta_{F_i \cap \calS}(A \cap B)| = i$. Then, $A \cap B, A \cup B \in \calC_i$, so 
  $A$ and $B$ uncross.
  In the other case, where $|\delta_{F_i}(A - B)| = |\delta_{F_i}(B - A)| = p+q-1$, 
  we can use posimodularity of the cut function to show that 
  $|\delta_{F_i \cap \calS}(A - B)| = |\delta_{F_i \cap \calS}(B - A)| = i$, 
  so $A - B, B - A \in \calC_i$.
\end{proof}

Note that the preceding lemma holds for the high-level approach. Now we focus on
cases where we can prove that $\calC_i$ is uncrossable.

\begin{lemma}
\label{lemma:fgc_p3_uncrossable}
  Fix an iteration $i \in \{0, \dots, p-1\}$. Let $\calC_i$ be as defined in 
  Algorithm~\ref{algo:augmentation_in_stages}. Then, for $q \leq 3$, $\calC_i$ is 
  uncrossable.
\end{lemma}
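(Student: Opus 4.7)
Let $A, B \in \calC_i$ properly intersect; by Lemma~\ref{lemma:fgc_structure}, it suffices to show that either $|\delta_{F_i}(A \cap B)| = |\delta_{F_i}(A \cup B)| = p+q-1$ or $|\delta_{F_i}(A - B)| = |\delta_{F_i}(B - A)| = p+q-1$. Submodularity gives $|\delta_{F_i}(A \cap B)| + |\delta_{F_i}(A \cup B)| \leq 2(p+q-1)$ and posimodularity gives the analogous bound for the other pair, so the desired equality in a pair is equivalent to both members having cut size at least $p+q-1$ under $F_i$. The plan is to rule out the ``bad'' configuration in which at least one member of each pair has cut size strictly less than $p+q-1$.

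Suppose the bad configuration holds. Using the symmetries $A \leftrightarrow B$ and $(A, B) \leftrightarrow (V-A, V-B)$, both of which preserve membership in $\calC_i$ (the latter because $\delta(X) = \delta(V-X)$ and $\calC_i$ is symmetric), I may assume without loss of generality that $|\delta_{F_i}(A \cap B)| < p+q-1$ and $|\delta_{F_i}(A - B)| < p+q-1$. Since $F_i$ is feasible for $(p,q-1)$-FGC, any cut with fewer than $p+q-1$ total edges must contain at least $p$ safe edges, so $|\delta_{F_i \cap \calS}(A \cap B)| \geq p$ and $|\delta_{F_i \cap \calS}(A - B)| \geq p$. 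Since $A \cap B$ and $A - B$ are disjoint with union $A$, the elementary identity $|\delta_H(X)| + |\delta_H(Y)| = |\delta_H(X \cup Y)| + 2 m_H(X, Y)$ for disjoint $X, Y$ (where $m_H(X, Y)$ counts edges of $H$ with one endpoint in $X$ and the other in $Y$), applied to $H = F_i \cap \calS$ and $H = F_i$, yields
\begin{align*}
s^* &:= m_{F_i \cap \calS}(A \cap B,\, A - B) \;\geq\; \tfrac{2p - i}{2} \;\geq\; \tfrac{p+1}{2},\\
e^* &:= m_{F_i}(A \cap B,\, A - B) \;\leq\; \tfrac{2(p+q-2) - (p+q-1)}{2} \;=\; \tfrac{p+q-3}{2},
\end{align*}
where the first bound uses $i \leq p-1$. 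Since safe edges are a subset of all edges, $s^* \leq e^*$, forcing $(p+1)/2 \leq (p+q-3)/2$, i.e., $q \geq 4$, contradicting $q \leq 3$. Hence the bad configuration is impossible and Lemma~\ref{lemma:fgc_structure} applies.

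\textbf{Main obstacle.} The heart of the argument is locating, in the bad configuration, a single ``cross'' edge count whose safe version is forced to be large (by submodularity on safe edges and the feasibility assumption on $F_i$) while its total version is forced to be small (by the cap on both cuts being below $p+q-1$). The main obstacle is verifying that the symmetry reduction covers all four subcases uniformly; this can be seen by noting that the three remaining subcases appeal to the partitions $B = (A \cap B) \sqcup (B - A)$, $V - B = (V - (A \cup B)) \sqcup (A - B)$, and $V - A = (V - (A \cup B)) \sqcup (B - A)$, each producing the same chain of inequalities. The bound tightens to equality precisely at $q = 4$, which explains why the lemma cannot be pushed beyond $q \leq 3$ with this technique.
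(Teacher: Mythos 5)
Your proof is correct, and its skeleton matches the paper's: invoke Lemma~\ref{lemma:fgc_structure} to reduce to the ``bad'' configuration, use symmetry to assume it is $\delta_{F_i}(A\cap B)$ and $\delta_{F_i}(A-B)$ that each have at most $p+q-2$ edges, and use feasibility of $F_i$ for $(p,q-1)$-FGC to force at least $p$ safe edges onto each of those two cuts. Where you diverge is the final count. The paper concludes that each of the two small cuts has at most $q-2$ unsafe edges and, since $\delta_{F_i}(A)\subseteq\delta_{F_i}(A\cap B)\cup\delta_{F_i}(A-B)$, that $\delta_{F_i}(A)$ has at most $2(q-2)<q$ unsafe edges, contradicting that $A\in\calC_i$ forces at least $q$ unsafe edges across $A$. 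You instead count the edges running between the disjoint sets $A\cap B$ and $A-B$ via the exact partition identity: at least $(2p-i)/2\ge(p+1)/2$ of them must be safe while at most $(p+q-3)/2$ exist in total, forcing $q\ge 4$. Your computation is precisely the one the paper reserves for the harder $q=4$ case (Lemma~\ref{lemma:fgc_p4_uncrossable}, via the quantity $\ell$ there), so your argument has the advantage of unifying $q\le 3$ with $q=4$ and making the threshold at $q=4$ (and the role of the parity of $p$ when $i=p-1$) transparent; the paper's unsafe-edge count is marginally shorter for $q\le 3$ but does not extend. The only point worth stating explicitly is that the complement symmetry $(A,B)\mapsto(V-A,V-B)$ is legitimate because $\calC_i$ is closed under complementation and one may assume $A\cup B\ne V$, as the paper sets up at the start of Section~\ref{sec:fgc}.
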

\begin{proof}
  Let $A, B \in \calC_i$, and suppose for the sake of contradiction that they do not
  uncross. By submodularity and posimodularity of the cut function, 
  $|\delta_{F_i}(A \cap B)| + |\delta_{F_i}(A \cup B)|$ and 
  $|\delta_{F_i}(A - B)| + |\delta_{F_i}(B - A)|$ are both at most $2(p+q-1)$.
  Thus by Lemma~\ref{lemma:fgc_structure}, at least one of
  $\delta_{F_i}(A \cap B)$ and $\delta_{F_i}(A \cup B)$ must have at
  most $p+q-2$ edges, and the same holds for $\delta_{F_i}(A - B)$ and
  $\delta_{F_i}(B - A)$. Without loss of generality, suppose
  $\delta_{F_i}(A \cap B)$ and $\delta_{F_i}(A - B)$ each have at
  most $p+q-2$ edges. By the assumptions on $F_i$, they must both have
  at least $p$ safe edges, hence they each have at most $q-2$ unsafe
  edges. Note that
  $\delta_{F_i}(A) \subseteq \delta_{F_i}(A - B) \cup \delta_{F_i}(A
  \cap B)$, hence $\delta_{F_i}(A)$ can have at most $2(q-2)$ unsafe
  edges. When $q \leq 3$, $2(q-2) < q$, which implies that
  $\delta_{F_i}(A)$ has strictly more than $p-1$ safe edges, a
  contradiction. Notice that
  $\delta_{F_i}(A) \subseteq \delta_{F_i}(B - A) \cup \delta_{F_i}(A
  \cup B)$,
  $\delta_{F_i}(B) \subseteq \delta_{F_i}(A - B) \cup \delta_{F_i}(A
  \cup B)$, and
  $\delta_{F_i}(B) \subseteq \delta_{F_i}(B - A) \cup \delta_{F_i}(A
  \cap B)$; therefore the same argument follows regardless of which
  pair of sets each have strictly less than $p+q-2$ edges.
\end{proof}

\begin{corollary}
\label{cor:fgc_p3}
  For any $p \ge 2$ there is a $(2p+4)$-approximation for $(p,2)$-FGC
  and a $(4p+4)$-approximation for $(p,3)$-FGC.
\end{corollary}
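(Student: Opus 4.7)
The plan is to apply the augmentation framework iteratively, using Lemma \ref{lemma:fgc_p3_uncrossable} as the key structural ingredient that makes each augmentation phase tractable. For $(p,2)$-FGC, I would first obtain a $4$-approximation for $(p,1)$-FGC using the algorithm of Boyd et al. \cite{BoydCHI22}; since any feasible $(p,2)$-FGC solution is also feasible for $(p,1)$-FGC, this initial step produces a partial solution $F_1$ of cost at most $4 \cdot \opt$, where $\opt$ denotes the cost of an optimum $(p,2)$-FGC solution. Next, I would run Algorithm \ref{algo:augmentation_in_stages} with $q=2$ to augment $F_1$ to a feasible $(p,2)$-FGC solution. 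In stage $i$ of the algorithm, the family $\calC_i$ of violated cuts is uncrossable by Lemma \ref{lemma:fgc_p3_uncrossable}, so Theorem \ref{thm:uncrossable} yields a $2$-approximate cover for $\calC_i$. Since $\opt$ is a feasible cover for each $\calC_i$, each of the $p$ stages costs at most $2 \cdot \opt$, and the augmentation phase as a whole costs at most $2p \cdot \opt$. Summing, the total cost is $(4 + 2p) \cdot \opt = (2p+4) \cdot \opt$.

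For $(p,3)$-FGC, I would bootstrap from the previous result: first apply the $(2p+4)$-approximation above to obtain a feasible $(p,2)$-FGC solution of cost at most $(2p+4)\cdot\opt'$, where $\opt'$ is the cost of an optimum $(p,3)$-FGC solution (also feasible for $(p,2)$-FGC). Then I would run Algorithm \ref{algo:augmentation_in_stages} a second time with $q=3$; again each of the $p$ stages covers an uncrossable family by Lemma \ref{lemma:fgc_p3_uncrossable}, contributing at most $2\cdot\opt'$, for a total augmentation cost of at most $2p\cdot\opt'$. The overall cost is $(2p+4+2p)\cdot\opt' = (4p+4)\cdot\opt'$.

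To implement the staged covering, I would use the efficient enumeration of violated cuts established in Section \ref{sec:prelim}: since each violated cut is a $2$-approximate min-cut in an appropriately weighted graph, Karger's bound limits the number of violated cuts to $O(n^4)$, and the subfamily $\calC_i$ is obtained by filtering by safe-edge count. This provides the oracle required by Theorem \ref{thm:uncrossable}, and also ensures the inductive invariant needed for Lemma \ref{lemma:fgc_structure}—namely that at the start of stage $i$, every cut of size exactly $p+q-1$ contains at least $i$ safe edges. The argument is mostly bookkeeping once the structural lemma is in hand; the main conceptual point is to note that we can charge each successive augmentation phase against a single copy of $\opt$ (using the fact that the optimum for the harder problem is feasible for the easier one), so that the final approximation factor is obtained by summing rather than multiplying the per-phase factors.
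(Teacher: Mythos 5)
Your proposal is correct and follows essentially the same route as the paper: seed with the $4$-approximation for $(p,1)$-FGC from Boyd et al., run Algorithm~\ref{algo:augmentation_in_stages} charging each of the $p$ uncrossable stages against the optimum for the harder problem, and bootstrap once more for $q=3$. The bookkeeping, the use of Lemma~\ref{lemma:fgc_p3_uncrossable} with Theorem~\ref{thm:uncrossable}, and the appeal to the cut-enumeration discussion in Section~\ref{sec:prelim} all match the paper's argument.
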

\begin{proof}
  Via \cite{BoydCHI22} we have a $4$-approximation for $(p,1)$-FGC. We
  can start with a feasible solution $F$ for $(p,1)$-FGC and use
  Algorithm~\ref{algo:augmentation_in_stages} to augment from $(p,1)$ to
  $(p,2)$; each stage can be approximated to within a factor of $2$
  since $\calC_i$ is an uncrossable family. Since there are $p$ stages
  and the cost of each stage can be upper bounded by the cost of
  $F^*$, an optimum solution to the $(p,2)$ problem, the total cost of
  augmentation is $2p \cdot \opt$. This leads to the desired
  $(2p+4)$-approximation for $(p,2)$-FGC. For $(p,3)$-FGC we can augment
  from $(p,2)$ to $(p,3)$ paying an additional cost of $2p \cdot \opt$. This
  leads to the claimed $(4p+4)$-approximation. Following the discussion
  in Section~\ref{sec:prelim}, covering the uncrossable families that
  arise in Algorithm~\ref{algo:augmentation_in_stages} can be done in
  polynomial time.
\end{proof}

Can we extend the preceding lemma for $q = 4$? It turns out that it
does work when $p$ is even but fails for odd $p \ge 3$.

\begin{lemma}
\label{lemma:fgc_p4_uncrossable}
  Fix an iteration $i \leq p-2$. Let $\calC_i$ be as
  defined in Algorithm~\ref{algo:augmentation_in_stages}. 
  Then, for $q = 4$, $\calC_i$ is uncrossable. Furthermore, if $p$ is an 
  \emph{even} integer, $\calC_{p-1}$ is uncrossable.
\end{lemma}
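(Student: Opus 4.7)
The plan is to mimic the counting strategy of Lemma \ref{lemma:fgc_p3_uncrossable} by contradiction, with additional bookkeeping that exploits the safe/unsafe split. Suppose $A, B \in \calC_i$ is a pair that fails to uncross; then $|\delta_{F_i}(A)| = |\delta_{F_i}(B)| = p+3$ and $|\delta_{F_i \cap \calS}(A)| = |\delta_{F_i \cap \calS}(B)| = i$. Invoking Lemma \ref{lemma:fgc_structure} together with submodularity/posimodularity, one of $|\delta_{F_i}(A \cap B)|, |\delta_{F_i}(A \cup B)|$ is at most $p+2$, and likewise one of $|\delta_{F_i}(A - B)|, |\delta_{F_i}(B - A)|$. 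Using the symmetries $A \mapsto V - A$ and $A \leftrightarrow B$, I will assume without loss of generality that $|\delta_{F_i}(A \cap B)|, |\delta_{F_i}(A - B)| \leq p+2$. Since $F_i$ is feasible for $(p, q-1) = (p,3)$-FGC, each of these cuts has at least $p$ safe edges and hence at most $2$ unsafe edges.

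For $i \leq p-2$ the conclusion is immediate: since $\delta_{F_i}(A) \subseteq \delta_{F_i}(A \cap B) \cup \delta_{F_i}(A - B)$, the cut $A$ carries at most $4$ unsafe edges in $F_i$, while the true count $|\delta_{F_i \cap \calU}(A)| = (p+3) - i \geq 5$, a contradiction.

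The main obstacle is the boundary case $i = p-1$, where the unsafe count matches $4$ exactly with no slack. Here I will extract more structure from the safe edges. Submodularity on safe edges yields $2(p-1) \geq |\delta_{F_i \cap \calS}(A \cap B)| + |\delta_{F_i \cap \calS}(A \cup B)| \geq p + |\delta_{F_i \cap \calS}(A \cup B)|$, so $|\delta_{F_i \cap \calS}(A \cup B)| \leq p - 2 < p$, which by $(p,3)$-feasibility forces $|\delta_{F_i}(A \cup B)| \geq p+3$. The symmetric posimodularity argument yields $|\delta_{F_i}(B - A)| \geq p+3$. Submodularity on the total cut function then gives $|\delta_{F_i}(A \cap B)|, |\delta_{F_i}(A - B)| \leq p+2$ (matching the assumption), so each carries at most $2$ unsafe edges. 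Matching $|\delta_{F_i \cap \calU}(A)| = 4$ against the union bound forces every inequality to be tight: both subcuts carry exactly $2$ unsafe edges with no unsafe edges internal to $A$ between $A \cap B$ and $A - B$, and in particular $|\delta_{F_i \cap \calS}(A \cap B)| = |\delta_{F_i \cap \calS}(A - B)| = p$.

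The finishing stroke is a parity argument. Let $\alpha$ count the safe edges between $A \cap B$ and $A - B$ (which lie inside $A$). The elementary identity $|\delta_\calS(A \cap B)| + |\delta_\calS(A - B)| = |\delta_\calS(A)| + 2\alpha$ gives $2p = (p-1) + 2\alpha$, so $\alpha = (p+1)/2$. This is not an integer when $p$ is even, a contradiction. Hence $\calC_{p-1}$ is uncrossable under the parity assumption, completing the proof.
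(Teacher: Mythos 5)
Your proof is correct and follows essentially the same route as the paper's: reduce (via the symmetries of $\calC_i$) to the case where $\delta_{F_i}(A\cap B)$ and $\delta_{F_i}(A-B)$ each have at most $p+2$ edges, use $(p,3)$-feasibility to force at least $p$ safe edges on each, and then count the edges shared between $A\cap B$ and $A-B$ to contradict $|\delta_{F_i}(A)|=p+3$. The only cosmetic difference is that the paper handles both cases uniformly through the bound $\ell \ge \lceil (2p-i)/2\rceil$, where the ceiling encodes exactly the same parity-of-$p$ obstruction that you extract via the non-integrality of $\alpha=(p+1)/2$.
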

\begin{proof}
  Let $A, B \in \calC_i$, so $\delta_{F_i}(A)$ and $\delta_{F_i}(B)$
  both have exactly $i$ safe and $p+3-i$ unsafe edges. Suppose for the
  sake of contradiction that they do not uncross. 
  By submodularity and posimodularity of the cut function, 
  $|\delta_{F_i}(A \cap B)| + |\delta_{F_i}(A \cup B)|$ and 
  $|\delta_{F_i}(A - B)| + |\delta_{F_i}(B - A)|$ are both at most $2(p+q-1)$.
  Thus by Lemma~\ref{lemma:fgc_structure}, at least one of $\delta_{F_i}(A \cap B)$
  and $\delta_{F_i}(A \cup B)$ must have at most $p+2$ edges, and the
  same holds for $\delta_{F_i}(A - B)$ and $\delta_{F_i}(B -A)$. 
  Without loss of generality, suppose $\delta_{F_i}(A \cap B)$
  and $\delta_{F_i}(A - B)$ each have at most $p+2$ edges. By the
  assumptions on $F_i$, they must both have at least $p$ safe
  edges. 

  Notice that each safe edge on $\delta_{F_i}(A \cap B)$ or
  $\delta_{F_i}(A - B)$ must be in either $\delta_{F_i}(A)$ or
  $\delta_{F_i}(A \cap B) \cap \delta_{F_i}(A - B)$. Let
  $\ell = |\delta_{F_i \cap \calS}(A \cap B) \cap \delta_{F_i \cap
    \calS}(A - B)|$, i.e. the number of safe edges crossing both
  $A \cap B$ and $A - B$. Since $|\delta_{F_i \cap \calS}(A)| = i$,
  \[i = |\delta_{F_i \cap \calS}(A)| 
  \geq |\delta_{F_i \cap \calS}(A \cap B)| + |\delta_{F_i \cap \calS}(A - B)| - 
  2\ell \geq 2p - 2\ell,\]
  implying that $\ell \geq \frac{2p-i}2$. Note that 
  $\delta_{F_i}(A) \subseteq \delta_{F_i}(A \cap B) \cup \delta_{F_i}(A - B)$. 
  Furthermore, if an edge crosses both $A \cap B$ and $A - B$, then it must 
  have one endpoint in each (since they are disjoint sets), and therefore does not 
  cross $A$. Therefore,
  \begin{align*}
      |\delta_{F_i}(A)| \leq |\delta_{F_i}(A \cap B)| + |\delta_{F_i}(A - B)| - 
      2\ell \leq 2(p+2) - 2 \cdot \left \lceil \frac{2p-i}{2} \right \rceil 
      \leq 4+i
  \end{align*}
  When $i < p-1$, this is at most $p+2$. When $i = p-1$ and $p$ is
  even, $2 \cdot \left \lceil \frac{2p-i}{2} \right \rceil = 2 \cdot
  \left \lceil \frac{p+1}{2} \right \rceil = p+2$, and $2(p+2) - p+2 =
  p+2$. In either case, we get $|\delta_{F_i}(A)| \leq p+2$, a
  contradiction to the assumption on $A$.
\end{proof}

The preceding lemma leads to a $(6p+4)$-approximation for $(p, 4)$-FGC
when $p$ is even by augmenting from a feasible solution to
$(p,3)$, since we pay an additional cost of $2p \cdot \opt$. 
The preceding lemma also shows that the bottleneck for odd $p$ is in covering
$\calC_{p-1}$.  We show via an example that the family is indeed not
uncrossable (see Figure \ref{fgc_k4odd_counterex}).
It may be possible to show that $\calC_{p-1}$ separates into a constant
number of uncrossable families leading to an $O(p)$-approximation for
$(p,4)$-FGC for all $p$. The first non-trivial case is when $p=3$. 
Note that in Figure \ref{fgc_k4odd_counterex}, none of the 
cuts $A \cup B, A \cap B, A - B, B - A$ are violated, thus a new 
approach to partitioning the set of violated cuts is necessary.

\begin{figure}
  \begin{center}
  \includegraphics[width = 0.4\linewidth]{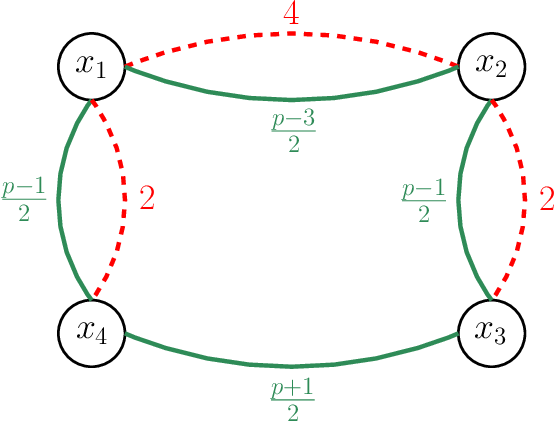}
  \end{center}
  \caption{Violated cuts in augmentation problem from 
  $(p, 3)$-FGC to $(p, 4)$-FGC are not uncrossable when $p$ is odd. 
  The numbers on edges denote the number of parallel edges, red dashed edges
   are unsafe, green solid edges are safe. Let
   $A = \{x_1, x_2\}$, $B = \{x_2, x_3\}$; both are crossed by
  exactly $p-1$ safe edges and $4$ unsafe edges. However,
   $A \cup B$ and $B - A$ are each crossed by $p$ safe edges, and $A \cap B$ and
   $A - B$ are each crossed by $p+4$ total edges.}
  \label{fgc_k4odd_counterex}
\end{figure}

We also demonstrate that preceeding lemmas do not extend for $q=5$, even for
even $p$, by giving an example where two violated cuts are not uncrossable
when augmenting from $(4, 4)$-FGC to $(4, 5)$-FGC in
Figure \ref{fgc_k5_counterex}. Note that $A$ and $B$ each only have 3 safe edges
crossing them, so the argument that $\calC_i$ is uncrossable for
$i < p-1$ fails. 

\begin{figure}
  \begin{center}
    \includegraphics[width = 0.4\linewidth]{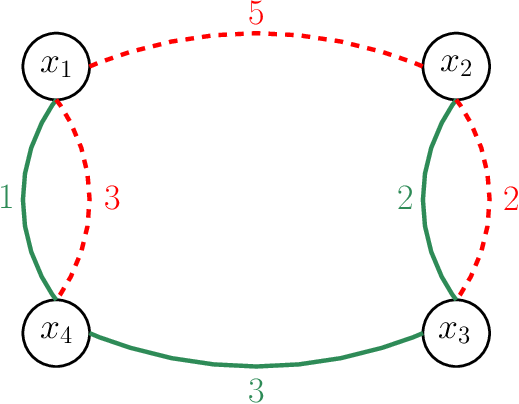}
  \end{center}
  \caption{Violated cuts in augmentation problem from 
  $(4,4)$-FGC to $(4,5)$-FGC are not uncrossable. The numbers on edges 
  denote the number of parallel edges. Red dashed edges are unsafe, 
  green solid edges are safe. Let $A = \{x_1, x_2\}$, let
  $B = \{x_2, x_3\}$; both are in $\calC_3$. $A \cup B$ and $B - A$ each have at least $4$ safe
  edges, and $A \cap B$ and $A - B$ each have $9$ total edges, so none
  are violated.}
  \label{fgc_k5_counterex}
\end{figure}

\subsection{An $O(1)$-Approximation for Flex-ST}
\label{subsec:st}
In this section, we provide a constant factor approximation for 
$(p, q)$-Flex-ST for all fixed $p,q$ that satisfy $p+q > pq/2$. We follow 
the general approach outlined in \ref{subsec:uncrossable_families_algo} with 
some modifications. In particular, we start with a stronger set of edges $F$ 
than in the FGC case above. Let $E' \subseteq E$ denote a feasible solution 
to $(p, q-1)$-Flex-ST.

Recall from Section~\ref{sec:intro} the capacitated network design problem, 
in which each edge has an integer capacity $u_e \geq 1$. Consider an instance 
of the $(p(p+q))$-Cap-ST problem on $G$ where every safe edge is given a capacity 
of $p+q$ and every unsafe edge is given a capacity of $p$. Our goal is to find 
the cheapest set of edges that support a flow of $(p(p+q))$ from $s$ to $t$. 
It is easy to see that any solution to $(p,q)$-Flex-ST is also a feasible 
solution for this capacitated problem: every $s$-$t$ cut either has at least 
$p$ safe edges or at least $p+q$ total edges, and either case gives a capacity 
of at least $p(p+q)$. As mentioned in Section~\ref{sec:intro}, there exists 
a $\max_{e}(u_e) = (p+q)$-approximation for this problem. 
Let $E'' \subseteq E$ be such a solution, and note that 
$\cost(E'') \leq (p+q) \cdot \opt$, where $\opt$ denotes the cost of 
an optimal solution to $(p,q)$-Flex-ST.

\begin{figure}
  \begin{center}
    \includegraphics[width = 0.4\linewidth]{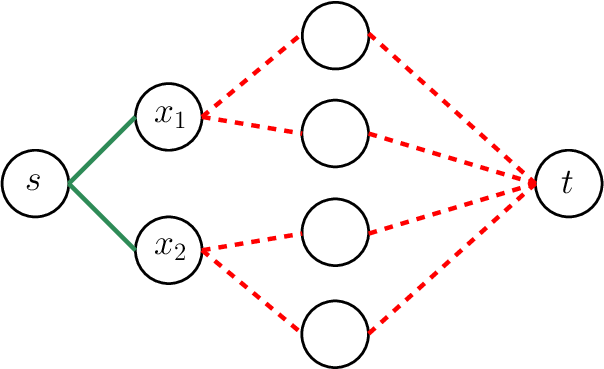}
  \end{center}
  \caption{Example where $\calC_1$ is not uncrossable when augmenting from 
  $(2,1)$ to $(2,2)$-Flex-ST. Red dashed edges are unsafe, while green solid 
  edges are safe. Let $A = \{s, x_1\}, B = \{s, x_2\}$. $A$ and $B$ are both 
  crossed by exactly one safe and two unsafe edges, but 
  $A \cup B$ is crossed by two safe edges and $A \cap B$ is crossed by 
  four unsafe edges. $A - B$ and $B - A$ are not $s$-$t$ cuts.}
  \label{figure:st_22_counterex}
\end{figure}

Let $F = E' \cup E''$. We redefine $\calC$ to limit ourselves to the set of 
violated cuts containing $s$, i.e. $\calC = \{A \subset V: s \in A, t \notin A, 
|\delta_{F \cap \calS}(A)| < p, |\delta_{F \cap \calU}(A)| = p+q-1\}$. 
By symmetry, it suffices to only consider cuts containing $s$, since covering a set
also covers its complement. Following the discussion in 
Section~\ref{subsec:uncrossable_families_algo}, we use 
Algorithm~\ref{algo:augmentation_in_stages} to cover violated cuts in stages 
based on the number of safe edges. However, unlike the spanning case, the 
sets $\calC_i$ are not uncrossable in the single pair setting, even for 
$(2,2)$-Flex-ST (see Figure \ref{figure:st_22_counterex}). In this case, we aim 
to further partition $\calC_i$ into subfamilies that we can cover efficiently. 

For the rest of the proof, we let $F_i \subseteq E$ denote the set of edges $F'$ 
at the start of iteration $i$. In other words, $F_i$ is a set of edges such that 
for all cuts $A$ separating $s$ from $t$, if $|\delta_{F_i}(A)| = p+q-1$, then 
$|\delta_{F_i \cap \calS}(A)| \geq i$. We begin with a structural observation. 

\begin{lemma}
\label{lemma:st_structure}
  Fix an iteration $i \in \{0, \dots, p-1\}$. Let $\calC_i$ be as defined in 
  Algorithm~\ref{algo:augmentation_in_stages}. Let $A, B \in \calC_i$. Then, either
  \begin{enumerate}
    \item $A \cup B, A \cap B \in \calC_i$, or
    \item $\max(\delta_{F_i \cap \calS}(A \cap B), 
\delta_{F_i \cap \calS}(A \cup B)) \geq p$.
  \end{enumerate}
\end{lemma}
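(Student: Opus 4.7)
The plan is to prove Lemma~\ref{lemma:st_structure} by a short case analysis driven by submodularity (and posimodularity, if needed) of the cut function applied to $F_i$ and to $F_i \cap \calS$, combined with two structural invariants of $F_i$: (a) $F_i$ contains a feasible solution for $(p,q-1)$-Flex-ST (so every $s$-$t$ cut has either $\ge p$ safe edges or $\ge p+q-1$ total edges of $F_i$ across it), and (b) the stated invariant that any $s$-$t$ cut $S$ with $|\delta_{F_i}(S)| = p+q-1$ has $|\delta_{F_i \cap \calS}(S)| \ge i$.

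First, I would observe that $A \cap B$ and $A \cup B$ are both $s$-$t$ cuts, since $s \in A \cap B$ (because $s \in A$ and $s \in B$) and $t \notin A \cup B$ (because $t \notin A$ and $t \notin B$). This is where the single-pair/symmetry assumption pays off in a way that failed for FGC.

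Next I would apply submodularity of $|\delta_{F_i}(\cdot)|$ to get
\[
|\delta_{F_i}(A\cap B)| + |\delta_{F_i}(A\cup B)| \;\le\; |\delta_{F_i}(A)| + |\delta_{F_i}(B)| \;=\; 2(p+q-1),
\]
and then split into two cases. In Case 1, both $|\delta_{F_i}(A\cap B)|$ and $|\delta_{F_i}(A\cup B)|$ are exactly $p+q-1$; by invariant (b) each has at least $i$ safe edges crossing it, while submodularity applied to $|\delta_{F_i \cap \calS}(\cdot)|$ gives
\[
|\delta_{F_i\cap\calS}(A\cap B)| + |\delta_{F_i\cap\calS}(A\cup B)| \;\le\; |\delta_{F_i\cap\calS}(A)| + |\delta_{F_i\cap\calS}(B)| \;=\; 2i,
\]
forcing both safe-edge counts to equal $i$, so both cuts lie in $\calC_i$. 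In Case 2, one of the two total-edge counts is strictly less than $p+q-1$; without loss of generality $|\delta_{F_i}(A\cup B)| < p+q-1$, and then invariant (a) forces $|\delta_{F_i\cap\calS}(A\cup B)| \ge p$, which is exactly conclusion (2).

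I do not anticipate a substantial obstacle here; the proof is essentially the same submodularity-and-invariant argument used in Lemma~\ref{lemma:fgc_structure}, but strengthened by exploiting that both $A\cap B$ and $A\cup B$ remain $s$-$t$ cuts (a property not available in the FGC setting, which is precisely why FGC-style uncrossability fails for Flex-ST, cf.\ Figure~\ref{figure:st_22_counterex}). The only minor care needed is to keep the two submodular inequalities (one on $F_i$, one on $F_i \cap \calS$) separate and to invoke feasibility of $F_i$ for $(p,q-1)$-Flex-ST in Case 2 rather than invariant (b), which only speaks about cuts meeting the $p+q-1$ threshold with equality.
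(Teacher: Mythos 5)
Your proposal is correct and follows essentially the same argument as the paper: submodularity of $|\delta_{F_i}(\cdot)|$ to force either both of $A\cap B, A\cup B$ to meet the $p+q-1$ threshold exactly (then the iteration invariant plus submodularity on $F_i\cap\calS$ pins both safe-edge counts to $i$), or one of them to fall strictly below it (then feasibility of the starting $(p,q-1)$-Flex-ST solution contained in $F_i$ forces $\ge p$ safe edges). The paper merely presents the two cases in the opposite order.
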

\begin{proof}
  First, note that if $A$ and $B$ separate $s$ from $t$, then $A \cup B$ and 
  $A \cap B$ both do as well. First, suppose $|\delta_{F_i}(A \cap B)| < p+q-1$. 
  Since $F \subseteq F_i$, any $s$-$t$ separating cut with less than $p+q-1$ edges 
  must have at least $p$ safe edges. Thus $|\delta_{F \cap S}(A \cap B)| \geq p$. 
  The argument is similar for the case where $|\delta_{F_i}(A \cup B)| < p + q -1$.

  By submodularity of the cut function, 
  $|\delta_{F_i}(A \cap B)| + |\delta_{F_i}(A \cup B)| \leq |\delta_{F_i}(A)| + 
  |\delta_{F_i}(B)| = 2(p+q-1).$ Therefore, if neither $\delta_{F_i}(A \cap B)$ nor 
  $\delta_{F_i}(A \cup B)$ have  strictly less than $p+q-1$ edges, they must both 
  have exactly $p+q-1$ edges. By submodularity of the cut function applied to 
  $F_i \cap \calS$, $|\delta_{F_i \cap \calS}(A \cap B)| + 
  |\delta_{F_i \cap \calS}(A \cup B)| \leq |\delta_{F_i \cap \calS}(A)| + 
  |\delta_{F_i \cap \calS}(B)| = 2i.$
  By definition of $F_i$, neither $\delta_{F_i}(A \cap B)$ nor $\delta_{F_i}(A \cup B)$ 
  can have less than $i$ safe edges. Therefore, 
  $|\delta_{F_i \cap \calS}(A \cap B)| = |\delta_{F_i \cap \calS}(A \cup B)| = i$, 
  so $A \cap B, A \cup B \in \calC_i$.
\end{proof}

Consider a flow network on the graph $(V, F_i)$ with safe edges given a capacity 
of $(p+q)$ and unsafe edges given a capacity of $p$. Since $E'' \subseteq F_i$, 
$F_i$ satisfies the $p(p+q)$-Cap-ST requirement. Therefore, the minimum 
capacity $s$-$t$ cut and thus the maximum $s$-$t$ flow value is at least 
$p(p+q)$. Since capacities are integral, there is some integral max flow $f$. 
By flow decomposition, we can decompose $f$ into a set $\calP$ of $|f|$ paths, 
each carrying a flow of 1, and we can find $\calP$ in polynomial time.

For each $\calQ \subseteq \calP$ where $|\calQ| = i$, we define a subfamily of 
violated cuts $\calC_i^{\calQ}$ as follows. Let $\calQ = \{P_1, \dots, P_i\}$. 
Then, $A \in \calC_i$ is in $\calC_i^{\calQ}$ iff there exist distinct 
edges $e_1, \dots, e_i \in \calS$ satisfying:
\begin{enumerate}
  \item $\forall j \in [i]$, $\delta_{F_i}(A) \cap P_j = \{e_j\}$,
  \item $\delta_{F_i \cap \calS}(A) = \{e_1, \dots, e_i\}$.
\end{enumerate}
Informally, $\calC_i^{\calQ}$ is the set of all violated cuts that intersect 
the paths of $\calQ$ exactly once and on a distinct safe edge each. 

\begin{lemma}
\label{lemma:st_allcutscovered}
  Suppose $p+q > pq/2$. If $A \in \calC_i$, then there exists some 
  $\calQ \subseteq \calP$, $|\calQ| = i$, such that $A \in \calC_i^{\calQ}$.
\end{lemma}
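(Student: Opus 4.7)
The plan is to show that for each of the $i$ safe cut edges $e_j \in \delta_{F_i \cap \calS}(A)$, there is a \emph{lonely} path through $e_j$ in $\calP$, i.e., some $P \in \calP$ with $P \cap \delta_{F_i}(A) = \{e_j\}$. Once this is established, choosing one lonely path per safe cut edge produces $i$ distinct paths (a lonely path uses exactly one cut edge, so it can be paired with only one safe cut edge), and taking $\{e_1, \ldots, e_i\} = \delta_{F_i \cap \calS}(A)$ together with $\calQ = \{P_1, \ldots, P_i\}$ verifies $A \in \calC_i^{\calQ}$. The case $i = 0$ is vacuous with $\calQ = \emptyset$, so assume $i \ge 1$.

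First I would gather two preliminary observations. Since $E'' \subseteq F_i$ is feasible for the $p(p+q)$-Cap-ST instance on $(V, F_i)$ under the safe/unsafe capacities, the max-flow value satisfies $|\calP| = |f| \geq p(p+q)$. The cut $\delta_{F_i}(A)$ has exactly $i$ safe edges (capacity $p+q$) and $p+q-1-i$ unsafe edges (capacity $p$), for a total capacity of $i(p+q) + (p+q-1-i)p = iq + p^2 + pq - p$. Since each $P \in \calP$ is a simple $s$-$t$ path, $|P \cap \delta_{F_i}(A)|$ is odd and at least $1$; in particular, any non-lonely path crosses the cut at least $3$ times.

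For the core of the argument, fix a safe cut edge $e_{j^*} \in \delta_{F_i \cap \calS}(A)$ and suppose toward a contradiction that no path in $\calP$ is lonely for $e_{j^*}$. Let $T$ denote the number of paths in $\calP$ that use $e_{j^*}$; each such path uses at least two other cut edges, so
\[
  \sum_{P \in \calP} \bigl|P \cap \delta_{F_i}(A)\bigr|
  \;\geq\; 3T + (|\calP| - T)
  \;=\; 2T + |\calP|
  \;\geq\; 2T + p(p+q).
\]
The same sum equals $\sum_{e \in \delta_{F_i}(A)} a_e$, where $a_e$ counts paths through $e$; separating $e_{j^*}$ (with $a_{e_{j^*}} = T$) from the other $i-1$ safe and $p+q-1-i$ unsafe cut edges and applying $a_e \leq u_e$ to the rest gives
\[
  \sum_{P \in \calP} \bigl|P \cap \delta_{F_i}(A)\bigr|
  \;\leq\; T + (i-1)(p+q) + (p+q-1-i)p
  \;=\; T + iq + p^2 + pq - 2p - q.
\]
Combining the two inequalities yields $T \leq iq - 2p - q$. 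Since $i \leq p-1$ and $p + q > pq/2$, we have $iq \leq (p-1)q = pq - q < 2p + q$, so $iq - 2p - q < 0$, contradicting $T \geq 0$. Hence every safe cut edge admits a lonely path.

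The main subtlety I expect is getting the right upper bound. A naive application using the full capacity of every cut edge (including $u_{e_{j^*}} = p+q$) yields only $T \leq (iq - p)/2$, which is too weak to produce a contradiction under the hypothesis. The trick is to account for $e_{j^*}$ by its actual usage $T$ rather than by its capacity, which tightens the inequality to $T \leq iq - 2p - q$ and makes the bound $iq < 2p + q$ (equivalent to $p+q > pq/2$ at $i = p-1$) appear precisely as stated.
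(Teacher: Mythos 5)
Your proposal is correct and follows essentially the same approach as the paper: assume some safe cut edge $e^*$ has no path in $\calP$ meeting the cut only at $e^*$, then compare the number of paths crossing $\delta_{F_i}(A)$ against the capacities of the cut edges to force $iq - 2p - q \ge 0$, which contradicts $p+q > pq/2$ together with $i \le p-1$. Your double-counting of path--edge incidences is just a mild repackaging of the paper's count of distinct crossing paths, and both yield the identical final inequality.
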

\begin{proof}
  Let $A \in \calC_i$. Suppose for every $e \in \delta_{F_i \cap \calS}(A)$, 
  there is a path $P_e \in \calP$ such that $e \in P_e$ and for any 
  $e' \in \delta_{F_i}(A)$, $e' \neq e$ implies that $e' \notin P_e$. 
  Then, by the definition above, $A \in \calC_i^{\calQ}$, where 
  $\calQ = \{P_e: e \in \delta_{F_i \cap \calS}(A)\}$. Therefore, it suffices 
  to show that these paths $P_e$ exist. Suppose for the sake of contradiction 
  that there is some safe edge $e^*$ crossing $A$ without a unique flow path; 
  that is, either there is no path in $\calP$ containing $e^*$, or 
  every path in $\calP$ containing 
  $e^*$ also contains some other $e' \in \delta_{F_i}(A)$.

  Since $A \in \calC_i$, $A$ is crossed by exactly $i$ safe edges 
  (each with capacity $p+q$) and $p+q-1-i$ unsafe edges (each with capacity $p$). 
  Since $e^*$ does not contain any unique flow paths, the total number of paths 
  $P \in \calP$ crossing $\delta_{F_i}(A)$ is at most $(i-1)(p+q) + (p+q-1-i)p = 
  p(p+q) + (iq - 2p - q).$

  Recall that $|\calP| \geq p(p+q)$. Since $A$ separates $s$ from $t$, 
  every path in $\calP$ must intersect $\delta_{F_i}(A)$ at least once. 
  Therefore, $iq - 2p - q$ must be non-negative. However, if $2p + 2q > pq$, 
  then $iq - 2p - q = (i+1)q - (2p + 2q) < (i+1)q - pq \leq 0.$ The last 
  inequality follows from the fact that $i \leq p-1$. This gives us our 
  desired contradiction.
\end{proof}

The above lemma shows that 
$\cup_{\calQ \subseteq \calP, |\calQ| = i} \calC_i^{\calQ} = \calC_i$. 
Therefore, it suffices to cover each $\calC_i^{\calQ}$.

\begin{lemma}
\label{lemma:st_ringfamily}
  For any $\calQ \subseteq \calP$, $|\calQ| = i$, $\calC_i^{\calQ}$ is a ring family.
\end{lemma}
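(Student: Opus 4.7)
\textbf{Proof plan for Lemma~\ref{lemma:st_ringfamily}.} Every set in $\calC_i^{\calQ}$ contains $s$ and excludes $t$, so by the remark in Section~\ref{subsec:uncrossable_families_algo} it suffices to establish uncrossability: given $A, B \in \calC_i^{\calQ}$ that properly intersect, I need to show $A \cap B, A \cup B \in \calC_i^{\calQ}$ (the sets $A - B, B - A$ cannot be in the family since they omit $s$). The overall plan is first to analyze, path by path, how the paths in $\calQ$ interact with $A \cap B$ and $A \cup B$, and then to use submodularity on the safe subgraph to pin down the number of safe boundary edges of each of these sets exactly.

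For the first step, fix $P_j \in \calQ$ and let $e_j^A, e_j^B$ be its unique intersections with $\delta_{F_i}(A)$ and $\delta_{F_i}(B)$ respectively. Since $P_j \cap (\delta_{F_i}(A) \cup \delta_{F_i}(B)) = \{e_j^A, e_j^B\}$ and $\delta_{F_i}(A \cap B), \delta_{F_i}(A \cup B) \subseteq \delta_{F_i}(A) \cup \delta_{F_i}(B)$, walking $P_j$ from $s$ to $t$ and tracking membership in $A$ and $B$ shows that $P_j$ crosses $\delta_{F_i}(A \cap B)$ exactly once at the ``first'' of $e_j^A, e_j^B$ and crosses $\delta_{F_i}(A \cup B)$ exactly once at the ``last'' one (these coincide if $e_j^A = e_j^B$). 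Call these edges $f_j$ and $g_j$; they are safe. Moreover, using the fact (from the $\calC_i^{\calQ}$ condition applied to both $A$ and $B$) that each $e_j^A$ lies on no $P_k$ with $k \ne j$, and similarly for $e_j^B$, I can conclude that the $f_j$ are pairwise distinct, and likewise the $g_j$.

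For the second step, submodularity of the safe-edge cut function gives
\[
2i = |\delta_{F_i \cap \calS}(A)| + |\delta_{F_i \cap \calS}(B)| \geq |\delta_{F_i \cap \calS}(A \cap B)| + |\delta_{F_i \cap \calS}(A \cup B)|.
\]
The previous paragraph exhibits $i$ distinct safe edges in each of $\delta_{F_i}(A \cap B)$ and $\delta_{F_i}(A \cup B)$, so both safe-cut sizes are at least $i$, forcing equality. Combined with Lemma~\ref{lemma:st_structure}, whose second alternative is now excluded (since each of $|\delta_{F_i \cap \calS}(A \cap B)|$ and $|\delta_{F_i \cap \calS}(A \cup B)|$ equals $i < p$), both $A \cap B$ and $A \cup B$ lie in $\calC_i$; their $i$ safe boundary edges are precisely $\{f_1,\ldots,f_i\}$ and $\{g_1,\ldots,g_i\}$, each meeting the corresponding path of $\calQ$ exactly once. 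Hence $A \cap B, A \cup B \in \calC_i^{\calQ}$, establishing uncrossability and therefore the ring family property.

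The main subtlety I anticipate is the path-by-path crossing analysis: it requires carefully using the two defining conditions of membership in $\calC_i^{\calQ}$ to rule out the possibility that a safe edge $e_j^A$ could coincide with some $e_k^B$ for $k \ne j$, which is what guarantees that the $f_j$ and $g_j$ remain pairwise distinct and that the submodularity counting argument lands exactly on $i$.
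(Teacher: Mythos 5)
Your proposal is correct and follows essentially the same route as the paper's proof: show each path of $\calQ$ crosses $\delta_{F_i}(A\cap B)$ and $\delta_{F_i}(A\cup B)$ exactly once at a safe edge (the paper uses a parity argument where you walk the path, but the conclusion is identical), use the distinctness condition to get $i$ distinct safe edges on each boundary, apply submodularity of the safe-edge cut function to force exactly $i$, and invoke Lemma~\ref{lemma:st_structure} to conclude $A\cap B, A\cup B \in \calC_i$.
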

\begin{proof}
  Fix $\calQ \subseteq \calP$ such that $|\calQ| = i$, and let 
  $A, B \in \calC_i^{\calQ}$. Let $P \in \calQ$ be arbitrary. Since 
  $A, B \in \calC_i^{\calQ}$, $\exists e_1, e_2$ (not necessarily distinct) 
  such that $\{e_1\} = \delta_{F_i \cap \calS}(A) \cap P$ and 
  $\{e_2\} = \delta_{F_i \cap \calS}(B) \cap P$. 

  Note that $A \cup B$ and $A \cap B$ are both $s$-$t$ cuts, since $A$ and $B$ 
  both separate $s$ from $t$. Since $P$ is an $s$-$t$ path, it must cross both 
  $\delta_{F_i}(A \cup B)$ and $\delta_{F_i}(A \cap B)$ an odd number of times. 
  However, any edge that crosses $A \cup B$ or $A \cap B$ must also cross either 
  $A$ or $B$. Therefore, $P$ can only cross $A \cup B$ and $A \cap B$ through 
  $e_1$ or $e_2$. This implies that $|P \cap \delta_{F_i}(A \cap B)| = 
  |P \cap \delta_{F_i}(A \cup B)| = 1$. Since $P \in \calQ$ was arbitrary, 
  this holds for all $P \in \calQ$, so $A \cup B$ and $A \cap B$ satisfy 
  condition (1) for $\calC_i^{\calQ}$.

  Next, note that neither $e_1$ nor $e_2$ can be in any other $P' \neq P \in \calP$, 
  since that would violate the condition that both $\delta_F(A)$ and 
  $\delta_F(B)$ intersect each path in $\calP$ at a distinct safe edge. 
  Therefore, $A \cup B$ and $A \cap B$ satisfy condition (2) for $\calC_i^{\calQ}$.

  It remains to show that $A \cup B$ and $A \cap B$ are in $\calC_i$. 
  Since $A \cup B$ and $A \cap B$ are each crossed by a distinct safe edge 
  in $\calP$, they must each have at least $i$ safe edges. By submodularity of 
  the cut function, $|\delta_{F \cap S}(A \cup B)| + |\delta_{F \cap S}(A \cap B)| 
  \leq |\delta_{F \cap S}(A)| + |\delta_{F \cap S}(B)| = 2i$, so they must each 
  have exactly $i$ safe edges. By Lemma~\ref{lemma:st_structure}, since $i < p$, 
  $A \cup B$ and $A \cap B$ must both be in $\calC_i$, as desired.
\end{proof}

We combine the above lemmas to obtain a constant factor approximation for 
covering $\calC_i$.
\begin{lemma}
\label{lemma:st_ci_approx}
  Suppose $p+q > pq/2$ and $p,q$ are fixed. Then, there exists an algorithm that 
  runs in $n^{O(p+q)}$ time to cover all cuts in $\calC_i$ with cost at most 
  $\binom{p^2+pq}{i} \cdot \opt$.
\end{lemma}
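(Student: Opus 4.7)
The plan is to combine the two preceding lemmas into the following algorithm: enumerate every $i$-element subset $\calQ \subseteq \calP$, compute a minimum-cost cover of the ring family $\calC_i^{\calQ}$ using edges in $E \setminus F_i$, and return the union of these covers. Correctness is immediate from Lemma~\ref{lemma:st_allcutscovered}, which gives $\calC_i = \bigcup_{\calQ} \calC_i^{\calQ}$, and Lemma~\ref{lemma:st_ringfamily}, which tells us each $\calC_i^{\calQ}$ is a ring family and therefore admits an \emph{exact} polynomial-time min-cost cover algorithm (e.g.\ Frank / Nutov, as cited in Section~\ref{sec:fgc}) given an oracle for its minimal members.

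To make the combinatorial factor come out to $\binom{p^2+pq}{i}$ and to keep the enumeration tractable, I would first fix $|\calP|$ at exactly $p(p+q)$. Recall that $F_i \supseteq E''$ is feasible for the $p(p+q)$-Cap-ST instance with safe-edge capacity $p+q$ and unsafe-edge capacity $p$, so an integral $s$-$t$ max flow in $(V,F_i)$ has value at least $p(p+q)$; I flow-decompose it and retain any $p(p+q) = p^2 + pq$ unit-flow paths as $\calP$. Both preceding lemmas continue to hold for this truncated $\calP$, and now there are at most $\binom{p^2+pq}{i}$ subfamilies $\calC_i^{\calQ}$ to process. The oracle for minimal elements of $\calC_i^{\calQ}$ runs in $n^{O(p+q)}$ time by enumerating candidate cut-boundaries as size-$(p+q-1)$ subsets of $F_i$ and checking the membership conditions of $\calC_i^{\calQ}$; since $\binom{p^2+pq}{i}$ is constant for fixed $p,q$, the overall running time remains $n^{O(p+q)}$.

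For the cost bound, let $OPT$ denote an optimal $(p,q)$-Flex-ST solution. For any $A \in \calC_i$ we have $|\delta_{F_i}(A)| = p+q-1$ and $|\delta_{F_i \cap \calS}(A)| = i < p$, while feasibility of $OPT$ forces either $|\delta_{OPT \cap \calS}(A)| \geq p$ or $|\delta_{OPT}(A)| \geq p+q$; in either case some edge of $OPT \setminus F_i$ crosses $A$. Hence $OPT \setminus F_i$ covers every $\calC_i^{\calQ}$, so the exact ring-family algorithm returns a cover of cost at most $c(OPT \setminus F_i) \leq \opt$. Summing over the $\binom{p^2+pq}{i}$ choices of $\calQ$ yields the bound $\binom{p^2+pq}{i}\cdot \opt$.

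The only delicate point (and the part most likely to hide a mistake) is the truncation step: one must verify that taking an arbitrary $p(p+q)$-subset of the flow-decomposition paths does not break the hypothesis $|\calP| \geq p(p+q)$ used inside the proof of Lemma~\ref{lemma:st_allcutscovered} — which it does not, since that proof only uses this lower bound on $|\calP|$. Everything else is a direct combination of the structural results already proved with a black-box ring-family cover.
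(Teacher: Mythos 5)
Your proposal is correct and follows essentially the same route as the paper: enumerate the size-$i$ subsets $\calQ \subseteq \calP$, solve each ring family $\calC_i^{\calQ}$ exactly, take the union, and charge each subproblem to $\opt$ via the fact that $OPT \setminus F_i$ covers every violated cut. Your explicit truncation of $\calP$ to exactly $p(p+q)$ unit-flow paths (and the check that Lemma~\ref{lemma:st_allcutscovered} only needs the lower bound $|\calP| \ge p(p+q)$) is a small but legitimate tightening of a point the paper glosses over when it asserts there are $\binom{p(p+q)}{i}$ subsets.
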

\begin{proof}
  We start by finding $\calP$ as described above. For each $\calQ \subseteq \calP$ 
  with $|\calQ| = i$, we can solve the augmentation problem of finding the 
  minimum cost subset $F_\calQ \subseteq E \setminus F_i$ such that 
  $\delta_{F_\calQ}(A) \geq 1$ for all $A \in \calC_i^{\calQ}$. 
  Since $\cup_{\calQ \subseteq \calP, |\calQ| = i} \calC_i^{\calQ} = \calC_i$, 
  $F_i = \cup_{\calQ \subseteq \calP, |\calQ| = i} F_\calQ$ covers all cuts 
  in $\calC_i$. There are $\binom{p(p+q)}i$ such subsets $\calQ$. For each, 
  we can find the minimum violated cuts in $n^{O(p+q)}$ time, since each 
  violated cut in $\calC_i$ has exactly $p+q-1$ edges on its boundary. 

  From the discussion in Section~\ref{sec:prelim}, since each $\calC_i^{\calQ}$ 
  is a ring family, the corresponding augmentation problems can be solved exactly, 
  so $\cost(F_{\calQ}) \leq \opt$. Thus, $\cost(F_i) \leq \binom {p(p+q)}{i} 
  \cdot \opt$.
\end{proof}

The above lemma gives us Theorem~\ref{thm:intro-flex-st} as a corollary. 
At the beginning of each augmentation step, before running 
Algorithm~\ref{algo:augmentation_in_stages}, we compute a solution to the 
$(p(p+q))$-Cap-ST problem, which we can do with cost at most $(p+q)\cdot\opt$. 
Summing over $q$ augmentation iterations gives us the desired $(p+q)^{O(p)}$ 
approximation ratio.

\begin{remark} 
  The approximation factor in Theorem~\ref{thm:intro-flex-st} 
  can be optimized slightly. For example, the algorithm we describe gives a 
  $5$-approximation for $(2,2)$-Flex-ST (see Appendix \ref{sec:st22}).
  We omit the details of this optimization for other values of $(p,q)$ in this 
  paper and instead focus on showing constant factor for fixed $p,q$.
\end{remark}

\section{Approximating Non-Uniform SNDP Problems}
\label{sec:sndp_cutcover}

In this section we prove Theorem \ref{thm:bulk-sndp}, obtaining a polylogarithmic 
approximation algorithm for Bulk-SNDP. We use the augmentation framework 
described in Section \ref{sec:prelim}. Recall that given an instance 
$G, \Omega$ to Bulk-SNDP, we let 
$\Omega_{\ell} = \bigcup_{j \in [m]} \{(F, \calK_j): |F| \leq \ell 
\text{ and } F \subseteq F_j\}.$ The augmentation problem from $\ell-1$ 
to $\ell$ takes a partial solution $H$ satisfying all scenarios in 
$\Omega_{\ell-1}$ and attempts to augment $H$ with an additional set 
$H' \subseteq E \setminus H$ 
such that $H \cup H'$ satisfies all scenarios in $\Omega_{\ell}$.

In the general case of Bulk-SNDP, unlike the special cases discussed
in Section \ref{sec:fgc}, the family of violated cuts do not seem to
have any good structure that we can exploit. Furthermore, there could
be exponentially many violated cuts, even for small $k$. Thus we use a
new cut covering tool given by \cite{gupta2009online}.  At a high
level, the idea is to cover violating \emph{edge} sets instead of
\emph{vertex} sets.  Let $H$ be an existing partial solution
satisfying $\Omega_{\ell-1}$.  We call $F \subseteq E$,
$|F| \leq \ell$ a \emph{violating edge set} if there exists $j$ such
that $(F, \calK_j) \in \Omega_{\ell}$ and $H \setminus F$ disconnects
some pair $u, v \in \calK_j$.  Since $|E| \leq n^2$, there are at most
$O(n^{2\ell})$ violating edge sets. We aim to \emph{cover} $F$ with respect 
to $(u,v)$ by buying a path
in $E \setminus F$ from the component of $H \setminus F$ containing
$u$ to the component of $H \setminus F$ containing $v$.
The following is a simple observation.

\begin{claim}
	$H' \subseteq E \setminus H$ is a feasible solution to the augmentation problem 
	iff for each violating edge set $F$ and every terminal pair $(u,v)$ disconnected 
	in $H \setminus F$, $H'$ contains a path covering $F$ with respect to $(u,v)$.
\end{claim}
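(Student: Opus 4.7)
The plan is to prove each direction separately, using the components of $H \setminus F$ as the organizing device. Recall that $H'$ is feasible exactly when, for every scenario $(F_j, \calK_j) \in \Omega$, every pair $(u,v) \in \calK_j$ is connected in $(H \cup H') \setminus F$ for each $F \subseteq F_j$ with $|F| \le \ell$. When $u$ and $v$ are already connected in $H \setminus F$ there is nothing to do, so the only interesting case is precisely the one named in the statement: $F$ is a violating edge set witnessed by a pair $(u,v)$ that $H \setminus F$ disconnects.

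For the forward direction, assuming $H'$ is feasible, I would take a simple $u$-$v$ path $P$ in $(H \cup H') \setminus F$. Since $H$ and $H'$ are disjoint, each edge of $P$ lies in either $H \setminus F$ or $H' \setminus F$; edges of the first type stay inside a single connected component of $H \setminus F$, while only edges of the second type can move between components. Traversing $P$ from $u$ to $v$, I would locate the first edge that leaves the component of $u$ and the last edge that enters the component of $v$; the sub-path of $P$ bounded by (and including) these two edges consists entirely of $H'$-edges, avoids $F$, and has endpoints in the component of $u$ and the component of $v$ respectively. This is exactly a path covering $F$ with respect to $(u,v)$, contained in $H'$.

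The reverse direction is a direct verification: if $H'$ contains a covering path $P'$ with endpoints $x$ in the component of $u$ and $y$ in the component of $v$ in $H \setminus F$, then concatenating a $u$-$x$ walk in $H \setminus F$, the path $P'$ (whose edges lie in $H' \setminus F$), and a $y$-$v$ walk in $H \setminus F$ produces a $u$-$v$ walk in $(H \cup H') \setminus F$, so $u$ and $v$ are connected there, as required for the scenario $(F_j, \calK_j)$.

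I do not expect a real obstacle; the main thing to be careful about is the bookkeeping in the forward direction, namely that the extracted sub-path really consists of edges of $H'$ rather than $H$. This is immediate from $H \cap H' = \emptyset$ and the observation that any edge of $P$ with both endpoints in the same component of $H \setminus F$ cannot contribute to the first ``component-leaving'' or last ``component-entering'' edge. Combining both directions with the trivial handling of non-violating $F$ then yields the claimed equivalence.
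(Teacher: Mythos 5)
The paper states this claim without proof (it is labeled ``a simple observation''), so there is no argument of the paper's to compare against; the question is whether your proof is correct. Your ``if'' direction is fine: concatenating a $u$--$x$ walk in $H\setminus F$, the covering path, and a $y$--$v$ walk in $H\setminus F$ indeed certifies connectivity in $(H\cup H')\setminus F$.

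The ``only if'' direction, however, has a genuine gap exactly at the step you flag as mere bookkeeping. The subpath of $P$ between the first edge leaving the component of $u$ and the last edge entering the component of $v$ need \emph{not} consist solely of $H'$-edges: after leaving the component of $u$, the path $P$ may enter a third component $Q$ of $H\setminus F$, traverse several edges of $H\setminus F$ inside $Q$, and only then continue via another $H'$-edge. Your observation that component-crossing edges must lie in $H'$ is correct but does not control the edges strictly between the two you selected. Concretely, take $H\setminus F$ with components $\{u\}$, $\{a,b\}$ (joined by an edge $ab\in H\setminus F$), and $\{v\}$, and $H'=\{ua,\,bv\}$: then $H\cup H'$ connects $u$ to $v$ avoiding $F$, yet $H'$ by itself contains no path from the component of $u$ to the component of $v$, since $ua$ and $bv$ do not meet. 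So under the literal reading in which a ``path covering $F$'' is a path all of whose edges lie in $H'$, the forward implication is false, and no extraction argument can rescue it. The statement should be read (as the paper's later Hitting Set reduction does) with the weaker notion that $H'$ need only connect the component of $u$ to the component of $v$ in the graph obtained by contracting each component of $H\setminus F$ --- equivalently, that $u$ and $v$ are connected in $(H\cup H')\setminus F$; under that reading the forward direction is immediate and your reverse direction goes through unchanged. You should either adopt that interpretation explicitly or restate the covering object as an alternating sequence of $H'$-edges and component-internal segments rather than a single $H'$-path.
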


Unfortunately, covering edge sets is more complicated than covering vertex 
sets, since $H \setminus F$ may
have many components and there could be exponentially many paths between components of 
$H \setminus F$. The key 
idea of \cite{gupta2009online} is a tree-based cost-sharing lemma that
allows us to consider only polynomially many paths with a polylogarithmic loss
in the approximation ratio. We begin by providing background on this machinery 
before giving the algorithm and analysis for Bulk-SNDP.

\subsection{Cut-Cover Lemma and Tree Embeddings}

In \cite{gupta2009online}, Gupta, Krishnaswamy, and Ravi consider
the online version of the SNDP problem, where they are initially given a graph 
$G = (V, E)$,
and terminal pairs to be connected arrive in an online fashion. They obtain 
an $O(k \log^3 n)$ approximation via the augmentation framework. For any 
tree $T$, we let $P_T(u,v)$ denote the unique tree path between $u$ and $v$. 
If $T$ has the same vertex set as $G$, we use $P_T(e)$ interchangeably with 
$P_T(u,v)$ for an edge $e = uv \in E(G)$.
The key idea in \cite{gupta2009online} is to show that if we have a 
spanning tree $T$ and a partial solution $H$ that contains relevant tree paths, 
then we can cover all violated edge sets using only the 
fundamental cycles of the tree, i.e. $\{e\} \cup P_T(e)$ for $e \in E$. 
It is not difficult to see that this idea can be extended 
beyond SNDP to non-uniform settings as well. We summarize a slightly generalized
version of their results in the following lemma, referred to as the 
\emph{cut-cover lemma}.

\begin{lemma}[\cite{gupta2009online}]
\label{lem:cutcover}
	Consider a graph $G = (V, E)$, let $s, t \in V$ and let $T$ be any spanning 
	tree of $G$. Let $H \subseteq G$ be a subgraph of 
	$G$ that contains $P_T(s, t)$ and let $F \subseteq E(H)$ be any minimal set of 
	edges of $H$ such that $s$ and $t$ are disconnected in $H \setminus F$.
	We denote by $L$ and $R$ the connected components of $H \setminus F$ containing 
	$s$ and $t$ respectively. Suppose for some $E^* \subseteq E$ that 
	$s$ and $t$ are connected in $E^* \setminus F$. Then, 
	$\exists e \in E^*$ such that the fundamental cycle 
	$\{e\} \cup P_T(e) \setminus F$ connects some vertex in $L$ to some 
	vertex in $R$.
\end{lemma}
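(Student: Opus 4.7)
My plan is a two-stage parity argument: I will first extract from the given $s$-$t$ path $Q \subseteq E^* \setminus F$ a single edge $e^* \in Q$ for which $|F \cap P_T(e^*)|$ is odd, and then show that this odd parity forces the cycle $C_{e^*} \setminus F$ to contain an arc joining a vertex of $L$ to a vertex of $R$. The first preliminary is to exploit the minimality of $F$ to conclude $F = E_H(L,R)$: every $f \in F$ must lie on the $L$-$R$ boundary (else $F \setminus \{f\}$ would still disconnect $s$ from $t$), and every $H$-edge between $L$ and $R$ must lie in $F$ (else $L$ and $R$ would coincide as components of $H \setminus F$). This yields $\delta_H(L) = F$, so any walk in $H$ starting in $L \cup R$ remains in $L \cup R$, and transitions between $L$ and $R$ occur exactly at $F$-edges. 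Applying this to $P_T(s, t) \subseteq E(H)$, which starts at $s \in L$ and ends at $t \in R$, I conclude that $|F \cap P_T(s, t)|$ is odd.

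Define $\pi(w) := |F \cap P_T(s, w)| \bmod 2$ for each $w \in V$; then $\pi(s) = 0$ and $\pi(t) = 1$. Writing $Q = v_0 v_1 \cdots v_k$ with $v_0 = s$ and $v_k = t$, some index $i$ satisfies $\pi(v_{i-1}) \neq \pi(v_i)$. Using the tripod identity in $T$, namely $|F \cap P_T(v_{i-1}, v_i)| \equiv \pi(v_{i-1}) + \pi(v_i) \pmod{2}$ (proved by decomposing the three tree paths $P_T(s, u), P_T(s, v), P_T(u, v)$ through their common meeting vertex), we get $r := |F \cap P_T(v_{i-1}, v_i)|$ is odd. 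Take $e^* := (v_{i-1}, v_i) \in E^*$; note $e^* \notin F$ since $Q$ avoids $F$.

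It remains to analyze $C_{e^*} \setminus F$. The cycle $C_{e^*}$ contains exactly $r$ edges of $F$, all on its tree portion, each with one endpoint labeled $L$ and one labeled $R$. Removing them decomposes the cycle into $r$ arcs whose endpoints are precisely these $2r$ labeled vertices. I argue by contradiction: if every arc had two same-labeled endpoints, then traversing the cycle the label would flip at every $F$-edge and remain constant along every arc, giving exactly $r$ flips in one full loop; consistency then forces $r$ to be even, contradicting the odd parity just established. Hence some arc joins an $L$-vertex to an $R$-vertex and is itself the required path in $C_{e^*} \setminus F$. The step I expect to require the most care is verifying that $P_T(s, t)$ never leaves $L \cup R$ (so that $F$-edges are the only places states can switch); this anchors the parity count of $|F \cap P_T(s, t)|$ and rests subtly on $\delta_H(L) = F$, itself a consequence of the minimality of $F$.
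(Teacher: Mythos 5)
Your proof is correct. The paper itself does not prove this lemma (it is imported from \cite{gupta2009online}), and your argument is essentially the standard one from that reference: minimality gives $F=\delta_H(L)=\delta_H(R)$, hence $|F\cap P_T(s,t)|$ is odd, the parity is distributed over the edges of the $E^*\setminus F$ path via the tree-path (tripod/cycle-space) identity to find one $e^*$ with $|F\cap P_T(e^*)|$ odd, and the odd number of $L$--$R$ flips around the fundamental cycle forces some $F$-free arc to join $L$ to $R$. All the delicate points (every $F$-edge is an $L$--$R$ edge, $P_T(s,t)$ never leaves $L\cup R$, degenerate single-vertex arcs) are handled or harmless, so there is nothing to fix.
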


In order to use the cut-cover lemma to obtain an approximation algorithm, one
must obtain a good spanning tree. In particular, we will need a tree in which 
the cost $P_T(e)$ is not too much more than $c(e)$. We use the following result on 
probabilistic distance-preserving spanning tree embeddings.

\begin{lemma}[\cite{abraham2008nearly,elkin2005lower}]
\label{lem:spanningtree_embeddings}
	Given a graph $G = (V, E)$ with non-negative edge costs $c(e)$, there exists 
	a distribution $\calT$ over spanning trees $T = (V, E_T)$, $E_T \subseteq E$
	such that for all $u, v \in V$, $\E[d_T(u,v)] \leq 
	O(\log n \log \log n) d_G(u,v)$. 
	The metrics $d_T$ and $d_G$ are defined by the shortest paths in $T$ and $G$ 
	respectively according to $c(e)$.
\end{lemma}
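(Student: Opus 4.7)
The statement concerns a distribution over genuine \emph{spanning subtrees} of $G$ (using only edges of $E$) achieving $O(\log n \log \log n)$ expected distortion. Since classical FRT tree embeddings produce abstract tree \emph{metrics} rather than spanning trees, the plan is to follow the petal-decomposition framework of Abraham--Bartal--Neiman that upgrades hierarchical random partitioning into an actual spanning subgraph tree.

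My first step would be to build a laminar hierarchical decomposition of $V$ at geometrically decreasing scales $2^i$. At scale $i$, each cluster should have strong diameter in $G$ at most $O(2^i)$, and the decomposition is produced by a random process (random center order plus random radius) so that any pair $u,v$ with $d_G(u,v) = d$ is separated at scale $i$ with probability $O(d/2^i)$. Summing over $O(\log n)$ scales, this gives the familiar $O(\log n)$ factor in expected stretch used in FRT-type arguments.

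The second, main step is to realize this decomposition as a spanning subgraph tree $T \subseteq E$. For each cluster at each scale I would use a shortest-path tree in $G$ rooted at the cluster's center, and glue these trees across scales at shared centers to form a single spanning tree. The difficulty is that shortest-path trees within a cluster can have diameter much larger than the cluster's strong diameter $O(2^i)$. This is where the petal decomposition enters: rather than a plain ball around a center, each cluster is formed as a ball plus a controlled collection of ``petals,'' thin regions extending along shortest paths, chosen so that the shortest-path tree inside the cluster approximates $G$-distances within the cluster up to a factor of $O(\log\log n)$. The stretch inside each cluster is then $O(2^i \log\log n)$ rather than $O(2^i)$.

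Finally, the distortion analysis conditions on the smallest scale $i$ at which $u$ and $v$ lie in a common cluster; there the tree path routes up to the center and back down, paying $O(2^i \log\log n)$. Multiplying by the separation probability $O(d_G(u,v)/2^i)$ and summing over the $O(\log n)$ scales yields the claimed $O(\log n \log\log n)$ bound. The main obstacle I would expect is the petal construction itself: one must carefully choose petal widths so that (i) the resulting clusters still satisfy the random-partition padding property needed for the $O(\log n)$ scale factor, and (ii) the shortest-path tree inside each cluster has stretch only $O(\log\log n)$ relative to in-cluster $G$-distances. Balancing these two constraints, rather than the summation argument itself, is the technical heart of the result and is where the $\log\log n$ loss over classical FRT comes from.
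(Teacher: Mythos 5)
The paper does not prove this lemma at all: it is imported verbatim as a black box from Elkin--Emek--Spielman--Teng and Abraham--Bartal--Neiman, so there is no in-paper argument to compare against. Judged as an account of how those cited works proceed, your outline is pointed in the right direction -- the petal-decomposition framework of Abraham and Neiman is indeed the route to the $O(\log n \log\log n)$ bound, and you correctly identify the central tension (FRT-style hierarchical partitions give tree \emph{metrics}, not subtrees of $G$, and the fix is to realize each cluster by a shortest-path structure whose radius is controlled).

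As a proof, however, the proposal has genuine gaps, all concentrated exactly where you say the ``technical heart'' is. First, the claim that a pair at distance $d$ is separated at scale $2^i$ with probability $O(d/2^i)$ while the cluster is simultaneously realized by an in-cluster tree of radius $O(2^i\log\log n)$ is precisely the theorem of the cited papers; asserting that the petals can be ``balanced'' to achieve both properties is a restatement of the goal, not an argument. Second, your summation ``over $O(\log n)$ scales'' is not available for free in the spanning-tree setting: the number of scales is governed by the aspect ratio of $c$, which is unbounded in $n$, and the cited constructions need a separate charging argument (bucketing edges by length and showing each pair only pays at $O(\log n)$ relevant levels) to avoid this dependence. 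Third, gluing per-scale shortest-path trees ``at shared centers'' does not by itself produce an acyclic subgraph; the actual constructions are recursive, with each child cluster attached to its parent by a single designated edge, and the acyclicity and radius bounds are maintained inductively. None of this matters for the present paper, which only uses the lemma as a cited primitive, but your proposal should be read as a roadmap to the literature rather than a proof.
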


\subsection{Algorithm for Augmentation Problem}

We show how to apply the cut cover lemma of \cite{gupta2009online} to solve the 
Bulk-SNDP augmentation problem from $\ell-1$ to $\ell$, where $H_{\ell-1}$ is the partial 
solution for $\ell-1$.
Let $T$ be a tree sampled from the distribution 
$\calT$ given by Lemma \ref{lem:spanningtree_embeddings}.
Let $H_P = \cup_{j \in [m]} \cup_{(u, v) \in \calK_j} P_T(u,v)$. Let 
$H$ denote $H_{\ell-1} \cup H_P$.
We show that we can solve the 
augmentation problem via a reduction to Hitting Set. 

In the Hitting Set problem, we are given a universe $\calU$ of elements along
with a collection of sets $\calS = \{S_1, \dots, S_m\}$, where each 
$S_i \subseteq \calU$. Each element $e$ in $\calU$ has a cost $c(e)$. The goal is to 
find the minimum cost subset of elements $X \subseteq \calU$ such that 
$X \cap S_i \neq \emptyset$ for all $i \in [m]$. We say an element $x_j \in \calU$
\emph{hits} a set $S_i$ if $x_j \in S_i$. It is well known that 
Hitting Set admits an $O(\log m)$ approximation in time 
$\poly(n,m)$. We define the following Hitting Set instance:
\begin{itemize}
	\item \emph{Sets:} Each pair $(F, uv)$ corresponds to a set,
	where $F \subseteq E$ is a violating edge set that separates $u$ 
	from $v$ in $H$, and $\exists \calK_j$ with $uv \in \calK_j$ and
	$(F, \calK_j) \in \Omega_{\ell}$,
	\item \emph{Elements:} The set of elements is $E \setminus E(H)$.
	An edge $e$ \emph{hits} a tuple $(F, uv)$ if $u$ and $v$ are connected in 
	$(H  \cup \{e\} \cup P_T(e)) \setminus F$. The cost of this element 
	is $c(\{e\} \cup P_T(e))$. 
\end{itemize}

Note that the number of violating edge sets $F$ is at most
$\binom {m}{\ell-1} \leq n^{2\ell}$, and the number of elements is at
most $n^2$. There are polynomial-time algorithms that given a Hitting
Set instance (which is the same as Set Cover in the dual set system)
with $\alpha$ sets and $\beta$ elements outputs an $O(\log \alpha)$
approximate feasible solution
\cite{Vazirani-approx,WilliamsonS-approx}.  Thus, in time
$n^{O(\ell)}$, we can use such an algorithm to obtain an
$O(\log(n^{2\ell})) = O(\ell \log n)$-approximation to the preceding
instance of Hitting Set that we set up. Let $E'$ be such a solution
and let $H' = \cup_{e \in E'} (\{e\} \cup P_T(e))$. We return
$H_P \cup H'$. 

\subsection{Cost and Correctness Proofs}

We begin by proving correctness of the above algorithm. 

\begin{lemma}
\label{lem:sndp_correctness}
	$H \cup H'$ satisfies all scenarios in $\Omega_{\ell}$.
\end{lemma}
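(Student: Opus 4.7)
The proof will proceed scenario-by-scenario. Fix any $(F_j, \calK_j) \in \Omega_\ell$ and any pair $(u,v) \in \calK_j$; I need to show $u$ and $v$ are connected in $(H \cup H') \setminus F_j$. I may assume that $H \setminus F_j$ already disconnects $u$ from $v$, for otherwise $(H \cup H') \setminus F_j \supseteq H \setminus F_j$ already contains a $uv$-path and there is nothing to prove.

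Under this assumption, $F_j$ itself satisfies all three conditions to be a violating edge set: $|F_j| \leq \ell$, the pair $(F_j, \calK_j)$ trivially lies in $\Omega_\ell$, and by assumption $H \setminus F_j$ disconnects the pair $(u,v) \in \calK_j$. Since $F_j$ separates $u$ from $v$ in $H$ and $uv \in \calK_j$ with $(F_j, \calK_j) \in \Omega_\ell$, the algorithm's construction places $(F_j, uv)$ directly as one of the sets in the Hitting Set instance. This is the one observation driving the whole argument.

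Because $E'$ is a feasible hitting set for this instance, some $e \in E'$ hits $(F_j, uv)$, which by the definition of hitting means that $u$ and $v$ are connected in $(H \cup \{e\} \cup P_T(e)) \setminus F_j$. Since $H' = \bigcup_{e \in E'} (\{e\} \cup P_T(e))$ by construction, we have $\{e\} \cup P_T(e) \subseteq H'$, hence $u$ and $v$ are connected in $(H \cup H') \setminus F_j$, as required.

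The statement is essentially a definition-chase and there is no real obstacle — the only potential pitfall is forgetting that we want the hitting-set guarantee applied to $F_j$ itself rather than some proper subset of it (in which case removing the extra edges in $F_j$ might break the connecting path). That pitfall does not arise here because the set collection in the Hitting Set instance is indexed by all violating edge sets, so $F_j$ appears directly. Note that the cut-cover lemma and the tree embedding are not used in this correctness proof; they will come into play only in the complementary cost analysis, where one must show that an optimal augmentation solution induces a feasible hitting set of comparable cost (and it is there that minimal cuts and the fundamental-cycle structure guaranteed by Lemma~\ref{lem:cutcover} become essential).
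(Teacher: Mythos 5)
Your proof is correct and follows essentially the same route as the paper's: identify $(F,uv)$ as one of the sets in the Hitting Set instance, then invoke feasibility of $E'$ to get an edge whose fundamental cycle reconnects the pair. One point deserves a caveat, though. You assert that the cut-cover lemma plays no role in the correctness argument, but the paper's own proof of this lemma does invoke Lemma~\ref{lem:cutcover}, and for a reason: the statement ``$E'$ is a feasible solution, hence it hits $(F,uv)$'' is only meaningful if the set $(F,uv)$ is non-empty, i.e.\ if some edge $e \in E \setminus E(H)$ actually hits it --- otherwise the Hitting Set instance is infeasible and the algorithm cannot return a feasible $E'$ at all. The paper certifies non-emptiness inside the correctness proof by applying the cut-cover lemma to produce such an edge (using that $F$ is minimal because $H$ already satisfies $\Omega_{\ell-1}$, and that the original instance is feasible so $u,v$ are connected in $E \setminus F$). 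Your deferral of this to the cost analysis is acceptable only because Lemma~\ref{lem:sndp_cost} establishes that the optimal augmentation $E^*$ is itself a feasible hitting set, which simultaneously guarantees the instance is feasible; if you present the correctness proof standalone, you should state this dependency explicitly rather than claim the cut-cover lemma is unused.
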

\begin{proof}
	Suppose for the sake of contradiction that there exists some $(F, \calK_j) \in
	\Omega_{\ell}$ not satisfied by $H \cup H'$. Let $(u,v) \in \calK_j$ be a 
	terminal pair disconnected in $(H \cup H') \setminus F$. By construction,
	$(F, uv)$ must have been a set in the Hitting Set instance above. 
	By Lemma \ref{lem:cutcover}, there exists some edge $e$ such that 
	$(\{e\} \cup P_T(e)) \setminus F$ connects the component of $H \setminus F$ 
	containing $u$ to the component of $H \setminus F$ containing $v$. In 
	particular, this means that $u$ and $v$ are connected in 
	$(H \cup \{e\} \cup P_T(e)) \setminus F$. Thus
	there exists an element in the Hitting Set instance that hits $(F, uv)$. 
	Since $E'$ is a feasible solution, it must contain an element that hits $(F, uv)$,
	so $H'$ must contain a fundamental cycle covering $(F, uv)$.
\end{proof}

Next, we bound the expected cost of this algorithm.

\begin{lemma}
\label{lem:sndp_h0_cost}
  $\E[c(H_P)] \leq O(\log n \log \log n) \opt$.
\end{lemma}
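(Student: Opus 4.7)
The plan is to bound $c(H_P)$ pathwise against an optimal Bulk-SNDP solution $H^*$ (so $c(H^*)=\opt$), and only at the very end pass to expectation via Lemma~\ref{lem:spanningtree_embeddings}. The key observation is that every required pair $(u,v)\in\bigcup_j \calK_j$ must remain connected in $H^*\setminus F_j$ for some $j$, and hence in $H^*$ itself. Let $D_1,\dots,D_\ell$ denote the connected components of $H^*$ viewed as an edge-induced subgraph; then each required pair lies inside a single component $D_i$, so every tree path $P_T(u,v)$ contributing to $H_P$ is contained in the minimum subtree $T(V(D_i))$ of $T$ spanning $V(D_i)$. This yields
\[
c(H_P) \;\leq\; \sum_{i=1}^\ell c\bigl(T(V(D_i))\bigr).
\]

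The next step is to bound each $c(T(V(D_i)))$ by $2\,c_T(D_i)$, where $c_T(D_i):=\sum_{e\in D_i}d_T(e)$ is the $d_T$-cost of $D_i$. This is the standard Steiner-tree/TSP argument carried out in the tree metric $d_T$: a DFS of any spanning tree of $D_i$, interpreted in $d_T$, is a walk of length at most $2c_T(D_i)$ that visits every vertex of $V(D_i)$; shortcutting between successive appearances of $V(D_i)$-vertices gives a TSP tour of $V(D_i)$ in $d_T$ of length at most $2c_T(D_i)$, which upper-bounds the MST of $V(D_i)$ in $d_T$, which in turn upper-bounds $c(T(V(D_i)))$ (realize each MST edge by its $T$-path and discard repeated edges). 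Because the $D_i$ are edge-disjoint, summing gives
\[
c(H_P) \;\leq\; 2\sum_i c_T(D_i) \;=\; 2\,c_T(H^*) \;=\; 2\sum_{e\in H^*}d_T(e).
\]

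Taking expectation over $T\sim\calT$ and applying Lemma~\ref{lem:spanningtree_embeddings} edge by edge then finishes:
\[
\E[c(H_P)] \;\leq\; 2\sum_{e\in H^*}\E[d_T(e)] \;\leq\; O(\log n\log\log n)\sum_{e\in H^*}c(e) \;=\; O(\log n\log\log n)\cdot\opt.
\]
The main subtlety is the Steiner-tree step, not the embedding step: the naive union bound $c(H_P)\leq\sum_{j,(u,v)}c(P_T(u,v))$ would only give $O(\log n\log\log n)\sum_{(u,v)} d_G(u,v)$, which can exceed $\opt$ by a factor of $|\bigcup_j\calK_j|$. Aggregating the per-pair tree paths into the single quantity $c_T(H^*)$ via the connectivity structure of the optimum — i.e., identifying the correct ``global'' object to charge against — is the crucial idea.
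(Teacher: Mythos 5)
Your proof is correct, and it ultimately charges $c(H_P)$ against the same global quantity the paper uses, namely $\sum_{e\in E^*} d_T(e)$ for an optimal solution $E^*$, before applying Lemma~\ref{lem:spanningtree_embeddings} edge by edge. The difference is the middle step. The paper gets the bound $c(H_P)\le \sum_{e\in E^*} d_T(e)$ by a direct containment: setting $H_P^* = \bigcup_{e\in E^*} P_T(e)$, for any terminal pair $(u,v)$ the union of the tree paths $P_T(e)$ over the edges $e$ of a $u$--$v$ path in $E^*$ is a connected subgraph of the tree $T$ containing both $u$ and $v$, hence contains the unique tree path $P_T(u,v)$; therefore $H_P\subseteq H_P^*$, with no constant loss and no need for any decomposition. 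Your route --- splitting $H^*$ into connected components $D_i$, noting each $P_T(u,v)$ lies in the Steiner subtree $T(V(D_i))$, and bounding that subtree by $2\sum_{e\in D_i} d_T(e)$ via the standard DFS-doubling/shortcutting/MST comparison --- is also valid and loses only a factor of $2$, which is absorbed by the big-O. Both arguments correctly identify that the naive per-pair union bound is too weak and that one must aggregate against the optimum's edges in the tree metric; the paper's tree-path containment is simply the more economical way to establish that aggregation.
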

\begin{proof}
	Fix an optimal solution $E^*$ to the given instance of Bulk-SNDP. Let 
	$H_P^* = \cup_{e \in E^*} P_T(e)$ be the tree paths connecting the endpoints
	of all edges in $E^*$. By Lemma \ref{lem:spanningtree_embeddings}, 
	$\E[c(H_P^*)] \leq \sum_{e \in E^*} \E[c(P_T(e))] \leq 
	\sum_{e \in E^*} O(\log n \log \log n) c(e) = O(\log n \log \log n) \opt$.
	Finally, note that for every terminal pair $(u,v)$, $H_P^*$ must contain 
	the unique tree path between $u$ and $v$, since $E^*$ connects all 
	terminal pairs so $H_P^*$ does as well. Therefore, $H_P \subseteq H_P^*$, 
	so $\E[c(H_P)] \leq \E[c(H_P^*)] \leq O(\log n \log \log n) \opt$.  
\end{proof}

\begin{lemma}
\label{lem:sndp_cost}
  $\E[c(H')] \leq O(\ell \log^2n \log \log n) \opt$.
\end{lemma}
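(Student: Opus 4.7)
The plan is to bound the cost of $H'$ by relating it to the cost of an optimal Hitting Set solution for the instance constructed above, and in turn to bound that by exhibiting a feasible Hitting Set solution whose expected cost is $O(\log n \log \log n) \cdot \opt$. Combining the $O(\ell \log n)$-approximation factor of the Hitting Set algorithm with this bound will yield the claim.

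First I would fix an optimal Bulk-SNDP solution $E^*$ and argue that $E^*$ itself (viewed as a collection of elements of our Hitting Set instance) is a feasible hitting set. Take any set $(F, uv)$ in the Hitting Set instance: by construction, $F$ is a violating edge set of $H$ that separates $u$ from $v$, and $uv$ is a terminal pair of some scenario $(F_j, \calK_j) \in \Omega_{\ell}$ with $F \subseteq F_j$. Since $E^*$ is feasible for Bulk-SNDP, $u$ and $v$ are connected in $E^* \setminus F$. Because $H \supseteq H_P \supseteq P_T(u,v)$, we may apply the cut-cover lemma (Lemma \ref{lem:cutcover}) to conclude that some $e \in E^*$ has the property that $\{e\} \cup P_T(e) \setminus F$ connects the component of $H \setminus F$ containing $u$ to the one containing $v$. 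This exactly means $e$ hits $(F, uv)$, so $E^*$ is feasible.

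Next I would bound the expected cost of this feasible solution under the Hitting Set cost function $c(\{e\} \cup P_T(e))$. By linearity and Lemma \ref{lem:spanningtree_embeddings},
\begin{align*}
\E\Bigl[\sum_{e \in E^*} c(\{e\} \cup P_T(e))\Bigr]
&\le c(E^*) + \sum_{e \in E^*} \E[c(P_T(e))] \\
&\le \opt + O(\log n \log \log n)\, c(E^*) \\
&= O(\log n \log \log n)\, \opt,
\end{align*}
since $\E[c(P_T(e))] \le O(\log n \log \log n)\, c(e)$ for each edge $e$. Thus the optimum of the Hitting Set instance has expected cost at most $O(\log n \log \log n)\, \opt$.

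Now apply the $O(\log \alpha) = O(\ell \log n)$-approximation for Hitting Set, where $\alpha \le n^{O(\ell)}$ is the number of sets. The returned $E'$ satisfies $\sum_{e \in E'} c(\{e\} \cup P_T(e)) \le O(\ell \log n) \cdot O(\log n \log \log n)\, \opt = O(\ell \log^2 n \log \log n)\, \opt$ in expectation. Since $H' = \bigcup_{e \in E'}(\{e\} \cup P_T(e))$, we have $c(H') \le \sum_{e \in E'} c(\{e\} \cup P_T(e))$, which gives the desired bound. The main conceptual step is the first one, namely recognizing that the cut-cover lemma makes $E^*$ itself a feasible Hitting Set solution; once that is in place, the rest is a straightforward combination of the tree embedding's stretch bound and the Hitting Set approximation guarantee.
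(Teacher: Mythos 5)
Your proof is correct and takes essentially the same route as the paper's: both arguments use the cut-cover lemma to certify that an optimal solution is a feasible hitting set, then combine the $O(\ell\log n)$ Hitting Set guarantee with the $O(\log n\log\log n)$ stretch of the spanning-tree embedding (your ordering of the two bounds, taking the expectation over $T$ on the fixed set $E^*$ before invoking the per-instance approximation ratio, is if anything slightly cleaner). The only detail you gloss over is the minimality hypothesis on $F$ required by Lemma~\ref{lem:cutcover}, which the paper checks explicitly: it holds because $H$ already satisfies every scenario in $\Omega_{\ell-1}$, hence every strict subset of a width-$\ell$ violating set.
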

\begin{proof}
	Let $E^*$ be an optimal solution to the augmentation problem. We will
	apply the cut-cover lemma (Lemma \ref{lem:cutcover}) to show that $E^*$
	is a feasible solution to the Hitting Set instance. First, note that $T$
	is a spanning tree and
	$H$ contains all tree paths connecting terminal pairs, since $H$ contains $H_P$. 
	Next, note that for all $(F,uv)$ corresponding to sets in the Hitting Set
	instance, $E^*$ connects $u$ and $v$ in $E^* \setminus F$, since $E^*$ is a 
	feasible solution to the augmentation problem. Finally, note that any 
	violated set $F$ must be minimal, since $H$ satisfies all scenarios in 
	$\Omega_{\ell-1}$, which includes all strict subsets of $F$. Thus, by 
	Lemma \ref{lem:cutcover}, for each $(F, uv)$ corresponding to a set in the 
	Hitting Set instance, there exists some $e \in E^*$ that hits $(F, uv)$. 
	Therefore, $E^*$ is a feasible solution to the Hitting Set instance.
	Since $E'$ is an $O(\ell \log n)$-approximation to the Hitting Set problem, 
	$c(E') \leq O(\ell \log n) c(E^*) = O(\ell \log n) \opt$.
	By Lemma \ref{lem:spanningtree_embeddings}, 
	\[\E[c(H')] = \sum_{e \in E'} (c(e)) + c(P_T(e)) \leq O(\log n \log\log n) c(E')
	\leq O(\ell \log^2 n \log\log n) \opt.\] 
\end{proof}

Thus the total expected cost of the augmentation algorithm is at most 
$\E[c(H_P)] + \E[c(H')] \leq O(\ell \log^2 n \log \log n) \opt$.
Summing over all $\ell \in [k]$, we get a total cost of 
$O(k^2 \log^2n \log \log n) \opt$,
concluding the proof of Theorem \ref{thm:bulk-sndp}. Corollary \ref{cor:rsndp}
for RSNDP follows immediately. For Flex-SNDP, we can start with a 2-approximation 
for $(p,0)$-Flex-SNDP, since this is equivalent to EC-SNDP where all terminals 
have connectivity requirement $p$, and augment from $(p, q-1)$ to $(p, q)$. 
Thus the total cost is $\sum_{\ell \in \{p, \dots, p+q\}} 
O(\ell \log^2 n \log \log n) \opt \leq
O(q(p+q) \log^2n \log \log n) \opt$, concluding 
the proof of Corollary \ref{cor:flex-sndp}.

\begin{remark}
  In the preceding proof of Corollary \ref{cor:flex-sndp}, we assumed
  that all demand pairs $s_i, t_i$ have the same connectivity
  requirement $(p,q)$.  We claim that the same approximation bound can
  be achieved in the setting where the requirement for pairs can be
  different with the condition that $p_i \le p$ and $q_i \le q$ for all $i$.  We start by
  solving an EC-SNDP instance in which the terminal pair $s_i, t_i$
  has connectivity requirement $p_i$.  We then apply the cut cover
  idea approach as described, in $q$ stages. In stage $j$, a pair
  $s_i,t_i$ is unsatisfied iff $q_i \ge j$; we augment connectivity
  for each unsatisfied pair from $(p_i, j-1)$ to $(p_i, j)$. One can
  see that the correctness and cost analysis is valid in this more general setting.
\end{remark}

\section{LP-Based Approximation for Bulk-SNDP}
\label{sec:sndp-racke}

In this section, we prove Theorem~\ref{thm:bulk-group} and the
resulting Corollaries~\ref{cor:flex-group} and \ref{cor:rgroup}. We
obtain poly-logarithmic approximations for Bulk-SNDP, Flex-SNDP, and
RSNDP with respect to their LP relaxations (see Section
\ref{sec:prelim}), and obtain approximations for the group
connectivity variants as well. We describe the algorithm and proofs
for the group connectivity setting since it generalizes SNDP.  Recall
that we are given as input a graph $G: (V, E)$ with cost function
$c: E \to \R_{\geq 0}$ and a set of scenarios
$\Omega = \{(F_j, \calK_j): j \in [m]\}$, where each $\calK_j$
consists of set pairs $(S_i, T_i)$, $S_i, T_i \subseteq V$.  Assume we
have a partial solution $H$ satisfying all scenarios in
$\Omega_{\ell-1}$. We augment $H$ to satisfy scenarios in
$\Omega_{\ell}$.

As in Section \ref{sec:sndp_cutcover}, our high level approach is to cover 
violated edge cuts rather than vertex cuts. We do so by adapting an LP 
rounding technique given by \cite{ChenLLZ22} for Group SNDP. We begin with some 
background on \racke's capacity-based probabilistic tree embeddings and 
Tree Rounding algorithms for Group Steiner tree.

\subsection{\racke Tree Embeddings}
\label{subsec:sndp_racke}
The results in this section use \racke's capacity-based probabilistic
tree embeddings. We borrow the notation from~\cite{ChenLLZ22}. 
Given $G = (V, E)$ with capacity $x: E \to \R^+$
on the edges, a capacitated tree embedding of $G$ is a tree $\calT$,
along with two mapping functions $\calM_1: V(\calT) \rightarrow V(G)$
and $\calM_2: E(\calT) \rightarrow 2^{E(G)}$ that satisfy some
conditions. $\calM_1$ maps each vertex in $\calT$ to a vertex in $G$,
and has the additional property that it gives a one-to-one mapping
between the leaves of $\calT$ and the vertices of $G$.  $\calM_2$ maps
each edge $(a,b) \in E(\calT)$ to a path in $G$ between $\calM_1(a)$
and $\calM_1(b)$.  For notational convenience we view the two mappings
as a combined mapping $\calM$. For a vertex $u \in V(G)$ we use
$\calM^{-1}(u)$ to denote the leaf in $\calT$ that is mapped to $u$ by
$\calM_1$. For an edge $e \in E(G)$ we use $\calM^{-1}(e) = \{f \in
E(\calT)\mid e \in \calM_2(f)\}$. It is sometimes convenient to
view a subset $S \subseteq V(G)$ both as vertices in $G$ and
also corresponding leaves of $\calT$.

The mapping $\calM$ induces a capacity function $y: E(\calT)
\rightarrow \mathbb{R}_+$ as follows. Consider $f=(a,b) \in
E(\calT)$. $\calT - f$ induces a partition $(A,B)$ of $V(T)$ which in
turn induces a partition/cut $(A',B')$ of $V(G)$ via the mapping
$\calM$: $A'$ is the set of vertices in $G$ that correspond to the
leaves in $A$ and similarly $B'$. We then set $y(f) = \sum_{e \in
  \delta(A')} x(e)$, in other words $y(f)$ is the capacity of cut
$(A',B')$ in $G$. The mapping also induces loads on the edges of
$G$. For each edge $e \in G$, we let $\load(e) = \sum_{f \in E(\calT):
  e \in M(f)} y(f)$.  The relative load or \emph{congestion} of $e$ is
$\rload(e) = \load(e)/x(e)$.  The congestion of $G$ with respect to a
tree embedding $(\calT, \calM)$ is defined as $\max_{e \in E(G)}
\rload(e)$. Given a probabilistic distribution $\calD$ on trees
embeddings of $(G,x)$ we let
$\beta_{\calD} = \max_{e \in E(G)}  \E_{(\calT,\calM) \sim \calD} \rload(e)$
denote the maximum expected congestion.
\racke showed the following fundamental result on probabilistic embeddings 
of a capacitated graph into trees.

\begin{theorem}[\cite{Racke08}]
\label{thm:racke}
  Given a graph $G$ and $x: E(G) \to \R^+$, there exists a probability
  distribution $\calD$ on tree embeddings such that $\beta_{\calD} =
  O(\log |V(G)|)$.  All trees in the support of $\calD$ have height at
  most $O(\log(nC))$, where $C$ is the ratio of the largest to smallest
  capacity in $x$. Moreover, there is randomized polynomial-time
  algorithm that can sample a tree from the distribution $\calD$.
\end{theorem}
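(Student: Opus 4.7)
The plan is to follow \racke's approach in two phases: first construct a single ``good'' hierarchical tree embedding based on cut decomposition, then use a minimax/boosting argument to convert a family of such trees into a distribution $\calD$ with low expected relative load.

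For the single-tree construction, I would recursively partition $V(G)$ using approximately balanced, low-capacity cuts. Starting from $V$, I repeatedly split each cluster using a sparse-cut subroutine (a Leighton--Rao style approximation for edge expansion with respect to $x$), producing a laminar family whose Hasse diagram is a rooted tree $\calT$. The leaves of $\calT$ are in bijection with $V(G)$, giving $\calM_1$. For each tree edge $f=(a,b)$ corresponding to induced cut $(A',B')$ in $G$, set $y(f) = \sum_{e \in \delta(A')} x(e)$, and define $\calM_2(f)$ by routing $y(f)$ units of flow between the two child clusters via a concurrent multicommodity flow and extracting canonical paths. Because each balanced split reduces cluster capacity by a constant factor, the recursion depth is $O(\log(nC))$, giving the claimed height bound.

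The second phase is the boosting step needed to achieve $\beta_{\calD} = O(\log n)$. A direct analysis of a single tree from the above construction gives only a weaker bound (roughly $O(\log^2 n)$). To obtain the sharp $O(\log n)$, I would invoke a minimax argument: show that for \emph{any} weight assignment $\lambda : E(G) \to \R_{\geq 0}$ there exists a tree embedding $(\calT, \calM)$ satisfying $\sum_e \lambda(e)\,\rload(e) \le O(\log n) \sum_e \lambda(e)$. Equivalently, for any demand the graph admits a tree-based oblivious routing of congestion $O(\log n)$. Applying the minimax theorem---constructively, via a multiplicative-weights procedure that iteratively reweights edges and rebuilds a tree on the reweighted instance---yields a distribution $\calD$ whose worst-case expected $\rload$ is $O(\log n)$. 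Polynomial-time sampling follows by running the multiplicative-weights procedure for $\mathrm{poly}(\log n, \log(nC))$ rounds, each round being a polynomial-time call to the sparse-cut subroutine, and sampling uniformly among the trees produced.

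The main obstacle is the per-weighting $O(\log n)$ bound itself: the naive cut-based recursion only delivers $O(\log^2 n)$, and trimming the extra logarithm requires a more delicate decomposition that exploits the reweighted capacity structure (via cut-matching-game style arguments together with LP duality, as in \racke's original proof). The other technical hurdle is ensuring that $\calM_2(f)$ can be produced as an actual path (not merely a flow) without blowing up congestion, which is handled by standard flow-decomposition arguments coupled with careful ordering of the hierarchical splits. Once these are in place, the height claim, the congestion bound, and polynomial-time samplability all follow from the recursion depth, the minimax conclusion, and the iteration count of the boosting procedure, respectively.
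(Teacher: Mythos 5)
This theorem is not proved in the paper at all: it is \racke's capacity-based tree-embedding theorem, imported as a black box via the citation to \cite{Racke08} (the height bound and the polynomial-time samplability are likewise taken from that work). There is therefore no in-paper argument to compare yours against; the paper only uses the stated guarantee, through Corollary~\ref{cor:racketreeflow} and the augmentation analysis of Section~\ref{sec:sndp-racke}.

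As a reconstruction of \racke's actual proof, your sketch has the right two-phase shape --- a per-instance hierarchical tree construction, followed by a minimax/multiplicative-weights argument that converts single-tree guarantees into a distribution $\calD$ with $\beta_{\calD} = O(\log n)$ --- and the height and samplability claims do follow from recursion depth and iteration count as you indicate. But the heart of the theorem, the per-weighting $O(\log n)$ bound, is precisely the step your outline defers: the balanced sparse-cut recursion you describe is the older (2002-era) route and inherently loses extra logarithmic factors, and the cut-matching game is not how the final bound is obtained. The mechanism in \cite{Racke08} is a duality/transfer theorem showing that the congestion-minimizing tree embedding problem reduces to the \emph{distance}-based (FRT-style) tree embedding problem, so each round of the multiplicative-weights procedure calls an $O(\log n)$-distortion metric embedding on a derived length function rather than rebuilding a cut decomposition. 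Without that reduction or an equivalent, your argument does not reach $O(\log n)$. For the purposes of this paper, none of this needs to be reproved; citing \cite{Racke08} is the intended ``proof.''
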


In the rest of the paper we use $\beta$ to denote the guarantee
provided by the preceding theorem where $\beta =
O(\log n)$ for a graph on $n$ nodes. In order to use these probabilistic 
embeddings to route flow, we need the following corollary, where we use
$\maxflow_H^z(A,B)$ to denote the maxflow between two disjoint vertex
subsets $A,B$ in a capacitated graph $H$ with capacities given by $z:
E(H) \rightarrow \mathbb{R}_+$.

\begin{corollary}
\label{cor:racketreeflow}
  Let $\calD$ be the distribution guaranteed in Theorem~\ref{thm:racke}.
  Let $A, B \in V(G)$ be two disjoint sets. Then \\
  (i) for any tree
  $(\calT,\calM)$ in $\calD$, $\maxflow_G^x(A, B) \leq 
  \maxflow_\calT^y(\calM^{-1}(A), \calM^{-1}(B))$ and \\(ii) $\frac 1 \beta
  \E_{(\calT,\calM) \sim \calD}[\maxflow_\calT^y(\calM^{-1}(A),
  \calM^-(B))] \leq \maxflow_G^x(A, B)$.
\end{corollary}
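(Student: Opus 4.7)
The plan is to prove the two inequalities separately: (i) via a direct routing of a $G$-flow into $\calT$, and (ii) via LP duality combined with the definition of $\load$.

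For (i), given a feasible $A$-to-$B$ flow $g$ in $G$ of value $\maxflow_G^x(A,B)$, I will build a flow in $\calT$ as follows: for every edge $e = uv \in E(G)$ carrying $g(e)$ units, send those $g(e)$ units along the unique tree path in $\calT$ between $\calM^{-1}(u)$ and $\calM^{-1}(v)$. Conservation at every leaf $\calM^{-1}(w)$ is inherited from conservation of $g$ at $w$, so this produces a flow between $\calM^{-1}(A)$ and $\calM^{-1}(B)$ of the same value. Feasibility is the key check: removing a tree edge $f$ produces a partition $(U_f, V_f)$ of $V(\calT)$ and, through $\calM_1$ on the leaves, an induced partition $(U'_f, V'_f)$ of $V(G)$. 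The total tree flow through $f$ is exactly the total $g$-flow across the $G$-cut $(U'_f, V'_f)$, which is at most $\sum_{e\in\delta_G(U'_f)} x(e) = y(f)$ by definition of $y$. Hence the $\calT$-flow is feasible, giving $\maxflow_\calT^y(\calM^{-1}(A), \calM^{-1}(B)) \ge \maxflow_G^x(A,B)$.

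For (ii), I will pass through the dual (distance) formulation of max-flow/min-cut. Let $d^*: E(G) \to \R_{\ge 0}$ be an optimal fractional cut for $A$-to-$B$ flow in $G$, so that every $u\in A, v\in B$ satisfies $\operatorname{dist}_{d^*}(u,v) \ge 1$ and $\sum_{e} d^*(e)\, x(e) = \maxflow_G^x(A,B)$. For a fixed tree $(\calT,\calM)$, define $\tilde d: E(\calT) \to \R_{\ge 0}$ by $\tilde d(f) = \sum_{e \in \calM_2(f)} d^*(e)$. For any tree path $P$ from $\calM^{-1}(u)$ to $\calM^{-1}(v)$ with $u\in A, v\in B$, concatenating the $G$-paths $\calM_2(f)$ for $f \in P$ gives a $G$-walk from $u$ to $v$, whose $d^*$-length is at least $\operatorname{dist}_{d^*}(u,v) \ge 1$. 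Thus $\tilde d$ is feasible for the dual in $\calT$, so
\[
\maxflow_\calT^y(\calM^{-1}(A), \calM^{-1}(B)) \;\le\; \sum_{f \in E(\calT)} \tilde d(f)\, y(f).
\]

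Swapping the order of summation yields
\[
\sum_{f} \tilde d(f)\, y(f) \;=\; \sum_{f} y(f) \sum_{e \in \calM_2(f)} d^*(e) \;=\; \sum_{e \in E(G)} d^*(e) \sum_{f : e \in \calM_2(f)} y(f) \;=\; \sum_{e} d^*(e)\, \load(e).
\]
Taking expectations over $\calD$ and using Theorem~\ref{thm:racke}, which gives $\E[\load(e)] = \E[\rload(e)]\, x(e) \le \beta\, x(e)$ for every $e$, we conclude
\[
\E\bigl[\maxflow_\calT^y(\calM^{-1}(A),\calM^{-1}(B))\bigr] \;\le\; \sum_{e} d^*(e)\, \E[\load(e)] \;\le\; \beta \sum_{e} d^*(e)\, x(e) \;=\; \beta \cdot \maxflow_G^x(A,B),
\]
which is (ii) after dividing by $\beta$.

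The main obstacle, such as it is, is in the setup for (ii): the standard Räcke guarantee bounds the expected load of each $G$-edge but not the expected maximum load, so a naive attempt that routes a tree flow back through $\calM_2$ and then scales by $\max_e \rload(e)$ does not give the desired expectation bound. Working on the dual (cut) side circumvents this: the quantity $\sum_e d^*(e)\load(e)$ is a linear functional of the loads, so pointwise expected bounds on $\load(e)$ directly give the expected bound on $F_\calT$. The remaining steps are routine: checking feasibility of the induced tree distance via triangle inequality on concatenated $G$-walks, and the double counting identity above.
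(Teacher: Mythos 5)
Your proof is correct. The paper states Corollary~\ref{cor:racketreeflow} without proof, treating it as a standard consequence of \racke's embedding theorem, so there is no in-paper argument to compare against; your argument is the standard one. Part (i) correctly routes the $G$-flow along tree paths and bounds the load on a tree edge $f$ by the capacity of the induced cut, which is $y(f)$ by definition; and in part (ii) you rightly identify that the guarantee $\max_e \E[\rload(e)] \le \beta$ cannot be applied to a flow-rerouting argument (which would need $\E[\max_e \rload(e)]$), and working on the cut/dual side --- where the bound $\maxflow_\calT^y \le \sum_e d^*(e)\,\load(e)$ is linear in the loads --- is exactly the right fix.
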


\subsection{Oblivious Tree Rounding}
\label{subsec:tree_rounding}

As described in Section \ref{subsec:related_work}, Chalermsook, Grandoni
and Laekhanukit \cite{CGL15} studied the Survivable Set Connectivity 
problem. They describe a randomized oblivious algorithm based 
on group Steiner tree rounding from \cite{GargKR98}. This is
useful since the sets to be connected during the course of their
algorithm are implicitly generated. We encapsulate their result in the
following lemma. The tree rounding algorithm in \cite{CGL15,ChenLLZ22}
is phrased slightly differently since they combine aspects of group
Steiner rounding and the congestion mapping that comes from \racke
trees. We separate these two explicitly to make the idea more
transparent. We refer to the algorithm from the lemma below as TreeRounding.

\begin{lemma}[\cite{CGL15,ChenLLZ22}]
\label{lem:setconnectivity-tree-rounding}
  Consider an instance of Set Connectivity on an $n$-node tree $T=(V,E)$
  with height $h$ and let $x: E \rightarrow [0,1]$. Suppose $A, B
  \subseteq V$ are disjoint sets and suppose $K \subseteq E$ such that
  $x$ restricted to $K$ supports a flow of $f \le 1$ between $A$ and
  $B$. There is a randomized algorithm that is oblivious to $A, B, K$
  (hence depends only on $x$ and value $f$) that outputs a subset $E' \subseteq E$
  such that (i) The probability that $E' \cap K$ connects $A$ to $B$ is
  at least a fixed constant $\phi$ and (ii) For any edge $e \in E$, the
  probability that $e \in E'$ is $\min\{1,O(\frac{1}{f} h \log^2 n) x(e)\}$.
\end{lemma}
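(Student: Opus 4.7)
The plan is to apply Garg--Konjevod--Ravi (GKR) tree rounding on $T$ with fractional values scaled by $c = \Theta(h \log^2 n / f)$, and to verify that the classical GKR analysis extends to set-to-set connectivity via flow decomposition and a second-moment argument. I would root $T$ at an arbitrary vertex $r$ (a choice oblivious to $A,B,K$), define scaled values $\tilde x_e = \min\{1, c\, x_e\}$, and apply the standard top-down marginal-preserving rounding: include each root edge in $E'$ independently with probability $\tilde x_e$, and for each non-root edge $e$ with parent $p(e)$, include $e$ given $p(e) \in E'$ with conditional probability $\tilde x_e / \tilde x_{p(e)}$. A standard ``monotonization'' preprocessing makes $\tilde x_e \le \tilde x_{p(e)}$ along root-to-leaf paths and costs only a constant factor in the guarantee of (i). Telescoping yields the marginal $\Pr[e \in E'] = \tilde x_e = \min\{1, c\, x_e\}$, establishing (ii); obliviousness to $A,B,K$ is immediate since the procedure reads only $x$ and $f$.

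For (i), I would decompose the given flow of value $f$ in $(T, x|_K)$ from $A$ to $B$ into tree paths $\{(P, f_P)\}$ with $\sum_P f_P = f$ and each $P$ an $A$-$B$ path whose edges lie entirely in $K$. Let $W_P = \mathbf{1}[P \subseteq E']$ and $Z = \sum_P f_P W_P$; any outcome with $Z > 0$ gives an $A$-$B$ path in $E' \cap K$, so it suffices to show $\Pr[Z > 0] \ge \phi$ for a universal constant $\phi$. For $P$ with endpoints $a \in A$, $b \in B$ and $\ell = \mathrm{lca}(a,b)$, the conditional-independence structure of GKR rounding gives
\[
\E[W_P] \;=\; \frac{\tilde x_{e_a}\, \tilde x_{e_b}}{\tilde x_{e_\ell}},
\]
where $e_v$ is the edge from $v$ to its parent. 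Summing these contributions weighted by $f_P$ and using the capacity bound $f_P \le \min_{e \in P} x_e$ together with the scaling $c = \Theta(h \log^2 n / f)$, one obtains $\E[Z] = \Omega(1)$.

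The main obstacle, and technical heart of the argument, is bounding the second moment
\[
\E[Z^2] \;=\; \sum_{P, Q} f_P f_Q \, \Pr[(P \cup Q) \subseteq E'].
\]
For tree paths $P, Q$, the union $P \cup Q$ is a subtree and its joint inclusion probability factors according to the branching structure at its junction vertices. Charging each junction contribution to a single tree edge, exchanging the order of summation, and using that $T$ has height $h$ reduces the bound to $h$ times a per-edge congestion-style quantity, producing an extra factor of $O(h \log n)$ relative to $(\E[Z])^2$. This is precisely the GKR variance lemma, extended to set-to-set pairs as carried out in~\cite{CGL15,ChenLLZ22}. The $\log^2 n$ factor in our choice of $c$ absorbs this loss and yields $\E[Z^2] = O(1) \cdot (\E[Z])^2$. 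Applying Paley--Zygmund then gives $\Pr[Z > 0] \ge (\E[Z])^2 / \E[Z^2] = \Omega(1) = \phi$, establishing (i) and completing the plan.
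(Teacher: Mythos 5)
The paper does not prove Lemma~\ref{lem:setconnectivity-tree-rounding} at all: it is imported as a black box from \cite{CGL15,ChenLLZ22}, with only the remark that those works interleave the group-Steiner rounding with the R\"acke congestion mapping, whereas the paper states the tree-rounding step in isolation. So there is no in-paper argument to compare yours against; what you have written is a reconstruction of the cited proof. Your algorithmic half is right and matches those references: scaled GKR rounding with $\tilde x_e=\min\{1,c\,x_e\}$, $c=\Theta(h\log^2 n/f)$, gives the marginal in (ii) by telescoping, and obliviousness to $A,B,K$ is immediate. The lca formula $\Pr[P\subseteq E']=\tilde x_{e_a}\tilde x_{e_b}/\tilde x_{e_\ell}$ is also correct, and flow decomposition plus a second-moment bound plus Paley--Zygmund is indeed the strategy of \cite{CGL15}.

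However, as a standalone proof your sketch has two genuine gaps in the analysis of (i). First, the intermediate claim $\E[Z]=\Omega(1)$ does not follow from $f_P\le\min_{e\in P}x_e$ and the scaling, and it is false in general: take a star rooted at $r$ with $A=\{v_1\}$, $x_{e_{v_1}}=1$, and $B=\{v_2,\dots,v_n\}$ with $x_{e_{v_i}}=1/(n-1)$; the flow is $f=1$ but $\E[Z]=\sum_P f_P\,\tilde x_{e_{v_1}}\tilde x_{e_{v_i}}=c/(n-1)=o(1)$ for polylogarithmic $c$. (Connectivity still holds with constant probability there, but only because $\E[Z^2]$ is correspondingly small, which shows the first moment alone carries no information.) Second, and more importantly, the bound you actually need for Paley--Zygmund is $\E[Z^2]=O\bigl((\E[Z])^2\bigr)$, and this is the entire technical content of the lemma; your text simply asserts that it is ``precisely the GKR variance lemma, extended to set-to-set pairs as carried out in \cite{CGL15,ChenLLZ22}.'' That is circular as a proof, and it also misquotes the shape of the GKR bound: the standard variance lemma controls $\E[Z^2]$ by roughly $O(h)\cdot\E[Z]$ plus cross terms, which is far weaker than $O((\E[Z])^2)$ when $\E[Z]\ll 1$, as in the star example. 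Handling the junction structure of pairs of $A$--$B$ paths (and showing the diagonal and off-diagonal terms are both dominated by $(\E[Z])^2$ after the $h\log^2 n$ scaling) is exactly the part that \cite{CGL15} devote their analysis to. So your write-up is a correct outline of the cited argument, but the step that makes the lemma true is missing rather than proved.
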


\subsection{Rounding Algorithm for the Augmentation Problem}
\label{subsec:sndp_algo}

\newcommand{\LG}{\textnormal{LARGE}}
\newcommand{\SM}{\textnormal{SMALL}}
\newcommand{\lpopt}{\text{OPT}_{\text{LP}}} We adapt the algorithm and
analysis in~\cite{ChenLLZ22} to Group Bulk-SNDP. Let $\beta$ be the
expected congestion given by Theorem~\ref{thm:racke}.
We start by obtaining a solution $\{x_e\}_{e \in E \setminus H}$ for
the LP relaxation (described in Section \ref{sec:prelim}). Let $E' = E \setminus H$.  
We define $\LG = \{e \in E': x_e \geq \frac 1 {4\ell\beta}\}$, and
$\textnormal{SMALL} = \{e \in E': x_e < \frac 1 {4\ell\beta}\}$.  The
LP has paid for each $e \in \LG$ a cost of at least $c(e)/(4\ell\beta)$, hence
adding all of them to $H$ will cost $O(\ell\beta \cdot \lpopt)$.  If
$\LG \cup H$ is a feasible solution to the augmentation problem, then
we are done since we obtain a solution of cost $O(\ell\log n \cdot
\lpopt)$.  Thus, the interesting case is when $\LG \cup H$ is \emph{not} a
feasible solution.  

Following \cite{ChenLLZ22} we employ a \racke tree based rounding.
A crucial step is to set up a capacitated graph appropriately. We can
assume, with a negligible increase in the fractional cost, that for
each edge $e \in E'$, $x(e) = 0$ or $x(e) \ge
\frac{1}{n^3}$; this can be ensured by rounding down to $0$ the
fractional value of any edge with very small value, and compensating
for this loss by scaling up the fractional value of the other edges by
a factor of $(1+1/n)$. It is easy to check that the new solution
satisfies the cut covering constraints, and we have only increased the
cost of the fractional solution by a $(1+1/n)$-factor. In the
subsequent steps we can ignore edges with $x_e = 0$ and assume that
there are no such edges.

Consider the original graph $G=(V,E)$ where we set a capacity for each
$e \in E$ as follows. If $e \in \LG \cup H$ we set $\tilde x_e = \frac
1 {4\ell\beta}$. Otherwise we set $\tilde x_e = x_e$.  Since the
ratio of the largest to smallest capacity is $O(n^3)$, the height of
any \racke tree for $G$ with capacities $\tilde x$ is at most
$O(\log n)$.  Then, we repeatedly sample \racke trees. For each
tree, we sample edges by the rounding algorithm given by Chalermsook
et al. in \cite{CGL15} (see Section~\ref{subsec:tree_rounding} for details). A
formal description of the algorithm is provided below where $t'$ and
$t$ are two parameters that control the number of trees sampled and
the number of times we run the tree rounding algorithm in each sampled
tree.  We will analyze the algorithm by setting both $t$ and $t'$ to 
$\Theta(\ell\log n)$.

\begin{algorithm}[H]
\caption{Approximating the Bulk-SNDP Augmentation Problem from $\ell-1$ to $\ell$}
\label{augmentation_algo}
\begin{algorithmic}
    \State $H \gets$ partial solution satisfying scenarios in $\Omega_{\ell}$
    \State $\{x\}_{e \in E} \gets$ fractional solution to the LP
    \State $\textnormal{LARGE} \gets \{e \in E': x_e \geq \frac 1 {4\ell\beta}\}$
    \State $\textnormal{SMALL} \gets \{e \in E': x_e < \frac 1 {4\ell\beta}\}$
    \State $H \gets H \cup \textnormal{LARGE}$
    \If{$H$ is a feasible solution satisfying scenarios in $\Omega_{\ell+1}$}
        \Return $H$
    \Else 
        \State $\tilde x_e \gets \begin{cases}
            \frac 1 {4\ell\beta} & e \in H \\
            x_e & \text{otherwise}
            \end{cases}$
    \EndIf
    \State $\calD \gets $ \racke tree distribution for $(G, \tilde x)$
    \For{$i = 1, \dots t'$}
        \State Sample a tree $(\calT, \calM, y) \sim \calD$
        \For{$j = 1, \dots, t$}
            \State $H' \gets $ output of oblivious TreeRounding algorithm on $(G, \calT)$
            \State $H \gets H \cup \calM(H')$
        \EndFor
    \EndFor \\
    \Return H
\end{algorithmic}
\end{algorithm}

\subsection{Analysis}
\label{subsec:sndp_analysis}
For the remainder of this analysis, we denote as $H$ the partial solution after 
buying edges in \LG. We will assume, following earlier discussion, that $H$ 
does not satisfy all requirements specified by scenarios in $\Omega_{\ell}$. 
This implies that there must be some $F$ such that $|F| \leq \ell$, 
$F \subseteq F_j$ for some $j \in [m]$, and $\exists (S_i, T_i) \in \calK_j$ such 
that $S_i$ and $T_i$ are disconnected in $(V, H \setminus F)$. Since $H$ satisfies 
all scenarios in $\Omega_{\ell-1}$, it must be the case that $F$ has exactly 
$\ell$ edges. We call such an $F$ a violating edge set.

As discussed in Section \ref{sec:sndp_cutcover}, it suffices to show that
for each violated edge set $F$ that disconnects some terminal pair $S_i, T_i$,
the algorithm outputs a set $H'$ that contains a path from $S_i$ to $T_i$ 
in $(H \cup H') \setminus F$ with high probability.
We observe that 
the algorithm is oblivious to $F$. Thus, if we obtain a high probability bound 
for a fixed $F$, since there are at most $O(n^{2\ell})$ violating edge sets, we can 
use the union bound to argue that $H'$ covers \emph{all} 
violating edge sets.  For the remainder of this section, until we do the final 
cost analysis, we work with a fixed violating edge set $F$. For ease of notation, 
we let $\calK_F = \bigcup_{j \in [m]: F \subseteq F_j} \calK_j$ be the set of 
terminal pairs that need to be connected in $(H \cup H') \setminus F$.

Consider a tree $(\calT,\calM, y)$ in the \racke distribution for the
graph $G$ with capacities $\tilde x$. We let $\calM^{-1}(F)$ denote
the set of all tree edges corresponding to edges in $F$,
i.e. $\calM^{-1}(F) = \cup_{e \in F} \calM^{-1}(e)$.  We call $(\calT,
\calM, y)$ \emph{good} with respect to $F$ if $y(\calM^{-1}(F)) \leq
\frac 1 2$; equivalently, $F$ blocks a flow of at most $\frac 1 2$ in
$\calT$.

\begin{lemma}
\label{lemma:sndp_good_tree}
  For a violating edge set $F$, a randomly sampled \racke tree 
  $(\calT, \calM, y)$ is good with respect to $F$ with probability at least 
  $\frac 1 2$.
\end{lemma}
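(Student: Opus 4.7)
The plan is to apply a Markov-style argument to the expected ``load'' of $F$ in the sampled tree, using the definition of $\tilde x$ together with the Räcke congestion guarantee.

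First I would rewrite $y(\calM^{-1}(F))$ as a sum over $e \in F$. By definition of $\calM^{-1}(F)$ and the load of an edge,
\[
y(\calM^{-1}(F)) \;=\; \sum_{e \in F} \sum_{f \in \calM^{-1}(e)} y(f) \;=\; \sum_{e \in F} \load(e),
\]
where the loads are with respect to the sampled tree. Taking expectations and using Theorem~\ref{thm:racke} (the Räcke guarantee that $\E[\rload(e)] \le \beta$ for every $e \in E(G)$), we get
\[
\E_{(\calT,\calM,y) \sim \calD}[y(\calM^{-1}(F))] \;\le\; \sum_{e \in F} \beta \cdot \tilde x_e.
\]

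Next I would use the construction of $\tilde x$. By definition, every edge $e$ satisfies $\tilde x_e \le \frac{1}{4\ell\beta}$: edges in $\LG \cup H$ have $\tilde x_e = \frac{1}{4\ell\beta}$ exactly, and edges in $\SM$ have $\tilde x_e = x_e < \frac{1}{4\ell\beta}$ by definition of $\SM$. Since $|F| = \ell$, this yields
\[
\E[y(\calM^{-1}(F))] \;\le\; \ell \cdot \beta \cdot \frac{1}{4\ell\beta} \;=\; \frac{1}{4}.
\]

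Finally, Markov's inequality gives
\[
\Pr\bigl[y(\calM^{-1}(F)) > \tfrac{1}{2}\bigr] \;\le\; \frac{1/4}{1/2} \;=\; \frac{1}{2},
\]
so the tree is good with respect to $F$ with probability at least $1/2$. There is no real obstacle here; the only subtle point is being careful that the $\tilde x_e$ upper bound of $\frac{1}{4\ell\beta}$ applies uniformly to all edges (including those in $H$, which is why the capacities are inflated in the definition of $\tilde x$ rather than, say, left at the LP values), so that the final sum over the $\ell$ edges of $F$ telescopes cleanly to $1/4$.
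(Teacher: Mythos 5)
Your proof is correct and follows essentially the same route as the paper: write $y(\calM^{-1}(F))$ as $\sum_{e\in F}\load(e)$, bound the expectation by $\beta\sum_{e\in F}\tilde x_e \le \ell\cdot\beta\cdot\frac{1}{4\ell\beta}=\frac14$ via the R\"acke congestion guarantee, and apply Markov's inequality. Your remark that $\tilde x_e \le \frac{1}{4\ell\beta}$ holds uniformly (rather than with equality, as the paper asserts for edges of $F$) is a harmless and slightly more careful variant of the same step.
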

\begin{proof}
  For each $e \in F$, $\tilde x_e =\frac 1 {4\ell\beta}$. Since the expected 
  congestion of each edge is at most $\beta$, $\E[\load(e)] \leq 
  \beta \tilde x_e \le \frac 1 {4\ell}$ for each $e \in F$. Note that 
  $y(\calM^{-1}(F)) = \sum_{e \in F} \load(e)$, hence by linearity of expectation, 
  $\E[y(\calM^{-1}(F)] = \sum_{e \in F} \E[\load(e)] \leq |F|\frac 1 {4\ell} 
  = \frac 1 4$. Applying Markov's inequality to $y(\calM^{-1}(F))$ proves 
  the lemma.
\end{proof}

Given the preceding lemma, a natural approach is to sample a good tree 
$\calT$ and hope that $\calT \setminus M^{-1}(F)$ still has good flow between 
each terminal pair. However, since we rounded down all edges in $\LG \cup H$, 
it is possible that $\calM^{-1}(F)$ contains an edge whose removal would 
disconnect a terminal pair in $\calT$, even if $\calT$ is good. See 
\cite{ChenLLZ22} for a more detailed discussion and example.

We note that our goal is to find a set of
edges $H' \subseteq E$ such that each terminal pair in $\calK_F$ has a path in
$(H' \cup H) \setminus F$; these paths must exist in the original
graph, even if they do not exist in the tree. Therefore, instead of
looking directly at paths in $\calT$, we focus
on obtaining paths through components that are already connected in
$(V(G), H \setminus F)$. The rest of the argument is to show that
sufficiently many iterations of TreeRounding on any good tree $\calT$
for $F$ will yield a feasible set $H'$ for $F$.

\subsection{Shattered Components, Set Connectivity and Rounding}
Let $\Q_F$ be the set of connected components in the subgraph induced by 
$H \setminus F$. We use vertex subsets to denote components.
Let $\calT$ be a good tree for $F$. We say that a connected component 
$Q \in \Q_F$ is \emph{shattered} if it is disconnected in $\calT \setminus 
\calM^{-1}(F)$, else we call it \emph{intact}. For any $S \subseteq V$, 
let $Q_{S}$ be the union of components of $Q_F$ that have a nonempty
intersection with $S$.
Note that for some $(S_i, T_i) \in \calK_F$, $Q_{S_i}$ may be the same as 
$Q_{T_i}$, but if $F$ is a violating edge set then there 
is at least one pair such that $Q_{S_i} \cap Q_{T_i} = \emptyset$. 
Now, we define a Set Connectivity instance that is induced by $F$ and $\calT$. 
Consider two disjoint vertex subsets $A,B \subset V$.
We say that $(A,B)$ partitions the set of shattered components if each shattered 
component $Q$ is fully contained in $A$ or fully contained in $B$. 
Formally let 
$$Z_F = \{(A \cup Q_{S_i}, B \cup Q_{T_i}): (A, B)
\text{ partitions shattered components}, (S_i,T_i) \in \calK_F, Q_{S_i} \cap Q_{T_i} = \emptyset\}.$$ 
In other words, $Z_F$ is set of all partitions of shattered components that
separate some pair $(S_i,T_i) \in \calK_F$.  Since the leaves of $\calT$ are in 
one to one correspondence with $V(G)$ we can view $Z_F$ as inducing a 
Set Connectivity instance in $\calT$; technically we need to consider the pairs 
$\{(\calM^{-1}(A),\calM^{-1}(B)) \mid (A,B) \in Z_F\}$; however, for 
simplicity we conflate the leaves of $\calT$ with $V(G)$.  We claim that it 
suffices to find a feasible solution that connects the pairs defined by 
$Z_F$ in the tree $\calT$.

\begin{lemma}
\label{lemma:sndp_shattered_suffices}
  Let $E' \subseteq \calT \setminus \calM^{-1}(F)$. Suppose there exists a path in 
  $E' \subseteq \calT \setminus \calM^{-1}(F)$ connecting $A$ to $B$ for all 
  $(A, B) \in Z_F$. Then, there is an $S_i$-$T_i$ path for each $(S_i,T_i) \in \calK_F$ 
  in $(\calM(E') \cup H) \setminus F$. 
\end{lemma}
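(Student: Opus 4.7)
The plan is to prove the lemma by contradiction, converting a hypothetical bad cut of $(\calM(E')\cup H)\setminus F$ into a partition of the shattered components that lies in $Z_F$, and then using the hypothesis to produce an edge crossing that cut.

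First I would dispose of the easy case. If $Q_{S_i}\cap Q_{T_i}\neq\emptyset$ for a given pair $(S_i,T_i)\in\calK_F$, then some single connected component of $(V,H\setminus F)$ meets both $S_i$ and $T_i$, so an $S_i$-$T_i$ path exists already inside $H\setminus F\subseteq (\calM(E')\cup H)\setminus F$. Hence we may assume $Q_{S_i}\cap Q_{T_i}=\emptyset$, which is exactly the condition built into the definition of $Z_F$.

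Now suppose, for contradiction, that $S_i$ and $T_i$ are disconnected in $H^{*}:=(\calM(E')\cup H)\setminus F$, and let $C\subseteq V$ be the set of vertices reachable from $S_i$ in $H^{*}$. Then $T_i\cap C=\emptyset$ and no edge of $H^{*}$ crosses $C$. Since $H\setminus F\subseteq H^{*}$, no edge of $H\setminus F$ crosses $C$, so $C$ is a union of members of $\Q_F$; in particular, every shattered component lies wholly in $C$ or wholly in $V\setminus C$. Let $A$ be the union of shattered components contained in $C$ and $B$ the union of those contained in $V\setminus C$; then $(A,B)$ partitions the shattered components. Every $\Q_F$-component meeting $S_i$ is reachable from $S_i$ in $H\setminus F$, so $Q_{S_i}\subseteq C$, and symmetrically $Q_{T_i}\subseteq V\setminus C$. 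Therefore $(A\cup Q_{S_i},\,B\cup Q_{T_i})\in Z_F$ with $A\cup Q_{S_i}\subseteq C$ and $B\cup Q_{T_i}\subseteq V\setminus C$.

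By hypothesis, $E'$ contains a path $P\subseteq \calT\setminus\calM^{-1}(F)$ between some leaf of $A\cup Q_{S_i}$ and some leaf of $B\cup Q_{T_i}$. Concatenating $\calM_2(f)$ over the tree edges $f\in P$ yields a walk $W$ in $G$ between those two vertices, all of whose edges lie in $\calM(E')$. Because no $f\in P$ is in $\calM^{-1}(F)$, the definition $\calM^{-1}(F)=\{f\in E(\calT):\calM_2(f)\cap F\neq\emptyset\}$ forces $\calM_2(f)\cap F=\emptyset$ for each such $f$, so $W$ avoids $F$ entirely. Thus $W$ is a walk in $H^{*}$ with one endpoint in $C$ and the other in $V\setminus C$, so some edge of $W$ crosses $C$, contradicting the choice of $C$.

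I do not anticipate a serious obstacle: the lemma is essentially a cut-covering converse, and $Z_F$ is defined precisely so that the partition read off a bad cut of $H^{*}$ lies in $Z_F$. The only subtle step is the pullback from the tree to $G$, where one must invoke the precise definition of $\calM^{-1}(F)$ to conclude that avoiding $\calM^{-1}(F)$ in $\calT$ translates to avoiding $F$ in $G$; everything else is bookkeeping about which $\Q_F$-components live on which side of $C$.
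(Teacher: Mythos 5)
Your proof is correct and follows essentially the same route as the paper's: assume a pair is disconnected, take the resulting empty cut (your reachable set $C$ plays the role of the paper's cut $X$), observe that no component of $\Q_F$ crosses it so it induces a pair in $Z_F$, and then map the guaranteed tree path back to a walk in $G$ that avoids $F$ and crosses the cut, a contradiction. Your explicit handling of the $Q_{S_i}\cap Q_{T_i}\neq\emptyset$ case and of the pullback via $\calM_2$ are just slightly more detailed versions of steps the paper leaves implicit.
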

\begin{proof}
  Let $E' \subseteq \calT \setminus \calM^{-1}(F)$ such that there is a path 
  from $A$ to $B$ in $E'$ for each $(A, B) \in Z_F$. Assume for the sake of 
  contradiction that $\exists (S_i,T_i) \in \calK_F$ such that $S_i$ and $T_i$ are 
  disconnected in $(\calM(E') \cup H) \setminus F$. 
  In particular, this means that $Q_{S_i}$ and $Q_{T_i}$ are disjoint.
  Then, there must be some cut 
  $X$ such that $\delta_{(\calM(E') \cup H) \setminus F}(X) = \emptyset$ and 
  $S_i \subseteq X$ and $T_i \subseteq \overline X$.
  
  We observe that no component $Q \in \Q_F$ can cross $X$ since each $Q$
  is connected in $H\setminus F$. 
  Let $A = Q_{S_i} \cup \{Q \in \Q_F: Q \text{ is shattered }, Q \subseteq X\}$,
  and $B = Q_{T_i} \cup \{Q \in \Q_F: Q
  \text{ is shattered }, Q \subseteq \overline X\}$. Clearly, $A
  \subseteq X$, $B \subseteq \overline X$. Furthermore, $(A, B) \in
  Z_F$. By
  assumption, there is a path $P$ in $E'$ between $A$ and $B$. Since $E'
  \cap \calM^{-1}(F) = \emptyset$, $\calM(E')$ cannot contain any edges
  in $F$. Therefore, $\calM(P)$ contains a path that crosses $X$ which
  implies that $|\delta_{\calM(E')}(X)| = |\delta_{\calM(E') \setminus
  F}(X)| \geq 1$, contradicting the assumption on $X$.
\end{proof}

\mypara{Routing flow:} We now argue that $(\calT, \calM, y)$ routes sufficient 
flow for each pair in $Z_F$ without using the edges in $\calM^{-1}(F)$; 
in other words $y$ is fractional solution (modulo a scaling factor) to the 
Set Connectivity instance $Z_F$ in the graph/forest 
$\calT \setminus \calM^{-1}(F)$. We can then appeal to TreeRounding lemma to 
argue that it will connect the pairs in $Z_F$ without using any edges in $F$. 

\begin{lemma}
\label{lemma:sndp_flowforeachpair}
  Let $(A, B) \in Z_F$. Let
  $X \subset V_{\calT}$ such that $A \subseteq X$ and $B \subseteq V_{\calT} 
  \setminus X$. Then $y(\delta_{\calT \setminus \calM^{-1}(F)}(X)) 
  \geq \frac 1 {4\ell\beta}$.
\end{lemma}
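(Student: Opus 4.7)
The plan is to combine LP feasibility with the Räcke tree machinery. Fix $(A,B) \in Z_F$; by the definition of $Z_F$ there exist $(S_i,T_i) \in \calK_j$ for some $j$ with $F \subseteq F_j$, $S_i \subseteq A$, and $T_i \subseteq B$. Any $G$-cut $Y$ separating $A$ from $B$ also separates $S_i$ from $T_i$, so the LP constraint for $(F_j,\calK_j)$ gives
\[
\sum_{e \in \delta_G(Y) \setminus F} x_e \;\ge\; \sum_{e \in \delta_G(Y) \setminus F_j} x_e \;\ge\; 1.
\]
Since $x_e \le 1$ on $E \setminus H$ and $\tilde x_e = 1/(4\ell\beta)$ on $H$, we have $\tilde x_e \ge x_e/(4\ell\beta)$ for every $e$ (taking $x_e = 0$ on $H$). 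Scaling therefore yields $\tilde x(\delta_G(Y) \setminus F) \ge 1/(4\ell\beta)$ for every such $Y$, and by max-flow min-cut on $G \setminus F$ with capacities $\tilde x$ there is a flow of value at least $1/(4\ell\beta)$ from $A$ to $B$ avoiding every edge of $F$.

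Next, lift this $G$-flow to the tree. For each tree edge $f$, let $Y_f \subseteq V(G)$ denote the leaves on one side of $\calT \setminus \{f\}$, and define auxiliary capacities $\tilde y(f) := \tilde x(\delta_G(Y_f) \setminus F)$. Then $(\calT,\calM,\tilde y)$ constitutes a tree embedding of $G \setminus F$ with capacities $\tilde x$ in the sense required by the Räcke aggregation argument behind Corollary~\ref{cor:racketreeflow}(i): each tree-edge capacity equals the $\tilde x$-capacity of its corresponding cut in $G \setminus F$. Consequently every tree cut $X$ separating $A$ from $B$ satisfies $\tilde y(\delta_\calT(X)) \ge \maxflow_{G \setminus F}^{\tilde x}(A,B) \ge 1/(4\ell\beta)$. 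Since $\tilde y(f) \le y(f)$ for every $f$, the target bound $y(\delta_{\calT \setminus \calM^{-1}(F)}(X)) \ge 1/(4\ell\beta)$ follows once one certifies that the corresponding $\tilde y$-flow can be supported entirely on edges of $\calT' := \calT \setminus \calM^{-1}(F)$.

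The hard part will be establishing exactly this last step. For $f \in \calM^{-1}(F)$ the path $\calM_2(f)$ uses some $F$-edge; in the cleanest case this $F$-edge lies in $\delta_G(Y_f)$ and its weight is subtracted from $\tilde y(f)$, so the $\tilde y$-flow automatically avoids $f$. In general, however, $\calM_2(f)$ may cross $Y_f$ through non-$F$ edges as well, leaving $\tilde y(f) > 0$. To handle this general case I plan to decompose the $\tilde y$-flow into path flows in $\calT$ and reroute any path that crosses $\calM^{-1}(F)$, exploiting the goodness of the sampled tree ($y(\calM^{-1}(F)) \le \tfrac{1}{2}$) together with the fact that the underlying $G$-flow uses no $F$-edge. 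A cancellation argument of this form mirrors the rounding step in~\cite{ChenLLZ22} and should close the gap; this is the technical core that I expect to be the main obstacle.
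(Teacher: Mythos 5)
There is a genuine gap, and it is exactly the step you flag at the end as ``the technical core'': your argument never actually shows that the flow survives the deletion of $\calM^{-1}(F)$, and the repair you sketch cannot work quantitatively. By scaling the LP bound uniformly by $1/(4\ell\beta)$ you only certify a cut/flow value of $1/(4\ell\beta)$ in $\calT$ \emph{including} the edges of $\calM^{-1}(F)$. The only global handle you have on those edges is goodness, $y(\calM^{-1}(F)) \le \tfrac12$, which is enormous compared to $1/(4\ell\beta)$; a rerouting or subtraction argument based on goodness can therefore wipe out the entire flow (nothing prevents a cut $X$ from being crossed only by tree edges in $\calM^{-1}(F)$ under your $\tilde y$). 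Also note that membership of $f$ in $\calM^{-1}(F)$ is determined by whether the \emph{path} $\calM_2(f)$ uses an $F$-edge, which is unrelated to whether an $F$-edge lies in $\delta_G(Y_f)$, so your $\tilde y(f) = \tilde x(\delta_G(Y_f)\setminus F)$ does not penalize the right edges even in spirit.

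The missing idea is a case analysis on whether some connected component $Q$ of $H\setminus F$ crosses $X$. If one does, then since $(A,B)$ partitions the shattered components, $Q$ must be \emph{intact}, i.e.\ connected in $\calT\setminus\calM^{-1}(F)$; the tree path inside $\calT\setminus\calM^{-1}(F)$ joining a leaf of $Q\cap X$ to a leaf of $Q\cap\overline{X}$ must cross $X$, and every edge on it has $y$-capacity at least $\tfrac{1}{4\ell\beta}$ because the two leaves already have that much max-flow in $(G,\tilde x)$ through $H$. If no component crosses $X$, then $\delta_{H\setminus F}(X')=\emptyset$ for the corresponding vertex cut $X'$, so every edge of $\delta_{E\setminus F}(X')$ lies in $\SM$ and carries its \emph{unscaled} LP value; the LP constraint then gives $\tilde x(\delta_{E\setminus F}(X')) \ge 1$, hence $y(\delta_\calT(X))\ge 1$, and only now is goodness strong enough: subtracting $y(\calM^{-1}(F))\le\tfrac12$ still leaves $\tfrac12 \ge \tfrac{1}{4\ell\beta}$. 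Your uniform-scaling route discards precisely the factor that makes the second case survive the goodness subtraction, which is why the gap cannot be closed along the lines you propose.
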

\begin{proof}
  Let $X \subseteq V_\calT$ be a vertex set that separates $A$ from $B$, 
  i.e. $A \subseteq X$ and $B \subseteq V_\calT \setminus X$. First,
  suppose there exists a component $Q \in \Q_F$ such that $Q$ crosses
  $X$, i.e. $X \cap Q \neq \emptyset$ and $\overline X \cap Q \neq
  \emptyset$. Since $(A, B)$ partitions the set of shattered components,
  $Q$ must be intact in $\calT$. Let $u$ be a leaf in $Q \cap X$ and $v$
  be a leaf in $Q \cap \overline X$.  Since $Q$ is intact in $\calT$, $u$ and $v$ 
  are connected in $\calT \setminus \calM^{-1}(F)$ by some path $P$. Since 
  $u \in X$, $v \in \overline X$, $P$ crosses $X$. Let 
  $e \in P \cap \delta_{\calT \setminus \calM^{-1}(F)}(X)$. 
  It remains to show that
  $y(e) \ge \frac 1 {4\ell\beta}$. This follow from properties of the
  \racke tree. Since $u$ and $v$ are connected in $G'$ with a path using only 
  edges in $H$ each of which has a capacity of $\frac 1
  {4\ell\beta}$, $u$-$v$ maxflow in $G'$ is at least $\frac 1
  {4\ell\beta}$. From Corollary~\ref{cor:racketreeflow},
  for any tree $\calT$, the $u$-$v$
  maxflow in $\calT$ with capacities $y$ must be at least $\frac 1
  {4\ell\beta}$. This in particular implies that $y(e) \ge \frac 1
  {4\ell\beta}$ for every edge $e \in P$.
  
  We can now restrict attention to the case that no connected component
  of $\Q_F$ crosses $X$. Let $X'$ be the set of leaves in $X$ and
  consider the cut $(X',V\setminus X')$ in $G$. It follows that
  $(X',V \setminus X')$ partitions the connected components in $\Q_F$ and
  $\delta_{H \setminus F}(X') = \emptyset$. Since $(A,B) \in Z_F$ there is a pair
  $(S_i, T_i) \in \calK_F$ such that $Q_{S_i} \subseteq S'$ and $Q_{T_i} \subseteq V\setminus
  S'$. Thus $(S',V\setminus S')$ is a violated cut with $F$ as its
  witness. Recall that there exists some $j \in [m]$ such that 
  $F \subseteq F_j$ and $(S_i,T_i) \subseteq \calK_j$. Since $\{x_e\}_{e \in E}$ is a 
  feasible solution to the LP it follows that 
  $x(\delta_{E \setminus F}(X')) \ge x(\delta_{E \setminus F_j}(X')) \ge 1$. 
  Since $\delta_{H \setminus F}(X') = \emptyset$, the edges in 
  $\delta_{E \setminus F}(X')$ are in $\SM$. Therefore, 
  $x(\delta_{E \setminus F}(X')) = \tilde x(\delta_{E \setminus F}(X')) \ge 1$.
  
  The \racke tree property guarantees that $y(\delta_{\calT}(X)) 
  \ge \tilde x(\delta_{G}(X')) \ge 1$ (via Corollary~\ref{cor:racketreeflow}). 
  We note that 
  $$y(\delta_{\calT \setminus \calM^{-1}(F)}(X)) \ge y(\delta_{\calT}(X)) 
  - y(\calM^-(F)) \ge 1 - 1/2 \ge 1/2,$$
  where we used the fact that $y(\calM^-(F)) \le 1/2$ since $\calT$ is good for $F$.
  Thus in both cases we verify the desired bound.
\end{proof}

\mypara{Bounding $Z_F$:} A second crucial property is a bound on $|Z_F|$,
the number of pairs in the Set Connectivity instance induced by $F$ and a 
good tree $\calT$ for $F$. 

\begin{lemma}
\label{lemma:sndp_boundnumpairs}
  For a good tree $\calT$, $|Z_F| \leq 2^{2\ell\beta} |\calK_F|$.
\end{lemma}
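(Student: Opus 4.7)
The plan is to bound the number of shattered components by $2\ell\beta$, which combined with the definition of $Z_F$ immediately gives the bound $|Z_F| \le 2^{2\ell\beta} \cdot |\calK_F|$. This is because $Z_F$ is constructed by pairing each terminal pair $(S_i,T_i) \in \calK_F$ with a partition $(A,B)$ of the shattered components; hence $|Z_F| \le |\calK_F| \cdot 2^{s}$ where $s$ is the number of shattered components.

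To bound $s$, I will charge each shattered component to a tree edge in $\calM^{-1}(F)$ and then use the goodness condition $y(\calM^{-1}(F)) \le 1/2$. For any shattered component $Q \in \Q_F$, by definition there exists at least one edge $e \in \calM^{-1}(F)$ whose removal separates some leaves of $Q$ in $\calT$. Assign $Q$ to one such edge $e(Q)$, and for each $e \in \calM^{-1}(F)$ let $n_e$ denote the number of shattered components assigned to $e$. Then $s = \sum_{e \in \calM^{-1}(F)} n_e$, so it suffices to bound this sum.

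The key step is to show that $y(e) \ge n_e \cdot \frac{1}{4\ell\beta}$ for each $e \in \calM^{-1}(F)$. Fix such an edge $e$ and let $(A', V\setminus A')$ be the cut in $G$ induced by removing $e$ from $\calT$ (via the leaf mapping $\calM$). By the \racke tree property, $y(e) = \tilde x(\delta_G(A', V\setminus A'))$. For each of the $n_e$ shattered components $Q$ assigned to $e$, the set $Q$ has vertices on both sides of the cut $(A', V\setminus A')$; since $Q$ is connected in $H \setminus F$, there exists at least one edge of $H$ with both endpoints in $Q$ that crosses this cut. Because distinct components have disjoint vertex sets, the $H$-edges contributed by different components are disjoint, so at least $n_e$ edges of $H$ cross the cut. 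Each such edge has $\tilde x$-value exactly $\frac{1}{4\ell\beta}$ (since every edge of $H$ is re-capacitated this way), yielding $y(e) \ge n_e/(4\ell\beta)$.

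Summing this inequality over all $e \in \calM^{-1}(F)$ and using goodness gives
\[
s \;=\; \sum_{e \in \calM^{-1}(F)} n_e \;\le\; 4\ell\beta \sum_{e \in \calM^{-1}(F)} y(e) \;=\; 4\ell\beta \cdot y(\calM^{-1}(F)) \;\le\; 4\ell\beta \cdot \tfrac{1}{2} \;=\; 2\ell\beta,
\]
which completes the bound $|Z_F| \le 2^{2\ell\beta}|\calK_F|$. The only subtle point, and the main thing to be careful about, is the disjointness argument: one must use that distinct components of $H \setminus F$ contribute disjoint sets of $H$-edges to the cut at $e$, so their contributions to $y(e)$ add up rather than overlap; this is precisely what lets the goodness bound $y(\calM^{-1}(F)) \le 1/2$ control the total count.
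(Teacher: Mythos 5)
Your proof is correct, and it reaches the paper's key intermediate bound (at most $2\ell\beta$ shattered components) by a somewhat different route. The paper argues globally: it picks a separated pair $u_i,v_i$ inside each of the $h$ shattered components, notes that the $h$ connecting paths in $H\setminus F$ are vertex-disjoint so the $(A,B)$-maxflow under $\tilde x$ is at least $h/(4\ell\beta)$, transfers this flow to $\calT$ via the flow-preservation property of the \racke embedding (Corollary~\ref{cor:racketreeflow}), and observes that deleting $\calM^{-1}(F)$ kills all of it, so $h/(4\ell\beta) \le y(\calM^{-1}(F)) \le 1/2$. You instead charge each shattered component to a single tree edge of $\calM^{-1}(F)$ that separates two of its leaves and lower-bound $y(e)$ edge-by-edge directly from the cut definition $y(e) = \tilde x(\delta_G(A'))$, using the same disjointness fact (distinct components contribute distinct $H$-edges to the cut). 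Your version is slightly more elementary in that it needs only the definition of $y$ rather than the maxflow-preservation corollary, at the cost of the (routine) bookkeeping of the assignment $Q \mapsto e(Q)$ and the observation that in a tree any single edge on the $u$-$v$ path already separates $u$ from $v$. Both arguments hinge on the same mechanism: each shattered component forces capacity $1/(4\ell\beta)$ into $y(\calM^{-1}(F))$, and goodness caps that total at $1/2$. The final counting step $|Z_F| \le 2^{s}|\calK_F|$ is identical in both.
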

\begin{proof}
  Let $h$ be the number of shattered components and let them be
  $Q_1,\ldots,Q_h$. For each $Q_i$ pick a pair of vertices $u_i,v_i$
  that are in separate components of $\calT - \calM^{-1}(F)$. Let 
  $A = \{u_1,\ldots,u_h\}$ and $B = \{v_1,v_2,\ldots,v_h\}$. Since the
  paths in $G$ connecting $u_i,v_i$ are each in different connected components 
  of $H\setminus F$, it follows that the $(A,B)$-maxflow in $H \setminus F$
  is at least $h$. In the graph $G'$ obtained by scaling down the
  capacity of edges of $H$, the maxflow is at least $\frac h
  {4\ell\beta}$ which implies that it is at least this quantity in
  $\calT$. Since $\calT$ is good, the total decrease of flow can be at
  most $y(\calM^{-1}(F)) \le \frac 1 2$. By construction there is no
  flow between $A$ and $B$ in $\calT - \calM^{-1}(F)$ which implies that
  $\frac h {4\ell\beta} \le 1/2 \Rightarrow h \leq 2\ell\beta$.
  Each pair in $Z_F$ corresponds to a subset of shattered components and
  a demand pair $(S_i,T_i) \in \calK_F$, and hence $|Z_F|\leq 2^h |\calK_F| \le
  2^{2\ell\beta} |\calK_F|$.
\end{proof}

\subsection{Correctness and Cost}
The following two lemmas show that by taking a union bound over all violating 
edge sets $F$ and applying the Tree Rounding lemma 
\ref{lem:setconnectivity-tree-rounding}, one can show that the algorithm 
outputs a feasible augmentation solution with probability at least $\frac 1 2$. 

\begin{lemma}
\label{claim:successprobforgoodtree}
  Suppose $\calT$ is good for a violating edge set $F$. Then after 
  $t = O(\log r + \ell \log n)$
  rounds of TreeRounding with flow parameter $\frac 1 {4\ell\beta}$,
  the probability that $H'$ is \emph{not} a feasible augmentation for
  $F$ is at most $(1-\phi)^t |Z_F| \le 1/4$.
\end{lemma}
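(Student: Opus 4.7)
The plan is to combine the oblivious single-round success guarantee of TreeRounding (Lemma~\ref{lem:setconnectivity-tree-rounding}) with a union bound over the pairs in $Z_F$, and then use the bound on $|Z_F|$ from Lemma~\ref{lemma:sndp_boundnumpairs} to pick $t$ large enough.

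First, I would fix a pair $(A,B) \in Z_F$. Since $\calT$ is good for $F$, Lemma~\ref{lemma:sndp_flowforeachpair} tells us that $y$, restricted to $E(\calT) \setminus \calM^{-1}(F)$, supports a flow of at least $1/(4\ell\beta)$ between $A$ and $B$ in $\calT$. The tree $\calT$ drawn from the \racke distribution on $(G, \tilde x)$ has height $O(\log n)$ by Theorem~\ref{thm:racke}. Thus I can apply Lemma~\ref{lem:setconnectivity-tree-rounding} with flow parameter $f = 1/(4\ell\beta)$ and with the forbidden/required set $K = E(\calT) \setminus \calM^{-1}(F)$ to conclude that each individual round of TreeRounding outputs an edge set $E'_j \subseteq E(\calT)$ whose intersection with $K$ connects $A$ to $B$ with probability at least $\phi$. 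Crucially, because the TreeRounding algorithm is oblivious to $A, B, K$, depending only on $y$ and $f$, a single sampled $E'_j$ can be analyzed simultaneously for every $(A,B)\in Z_F$.

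Second, I would repeat TreeRounding $t$ times independently to produce $E'_1, \dots, E'_t$, and take $H' = \calM\bigl(\bigcup_{j=1}^t E'_j\bigr)$. For a fixed pair $(A,B) \in Z_F$, the probability that \emph{none} of the $t$ rounds connects $A$ to $B$ in $\calT \setminus \calM^{-1}(F)$ is at most $(1-\phi)^t$. A union bound over the (at most $|Z_F|$) pairs in $Z_F$ gives that the probability that some pair in $Z_F$ fails to be connected in $\bigcup_j E'_j$ within $\calT \setminus \calM^{-1}(F)$ is at most $(1-\phi)^t |Z_F|$. On the complementary event, every pair in $Z_F$ is connected in $\bigcup_j E'_j \subseteq E(\calT) \setminus \calM^{-1}(F)$, and Lemma~\ref{lemma:sndp_shattered_suffices} then implies that $(\calM(\bigcup_j E'_j) \cup H) \setminus F$ contains an $S_i$-$T_i$ path for every $(S_i,T_i) \in \calK_F$; i.e., $H'$ is a feasible augmentation for $F$.

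Finally, to certify the numerical bound, I would invoke Lemma~\ref{lemma:sndp_boundnumpairs}: $|Z_F| \le 2^{2\ell\beta}|\calK_F| \le 2^{2\ell\beta} r$. Using $\beta = O(\log n)$ from Theorem~\ref{thm:racke}, the failure probability satisfies
\[
(1-\phi)^t |Z_F| \;\le\; e^{-\phi t}\cdot 2^{2\ell\beta} r,
\]
so choosing $t = \Theta(\ell\beta + \log r) = O(\ell \log n + \log r)$ with a sufficiently large constant makes this at most $1/4$. The main subtlety in the argument, rather than any computational obstacle, is keeping the obliviousness of TreeRounding precise: the algorithm does not know $F$, $Z_F$, or the pairs in $\calK_F$, yet Lemma~\ref{lem:setconnectivity-tree-rounding} gives a per-pair success probability $\phi$ for every pair whose flow in $\calT$ meets the chosen parameter $f$. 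This is exactly what allows the union bound to be taken after the rounding rather than before.
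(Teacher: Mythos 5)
Your proposal is correct and follows essentially the same route as the paper: per-round success probability $\phi$ from Lemma~\ref{lem:setconnectivity-tree-rounding} applied with flow parameter $\frac{1}{4\ell\beta}$ and $K = E(\calT)\setminus\calM^{-1}(F)$ (via Lemma~\ref{lemma:sndp_flowforeachpair}), a union bound over $Z_F$, feasibility via Lemma~\ref{lemma:sndp_shattered_suffices}, and the bound $|Z_F|\le 2^{2\ell\beta}|\calK_F|$ from Lemma~\ref{lemma:sndp_boundnumpairs} to set $t$. Your explicit emphasis on the obliviousness of TreeRounding, which is what licenses taking the union bound after the rounding, is a faithful reading of the paper's intent.
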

\begin{proof}
  Suppose $\calT$ is good for $F$. Let $(A,B) \in Z_F$. From
  Lemma~\ref{lemma:sndp_flowforeachpair} the flow for $(A,B)$ in $\calT -
  \calM^{-1}(F)$ is at least $\frac 1 {4\ell\beta}$. From
  Lemma~\ref{lem:setconnectivity-tree-rounding}, with probability at
  least $\phi$, the pair $(A,B)$ is connected via a path in $\calT -
  \calM^{-1}(F)$. If all pairs are connected, then via
  Lemma~\ref{lemma:sndp_shattered_suffices}, $H'$ is a feasible
  augmentation for $F$. Thus, $H'$ is not a feasible augmentation if
  for some $(A,B) \in Z_F$ the TreeRounding does not succeed after $t$
  rounds. The probability of this, via the union bound over the pairs
  in $Z_F$, is at most $(1-\phi)^t |Z_F|$. From
  Lemma~\ref{lemma:sndp_boundnumpairs}, $|Z_F| \le 2^{2\ell\beta}|\calK_F|$. 
  Note that $|\calK_F| \leq r$. Consider $t
  = \frac 1 \phi \log(4|\calK_F| \cdot 2^{2\ell\beta}) 
  = O(\log r + \ell \log n)$, 
  since $\beta = O(\log n)$. Then, $(1-\phi)^t |Z_F| 
  = 2^{2\ell\beta} |\calK_F| (1 -\phi)^t 
  \leq 2^{2\ell\beta} |\calK_F| e^{-\phi t} \leq \frac 1 4$.
\end{proof}

\begin{lemma}
\label{lem:correctness}
  The algorithm outputs a 
  feasible augmentation to the given instance with probability at least
  $\frac 1 2$.
\end{lemma}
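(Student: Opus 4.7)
The plan is to combine Lemma~\ref{lemma:sndp_good_tree} and Lemma~\ref{claim:successprobforgoodtree} with a union bound over all violating edge sets, exploiting the fact that the rounding algorithm is oblivious to the choice of $F$. First I would fix an arbitrary violating edge set $F$ and bound the probability that $F$ is left uncovered after a single iteration of the outer loop. By Lemma~\ref{lemma:sndp_good_tree}, the sampled R\"acke tree $(\calT,\calM,y)$ is good for $F$ with probability at least $\tfrac{1}{2}$. Conditional on $\calT$ being good for $F$, Lemma~\ref{claim:successprobforgoodtree} shows that after $t = \Theta(\log r + \ell \log n)$ rounds of TreeRounding the probability that $H'$ fails to cover $F$ is at most $\tfrac{1}{4}$. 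Thus a single outer iteration covers $F$ with probability at least $\tfrac{1}{2} \cdot \tfrac{3}{4} = \tfrac{3}{8}$, and fails to cover $F$ with probability at most $\tfrac{5}{8}$.

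Next I would use independence across the outer iterations. Since the $t'$ sampled R\"acke trees are independent, the probability that $F$ is not covered in any of the $t'$ iterations is at most $(5/8)^{t'}$. Crucially, the rounding algorithm is oblivious to $F$: it depends only on the fractional solution $\{x_e\}$ and the sampled tree, not on which violating edge set we are analyzing. Consequently the same random output $H'$ must simultaneously cover every violating edge set, and I can apply a union bound over all violating edge sets. Recall from Section~\ref{sec:sndp_cutcover} that the number of violating edge sets is at most $O(n^{2\ell})$ (they are subsets of $E$ of size at most $\ell$).

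To conclude I would set $t' = c\,\ell \log n$ for a sufficiently large constant $c$ so that the union bound gives
\[
\Pr[\text{some violating edge set is uncovered}] \;\le\; O(n^{2\ell}) \cdot (5/8)^{t'} \;\le\; \tfrac{1}{2}.
\]
By the observation preceding Lemma~\ref{lemma:sndp_shattered_suffices}, as long as for every violating edge set $F$ the augmented graph $H \cup H'$ contains an $S_i$-$T_i$ path in $(H \cup H') \setminus F$ for each pair disconnected by $F$, the output is a feasible augmentation from $\Omega_{\ell-1}$ to $\Omega_{\ell}$. This is exactly what covering every violating edge set guarantees (via Lemma~\ref{lemma:sndp_shattered_suffices} applied to each good tree that witnesses the covering). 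Hence the algorithm produces a feasible solution with probability at least $\tfrac{1}{2}$.

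The step that requires the most care is reconciling the two randomness sources: the choice of a good tree and the TreeRounding step both depend on $F$ through the analysis, but not through the algorithm. I would emphasize that while Lemma~\ref{claim:successprobforgoodtree} is proved per fixed $F$ and good tree, the same random object $H'$ simultaneously serves as a candidate cover for every $F$, which is what legitimizes the union bound. The choice of $t' = \Theta(\ell \log n)$ is then dictated by needing $(5/8)^{t'}$ to dominate the $O(n^{2\ell})$ factor coming from the number of violating edge sets.
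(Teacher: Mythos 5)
Your proposal is correct and follows essentially the same route as the paper's proof: a per-$F$ success probability of $3/8$ per outer iteration (good tree with probability $\ge 1/2$ times TreeRounding success $\ge 3/4$), independence across the $t'$ tree samples giving a $(5/8)^{t'}$ failure bound, and a union bound over the $O(n^{2\ell})$ violating edge sets with $t' = \Theta(\ell \log n)$. Your explicit emphasis on the obliviousness of the algorithm to $F$ as the justification for the union bound matches the remark the paper makes just before Section~5.5.
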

\begin{proof}
  For a fixed $F$ the probability that a sampled tree is good is at
  least $1/2$.  By Claim~\ref{claim:successprobforgoodtree}, conditioned
  on the sampled tree being good for $F$, $t$ iterations of TreeRounding
  fail to augment $F$ with probability at most $1/4$. Thus the probability that
  all $t'$ iterations of sampling trees fail is $(1-3/8)^{t'}$. There
  are at most $n^{2\ell}$ violating edge sets $F$. Consider $t' = \frac 8 3
  \log (2n^{2\ell}) = O(\ell\log n)$. By applying the union bound over
  all violating edge sets $F$, the probability of the algorithm failing
  is at most $n^{2\ell}(1 - 3/8)^{t'} \leq n^{2\ell}e^{-3t'/8} \leq
  \frac 1 2$. Therefore, the output of the algorithm is a feasible
  augmentation for all violating edge sets with probability at least
  $\frac 1 2$.
\end{proof}

Now we analyze the expected cost of the edges output by the algorithm for augmentation
with respect to $\lpopt$, the cost of the fractional solution.

\begin{lemma}
\label{lem:costanalysis}
  The total expected cost of the algorithm is 
  $O((\log r + \ell\log n)\ell^2 \log^6 n) \cdot \lpopt$.
\end{lemma}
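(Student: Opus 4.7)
The plan is to bound three separate contributions to the total cost added in a single augmentation stage: (i) the cost of $\LG$, (ii) the expected cost of new edges added by a single call to TreeRounding (composed with the inverse mapping $\calM$), and (iii) the number of TreeRounding calls. The first piece is immediate since every $e \in \LG$ satisfies $x_e \ge 1/(4\ell\beta)$, so $c(\LG) \le 4\ell\beta \cdot \lpopt = O(\ell \log n)\cdot\lpopt$ and will be absorbed into the final bound.

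For the per-call bound, fix a sampled \racke tree $(\calT, \calM, y)$ and consider one invocation of TreeRounding with flow parameter $1/(4\ell\beta)$ on $\calT$ (which has height $O(\log n)$ since capacities in $\tilde x$ have polynomial ratio). Lemma \ref{lem:setconnectivity-tree-rounding} guarantees that each $f \in E(\calT)$ lies in the output $H'$ with probability at most $O\bigl(\tfrac{1}{1/(4\ell\beta)} \cdot \log n \cdot \log^2 n\bigr) y(f) = O(\ell \log^4 n)\, y(f)$. The cost we actually pay is for edges of $\calM(H') \setminus H$, since edges already in $H$ are free. Thus by a union bound over the tree edges whose path contains $e$,
\begin{align*}
\E_{H'}\bigl[c(\calM(H') \setminus H)\bigr]
&\le \sum_{e \in E \setminus H} c(e) \sum_{f : e \in \calM_2(f)} \Pr[f \in H'] \\
&\le O(\ell \log^4 n) \sum_{e \in E \setminus H} c(e) \cdot \load(e).
\end{align*}
Taking expectation over the random choice of \racke tree and using the congestion guarantee $\E[\load(e)] \le \beta \, \tilde x(e)$, together with the key observation that $\tilde x(e) = x(e)$ for $e \in E \setminus H$, we obtain
\[
\E_{(\calT,\calM),H'}\bigl[c(\calM(H') \setminus H)\bigr]
\le O(\ell \beta \log^4 n) \sum_{e \in E \setminus H} c(e) x(e)
= O(\ell \log^5 n) \cdot \lpopt.
\]

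Finally, there are $t' = O(\ell \log n)$ sampled trees and $t = O(\log r + \ell \log n)$ TreeRounding invocations per tree, so by linearity of expectation the total expected cost accumulated from all rounding calls in the augmentation stage is
\[
t \cdot t' \cdot O(\ell \log^5 n) \cdot \lpopt
= O\bigl((\log r + \ell \log n)\, \ell^2 \log^6 n\bigr) \cdot \lpopt,
\]
which dominates the $O(\ell \log n)\,\lpopt$ contribution from $\LG$ and yields the claimed bound. The main subtlety, which the charging scheme handles cleanly, is that the \racke tree is built on $(G,\tilde x)$ where $\tilde x(e) = 1/(4\ell\beta)$ on $H$; this inflated capacity could blow up $\sum_e c(e)\tilde x(e)$ relative to $\lpopt$, but we avoid paying anything on those edges because they are already bought, so only the $e \notin H$ (where $\tilde x = x$) enter the final sum.
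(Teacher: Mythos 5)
Your proposal is correct and follows essentially the same argument as the paper: fix an edge, apply the per-edge inclusion probability from Lemma~\ref{lem:setconnectivity-tree-rounding}, sum over the tree edges mapping to it to get $\load(e)$, invoke the \racke congestion bound $\E[\load(e)] \le \beta \tilde x_e$, and multiply by the $t \cdot t'$ rounding iterations. Your explicit handling of the $\LG$ cost and the observation that only edges with $\tilde x_e = x_e$ (i.e., those outside $H$) are actually charged is a slightly more careful accounting of a point the paper treats implicitly, but it is not a different approach.
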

\begin{proof}
  Fix an edge $e \in \SM$ with fractional value $x_e$. Consider one
  outer iteration of the algorithm in which it picks a random tree
  $\calT$ from the \racke tree distribution and then runs $t$ iterations
  of TreeRounding with flow parameter $\alpha = \frac 1
  {4\ell\beta}$. Via Lemma~\ref{lem:setconnectivity-tree-rounding}, the
  probability of an edge $f \in \calT$ being chosen is at most
  $O(\frac{1}{\alpha} h \log^2n) y(f)$, where $h$ here denotes the height of 
  the tree $\calT$. Thus the expected cost for $e$
  for one round of TreeRounding is $O(\frac{1}{\alpha} h \log^2n)
  \sum_{f \in \calM^{-1}(e)} y(f) = O(\frac{1}{\alpha} h \log^2n)
  \load(e)$. By the \racke distribution property, $\E_{\calT} [\load(e)]
  \le \beta x_e$. By linearity of expectation, since there are a total
  of $t \cdot t'$ iterations of TreeRounding, the total expected cost is
  at most $(t \cdot t')\cdot O(\frac{1}{\alpha} h \log^2 n \beta) \sum_{e \in
    E} c(e) x_e$. By the analysis in Section~\ref{subsec:sndp_algo}, $h = O(\log
  n)$, and $\beta = O(\log n)$. Substituting in the values of $t$ and $t'$ stated
  in Lemmas \ref{claim:successprobforgoodtree} and \ref{lem:correctness}, 
  the total expected cost is at most 
  $O((\log r + \ell\log n)\ell^2 \log^6 n) \cdot \lpopt$.
\end{proof}

Combining the correctness and cost analysis we obtain the following.
\begin{lemma}
\label{lem:sndp_augmentation_result}
  There is a randomized $O((\log r + \ell\log n)\ell^2 \log^6 n)$-approximation algorithm for the 
  Bulk-SNDP Augmentation problem from $\ell-1$ to $\ell$. The algorithm runs 
  in time polynomial in $n$ and $\alpha$, where $\alpha$ is the amount of time 
  it takes to solve the LP. 
\end{lemma}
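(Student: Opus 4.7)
The plan is to combine the machinery developed in this subsection essentially off the shelf, since Lemmas \ref{lemma:sndp_good_tree}--\ref{lem:costanalysis} already isolate the substantive content. The algorithm I would propose is precisely Algorithm \ref{augmentation_algo} with parameter choices $t = \Theta(\log r + \ell\log n)$ (as dictated by Lemma \ref{claim:successprobforgoodtree}) and $t' = \Theta(\ell\log n)$ (as dictated by Lemma \ref{lem:correctness}), preceded by solving the LP relaxation in time $\alpha$. After committing $\LG \cup H_{\ell-1}$ to the solution I would short-circuit if this already covers $\Omega_\ell$; otherwise proceed with the outer loop that samples a \racke tree from the distribution guaranteed by Theorem~\ref{thm:racke} (applied to $(G,\tilde x)$, whose capacity ratio is $O(n^3)$ so the tree height is $O(\log n)$) and runs the oblivious TreeRounding procedure of Lemma~\ref{lem:setconnectivity-tree-rounding} $t$ times inside each.

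For correctness I would appeal directly to Lemma~\ref{lem:correctness}: with probability at least $\tfrac 1 2$ the returned edge set covers every violating edge set $F$ of size at most $\ell$, which by the claim at the start of Section~\ref{sec:sndp_cutcover} is equivalent to feasibility for $\Omega_\ell$. To upgrade this to a high-probability guarantee, I would repeat the outer procedure $O(\log n)$ times and keep the cheapest output that actually satisfies $\Omega_\ell$, feasibility being checkable in polynomial time by explicit enumeration of the scenarios (at most $m \binom{k}{\ell}$ of them).

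The cost analysis splits into two parts. The LARGE edges contribute at most $O(\ell\beta)\cdot\lpopt = O(\ell\log n)\cdot\lpopt$, since each carries LP value at least $\frac{1}{4\ell\beta}$. The randomly rounded edges are controlled by Lemma~\ref{lem:costanalysis}, which gives expected cost $O((\log r + \ell\log n)\ell^2\log^6 n)\cdot\lpopt$; this strictly dominates the LARGE contribution. Markov's inequality together with the $O(\log n)$-fold boosting then converts the expected bound into an actual approximation ratio against $\lpopt$, and hence against the integral optimum for the augmentation problem.

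For the running time, \racke tree sampling is polynomial by Theorem~\ref{thm:racke}, TreeRounding is polynomial, the total number of TreeRounding invocations $t\cdot t'$ is polylogarithmic, and scenario enumeration is polynomial; the only non-polynomial ingredient is LP solution, accounting for the $\alpha$ dependence in the statement. I do not foresee a real obstacle here — the heavy structural work (goodness of sampled trees, the reduction to a Set Connectivity instance on $Z_F$ via shattered components, the flow lower bound $\frac{1}{4\ell\beta}$ per pair, and the bound $|Z_F| \le 2^{2\ell\beta}|\calK_F|$) has already been discharged, so the proof is essentially an assembly step together with a standard boosting-and-verification argument.
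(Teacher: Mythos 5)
Your proposal is correct and matches the paper's argument, which likewise obtains the lemma by directly combining Lemma~\ref{lem:correctness} (feasibility with probability at least $\tfrac12$) with Lemma~\ref{lem:costanalysis} (the expected cost bound), the LARGE-edge cost being dominated as you note. The extra boosting-and-verification step you add is a standard refinement the paper leaves implicit; just note that feasibility for $\Omega_\ell$ should be checked via a min-cut computation per pair (capacity $\ell+1$ on edges outside $F_j$, capacity $1$ on edges of $F_j$) rather than by enumerating all $\binom{k}{\ell}$ sub-scenarios, which need not be polynomial in $n$.
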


To prove Theorem~\ref{thm:bulk-group}, we start with a solution from $\ell = 0$ 
and iteratively solve $k$ augmentation problems. Since the LP for Group Bulk-SNDP 
can be solved in polynomial time (see Section~\ref{sec:prelim}), we obtain a 
polynomial time $O((\log r + k\log n)k^3\log^6 n)$-approximation algorithm.

For flex-connectivity, 
we reduce to Bulk-SNDP with maximum width $p+q$.
Recall from Section~\ref{sec:prelim} that the LP 
can be solved in $n^{O(q)}$ time, giving us Corollary~\ref{cor:flex-group}.

Finally, for Relative SNDP, there is an LP relaxation described 
in~\cite{DinitzKK22} that can be solved in polynomial time, even when 
$k$ is not fixed. We can modify Algorithm~\ref{augmentation_algo} for RSNDP 
by solving this LP relaxation instead and following the same rounding algorithm. 
This, along with the reduction to Bulk-SNDP discussed in Section~\ref{sec:intro}, 
completes the proof of Corollary~\ref{cor:rgroup}.

\section{Conclusion and Open Problems}

In this paper we consider two network design problems in non-uniform
fault models: flexible graph connectivity and bulk-robust network
design.  We provide the first constant-factor approximation for some
special cases of flexible connectivity in the single pair and spanning
settings, and provide the first nontrivial approximation algorithms for
Bulk-Robust SNDP.  We also discuss natural LP relaxations for these
problems and obtain integrality gaps. Our work shows both the
potential and limitations of extending known algorithmic approaches
for connectivity problems to the non-uniform setting.

Non-uniform network design is a fairly new area of study with several
open problems. We highlight one important one. Is there a constant
factor approximation for fixed width/connectivity?  Achieving this
even for the special cases of global or single pair connectivity would
be a significant step forward. We are also interested in hardness of
approximation. Is it possible to rule out a constant factor
approximation for fixed width Bulk-Robust SNDP?  For large/unbounded
width, there are problems such as global connectivity FGC for which a
constant factor approximation is not ruled out. We are hopeful that
progress on these questions will also lead to the development of new
ideas and techniques. In fact,  there have already been some nice
developments in extending algorithmic ideas such as 
primal-dual to larger class of problems 
\cite{BansalCGI22,Nutov23}.

\mypara{Acknowledgements:} We thank Qingyun Chen for clarifications on
a proof in \cite{ChenLLZ22}. We thank Joseph Cheriyan and Ishan Bansal
for pointers and helpful comments on flexible graph connectivity. The
initial impetus for our work on this topic came from \cite{BoydCHI22}.
We thank Mik Zlatin for pointing out \cite{gupta2009online} that led
to some new results via the cut-covering lemma.

\bibliographystyle{plainurl}
\bibliography{fgc_paper}

\appendix
\section{$\Omega(k)$ integrality gap for $(1,k)$-Flex-ST}
\label{sec:integrality_gap_1k}
For $(1,k)$-Flex-ST in directed graphs, Adjiashvili et al.\ \cite{AdjiashviliHMS20} 
showed an integrality gap of $(k+1)$ for an LP relaxation similar to the
one we described in this paper. They also showed a
poly-logarithmic factor inapproximability via a reduction from
directed Steiner tree. It is natural to ask whether the undirected
version of $(1,k)$-Flex-ST is super-constant factor hard when $k$ is large. As an
indication of potential hardness, we show an $\Omega(k)$-factor
integrality gap for the LP relaxation for $(1,k)$-Flex-ST. This is via
a simple modification of an example from \cite{ChakCKK15} that showed
an $\Omega(R)$-factor integrality gap for single-pair capacitated
network design, where $R$ is the connectivity requirement.

Consider the following graph $G$ (shown in Figure \ref{1k_integrality_image}).
The vertex set $V$ is $\{s, t\} \cup \{v_i: i \in [k+1]\}$. We add two parallel 
unsafe edges from $s$ to each $v_i$ of cost $\frac 1 2$ each, and one 
safe edge from each $v_i$ to $t$ with costs $k+1$ each. Recall that the goal 
is to find a cheap subgraph of $G$ where $s$ and $t$ remain connected even after 
the removal of any set of $k$ edges.

\begin{figure}
  \begin{center}
    \includegraphics[width = 0.4\linewidth]{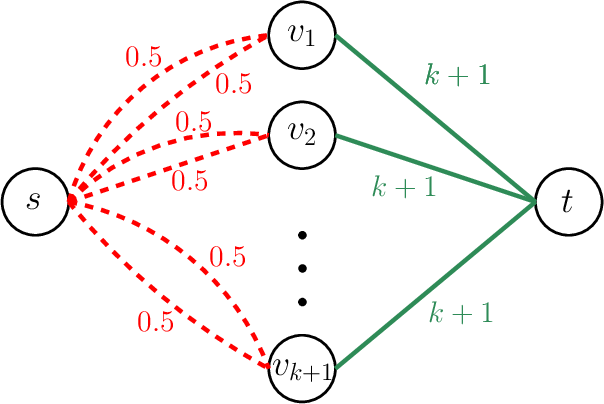}
  \end{center}
  \caption{Integrality gap for $(1,k)$-Flex-ST}
  \label{1k_integrality_image}
\end{figure}

\begin{claim}
Any optimal integral solution needs at least $\frac {k+1}2$ safe edges.
\end{claim}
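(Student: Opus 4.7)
The plan is to exploit the simple structure of $G$: the only edges incident to $t$ are the safe edges $v_it$, so any $s$--$t$ path in a feasible subgraph $H$ must use one of the selected safe edges as its last hop. Let $r$ denote the number of safe edges in a feasible integral solution $H$, and let $U \subseteq \{v_1,\ldots,v_{k+1}\}$ be the set of $v_i$ such that the safe edge $v_it$ lies in $H$, so that $|U|=r$. Then every $s$--$t$ path in $H$ must pass through some vertex of $U$ immediately before reaching $t$.

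Next, I would consider the adversary that removes every unsafe edge of $H$ joining $s$ to a vertex of $U$. Since there are only two parallel unsafe edges between $s$ and each $v_i$, the total number of such edges is at most $2|U|=2r$. If $2r\le k$, the adversary can afford this removal within its budget of $k$ unsafe-edge failures; after the removal, no vertex of $U$ is reachable from $s$ in $H$, and so (using that the only edges at $t$ go to $U$) the sink $t$ is disconnected from $s$. Feasibility therefore forces $2r\ge k+1$, i.e.\ $r\ge\lceil (k+1)/2\rceil$, which is the claim.

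This is essentially all the argument needs; the only point worth double-checking carefully is the structural fact that removing the unsafe edges from $s$ to $U$ really does separate $s$ from $t$ in $H$. This follows because the only neighbors of $t$ in $G$ are the vertices of $\{v_1,\dots,v_{k+1}\}$, and the safe edge $v_it$ is present in $H$ only for $v_i\in U$; meanwhile the only edges incident to any $v_i$ are the (at most two) unsafe edges from $s$ and the single safe edge to $t$, so once the $s$-to-$U$ unsafe edges are gone, the component containing $t$ in $H$ lies entirely in $U\cup\{t\}$. I do not anticipate a real obstacle; the proof is just a one-paragraph adversary argument matching the LP upper bound to within an $\Omega(k)$ factor, mirroring the capacitated-network-design gap example of \cite{ChakCKK15}.
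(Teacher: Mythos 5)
Your proof is correct and is essentially the paper's argument viewed from the dual side: the set of unsafe edges your adversary deletes (the $s$--$U$ edges, at most $2r\le k$ of them) is exactly the edge boundary $\delta_F(S)$ of the violated cut $S=\{s\}\cup\{v_i : v_it\notin F\}$ that the paper exhibits. Both yield $2r\ge k+1$, i.e.\ at least $\tfrac{k+1}{2}$ safe edges.
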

\begin{proof}
Consider any integral solution $F \subseteq E$ where $|F \cap \calS| < \frac{k+1}2$. 
Let $I = \{i: \{v_i, t\} \notin F\}$ be the set of indices corresponding to $v_i$ 
not incident to a safe edge in $F$, and let $S = \{s\} \cup \{v_i: i \in I\}$.
Then, $\delta_F(S)$ is exactly the set of all unsafe edges from $s$ to $v_j$ for 
$j \notin I$. By construction, $|I| \geq k+1 - |F \cap \calS|$, so $|I| > \frac{k+1}2$.
Thus, $|\delta_F(S)| \leq 2 \cdot (k+1 - |I|) < k+1$. Since $\delta_F(S)$ has no 
safe edges, $S$ is a violated cut, so $F$ is not a feasible solution.
\end{proof}

\begin{claim}
Let $x \in [0,1]^{|E|} : x_e = 
\begin{cases}
    1 & e \in \calU \\
    \frac{2}{k+1} &e\in \calS
\end{cases}$. 
$x$ is a feasible LP solution.
\end{claim}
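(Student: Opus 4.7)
The plan is to verify directly that $x$ satisfies both families of LP constraints (the cut-covering constraints and the capacitated constraints), since the box constraints $x_e \in [0,1]$ are immediate from the definition. Every $s$-$t$ separating cut $S$ is determined up to irrelevant choices by $I := \{i \in [k+1] : v_i \in S\}$; set $m = |I|$. Then $\delta(S)$ consists of exactly the $m$ safe edges $\{v_i t\}_{i \in I}$ and the $2(k+1-m)$ unsafe edges from $s$ to $\{v_j\}_{j \notin I}$. Substituting the given values, $x(\delta_{\calS}(S)) = \frac{2m}{k+1}$ and $x(\delta_{\calU}(S)) = 2(k+1-m)$.

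For the capacitated constraint (with $p=1, q=k$), I would just compute
\[
(k+1)\, x(\delta_{\calS}(S)) + x(\delta_{\calU}(S)) = (k+1)\cdot\frac{2m}{k+1} + 2(k+1-m) = 2(k+1) \ge k+1,
\]
which holds for every $m \in \{0,1,\dots,k+1\}$. This step is a one-line calculation.

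The main work is verifying the cut-covering constraint $x(\delta(S)\setminus B) \ge 1$ for every $B \subseteq \calU$ with $|B|\le k$. Since $x_e = 1$ on every unsafe edge, the worst-case adversary chooses $B$ to maximize $|B \cap \delta_{\calU}(S)|$, so we may assume this intersection has size $\min(k, 2(k+1-m))$. I would split into two cases. When $2(k+1-m) \le k$, i.e.\ $m \ge (k+2)/2$, the adversary removes all unsafe edges of the cut and the residual sum is at least $\frac{2m}{k+1} \ge 1$ since $m \ge (k+1)/2$. When $2(k+1-m) > k$, the residual is at least $\frac{2m}{k+1} + 2(k+1-m) - k = k+2 - \frac{2mk}{k+1}$; rearranging, the required bound $\ge 1$ is equivalent to $m \le (k+1)^2/(2k) = \frac{k}{2} + 1 + \frac{1}{2k}$, which is implied by $m < (k+2)/2$.

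The only obstacle is being careful in the second case, since the bound becomes tightest near the case boundary $m \approx (k+2)/2$ and depends on $k$ through a slightly non-obvious quadratic. Once this is verified, combining feasibility with the previous claim's lower bound of $\ge (k+1)/2$ safe edges (each of cost $k+1$) gives an integral optimum $\ge (k+1)^2/2$, while $\sum_e c(e) x_e = 2(k+1)\cdot\tfrac{1}{2} + (k+1)\cdot(k+1)\cdot\tfrac{2}{k+1} = 3(k+1)$, yielding the advertised $\Omega(k)$ integrality gap.
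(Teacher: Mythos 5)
Your proposal is correct and follows essentially the same route as the paper: a direct computation of the capacitated constraint (which evaluates to exactly $2(k+1)$ for every cut) and a two-case analysis of the cut-covering constraints based on whether the cut has enough safe edges to survive the adversary removing unsafe ones. The only cosmetic difference is where you place the case boundary ($m \ge (k+2)/2$ versus the paper's $|S| \ge (k+3)/2$, i.e.\ $m \ge (k+1)/2$) and that your second case keeps the safe-edge contribution, but both verifications go through.
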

\begin{proof}
Let $S \subseteq V$ be any arbitrary cut such that $s \in S$, $t \notin S$. 
The capacitated constraint for $(1, k)$-flex connectivity is:
\[ (k+1) \sum_{e \in \delta_\calS(S)} x_e + \sum_{e \in \delta_\calU(S)} x_e \geq k+1.\]
Substituting the values of $x_e$, we get
\[(k+1) \sum_{e \in \delta_\calS(S)} x_e + \sum_{e \in \delta_\calU(S)} x_e = 
(k+1) \left(\frac 2 {k+1}\right)|\delta_\calS(S)| + |\delta_\calU(S)| = 
2|\delta_\calS(S)| + |\delta_\calU(S)|.\]
Notice that $\delta(S)$ has one safe edge for each $v_i \in S$, since all safe edges 
are from $v_i$ to $t$ and $t \notin S$, so $|\delta_\calS(S)| = |S| - 1$. 
Similarly, $\delta(S)$ has two unsafe edges for every $v_i \notin S$, since 
$s \in S$. Thus $|\delta_\calU(S)| = 2(k+1 - (|S| - 1)) = 2(k - |S| + 2)$. 
Thus $2|\delta_\calS(S)| + |\delta_\calU(S)| = 2(|S| - 1) + 2(k - |S| + 2) 
= 2(k+1) \geq k + 1$.

For the remaining set of constraints, let $B \subseteq \calU$, $|B| = k$. 
We want to show that $\sum_{e \in \delta(S) - B} x_e \geq 1$. We case on $|S|$. 
If $|S| \geq \frac{k+3}2$, then by the analysis above, $|\delta_\calS(S)| = 
|S| - 1 \geq \frac{k+1}2$, so $\sum_{e \in \delta_\calS(S)} x_e \geq 
\frac{2}{k+1} \cdot \frac{k+1}{2} = 1$. Since $B \subseteq \calU$, 
$\sum_{e \in \delta(S) - B} x_e \geq 1$. Suppose instead that $|S| < \frac{k+3}2$. 
Then, since $x_e = 1$ for all unsafe edges,
\begin{align*}
    \sum_{\delta_\calU(S) - B} x_e \geq |\delta_\calU(S)| - |B| 
    = 2(k - |S| + 2) - k = k + 4 - 2|S| > k+4 - (k+3) = 1.
\end{align*}
In either case, we get our desired result.
\end{proof}

Notice that the cost of any optimal integral solution is at least 
$(k+1) \frac {k+1}2 = \frac{(k+1)^2}2$. However, the optimal fractional solution 
has total cost $3(k+1)$. This shows an $\Omega(k)$ integrality gap.

\section{A $5$-approximation for $(2,2)$-Flex-ST}
\label{sec:st22}

In this section, we show that the algorithm we described for Flex-ST in 
Section \ref{subsec:st} (with minor modifications) gives a $5$-approximation 
for $(2,2)$-Flex-ST. 

Recall that in Section \ref{subsec:st}, we begin with an instance of 
$(p,(p+q))$-Cap-ST where every safe edge is given a capacity of $p+q$ and 
every unsafe edge is given a capacity of $p$. In this case, that corresponds to 
an instance of $8$-Cap-ST where every safe edge has capacity $4$ and every 
unsafe edge has capacity $2$. This is equivalent to an instance of $4$-Cap-ST 
where every safe edge has capacity $2$ and every unsafe edge has capacity $1$.
As mentioned in Section \ref{sec:intro}, since the minimum cost $s$-$t$ flow 
problem can be solved optimally, there exists a $\max_{e}u_e$ = 
$2$-approximation for this problem. Let $F \subseteq E$ be such a solution, 
and note that $\cost(F) \leq 2 \cdot \opt$. 

\begin{lemma}
\label{lem:st_22_starting_criteria}
For every set $A \subseteq V$ separating $s$ from $t$, at least one of the 
following is true: (1) $|\delta_{F \cap \calS}(A)| \geq 2$,
(2) $|\delta_{F}(A)| \geq 4$,
(3) $\delta_{F}(A)$ has exactly two unsafe and one safe edge.
\end{lemma}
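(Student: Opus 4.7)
The plan is to reduce the statement to simple casework on the cut capacity. Since $F$ is a feasible solution to the $4$-Cap-ST instance in which each safe edge has capacity $2$ and each unsafe edge has capacity $1$, every $s$-$t$ cut in $F$ must have capacity at least $4$. Writing $a = |\delta_{F \cap \calS}(A)|$ and $b = |\delta_{F \cap \calU}(A)|$ for an arbitrary $s$-$t$ separating set $A$, this gives the single inequality $2a + b \ge 4$, and the three conclusions become: (1) $a \ge 2$, (2) $a + b \ge 4$, (3) $a = 1$ and $b = 2$.

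The argument is then a short case analysis on $a$, assuming (1) fails (so $a \le 1$). First I would handle $a = 0$: then $b \ge 4$ directly from $2a + b \ge 4$, so $a + b \ge 4$ and (2) holds. Next, for $a = 1$, the capacity bound gives $b \ge 2$. If $b = 2$ exactly, conclusion (3) is satisfied by definition. Otherwise $b \ge 3$, whence $a + b \ge 4$ and (2) holds. In every case at least one of (1), (2), (3) is satisfied, completing the proof.

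There is no real obstacle here; the lemma is essentially an unpacking of the capacity constraint $2a + b \ge 4$ into the three possible ``patterns'' of $(a,b)$ that the subsequent augmentation analysis needs to distinguish. The only thing to be careful about is that the starting instance of Cap-ST is set up so that the capacity bound is tight enough: the reduction from $8$-Cap-ST with capacities $(4,2)$ to $4$-Cap-ST with capacities $(2,1)$ (by dividing through by $2$) is what makes the threshold land exactly at $4$, and I would state this explicitly before invoking the capacity inequality.
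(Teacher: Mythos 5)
Your proof is correct and follows essentially the same route as the paper: casework on the number of safe edges crossing $A$, driven by the capacity inequality $2a+b\ge 4$ from feasibility for the $4$-Cap-ST instance. You are in fact slightly more explicit than the paper in the $a=1$ case (separating $b=2$ from $b\ge 3$), which is a harmless refinement of the same argument.
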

\begin{proof}
This is easily verified by casing on the number of safe edges in 
$\delta_{F}(A)$. If |$\delta_{F \cap \calS}(A)| = 0$, since capacity of
$\delta_F(A)$ is $4$, it contains at least four unsafe edges. If 
$|\delta_{F \cap \calS}(A)| = 1$, then the safe edge provides capacity of 
$2$ and to reach capacity $4$, $\delta_F(A)$ contains at least two unsafe 
edges. Else |$\delta_{F \cap \calS}(A)| \ge 2$ as desired.
\end{proof}

Notice that case (3) in Lemma \ref{lem:st_22_starting_criteria} is the only 
case where $A$ is violated. In particular, this tells us that $F$ is a 
feasible solution to $(2,1)$-Flex-ST and that the only nontrivial iteration of 
Algorithm \ref{algo:augmentation_in_stages} is when $i = p-1 = 1$. 

Following Section \ref{subsec:st}, by symmetry we redefine $\calC$ to limit 
ourselves to the set of violated cuts containing $s$. We
consider a flow network on the graph $(V, F)$ with safe edges given a capacity 
of $2$ and unsafe edges given a capacity of $1$. Since $F$ satisfies the 
$4$-Cap-ST requirement, the minimum capacity $s$-$t$ cut and thus the 
maximum $s$-$t$ flow value is at least 4. By Lemma \ref{lem:st_22_starting_criteria}
all violated cuts have capacity 4, thus the maximum $s$-$t$ flow is exactly 4. 
By flow decomposition, we can decompose a maximum flow $f$ into a set 
$\calP = \{P_1, P_2, P_3, P_4\}$ of $4$ paths, 
each carrying a flow of 1, and we can find $\calP$ in polynomial time. 
For each $i \in [4]$, we define 
a subfamily of violated cuts $\calC^i$, where $A \in \calC^i$ iff 
the unique safe edge $e \in \delta_\calS(A)$ is contained in $P_i$.
By Lemma \ref{lemma:st_ringfamily}, each $\calC^i$ is a ring family. 
We will show that three of these subfamilies suffice to cover all violated 
cuts. 

\begin{lemma}
\label{lem:st_22_allcutscovered}
    $\calC \subseteq \calC_1 \cup \calC_2 \cup \calC_3$. 
\end{lemma}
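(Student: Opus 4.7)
The plan is to use a simple flow-counting argument. Take any violated cut $A \in \calC$. By definition $A$ separates $s$ from $t$, has at most one safe edge on its boundary, and has $p+q-1 = 3$ total edges on its boundary. By Lemma \ref{lem:st_22_starting_criteria}, the only way $A$ is violated (given that $F$ satisfies the $4$-capacity requirement) is when $\delta_F(A)$ consists of exactly one safe edge $e$ and two unsafe edges. In particular the total capacity of $\delta_F(A)$ under the chosen capacities (safe edges weight $2$, unsafe edges weight $1$) is exactly $2 + 1 + 1 = 4$, matching the value of the $s$-$t$ max flow $f$. Hence the cut $A$ is saturated by $f$.

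Next I would count the flow through $e$. Since each unsafe edge in $\delta_F(A)$ has capacity $1$, the two unsafe edges together contribute at most $2$ units of flow across the cut. Since the cut is saturated at value $4$, the safe edge $e$ must carry at least $2$ units of flow. Its capacity is also $2$, so it carries exactly $2$ units. Because every path in the decomposition $\calP = \{P_1,P_2,P_3,P_4\}$ carries a single unit of flow, the number of paths containing $e$ equals the flow through $e$, which is exactly $2$.

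Finally I apply pigeonhole. The safe edge $e$ lies in exactly $2$ of the four paths $P_1,\dots,P_4$; at most one of these can be $P_4$, so at least one index $i \in \{1,2,3\}$ satisfies $e \in P_i$, whence $A \in \calC^i$. This proves $\calC \subseteq \calC^1 \cup \calC^2 \cup \calC^3$.

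The argument is essentially immediate once the cut is shown to be saturated; there is no real obstacle, but the one thing to double-check is that the flow decomposition really gives unit-flow paths (ensured by integrality of the capacities $1$ and $2$ together with the standard integral max-flow decomposition), so that ``flow through $e$'' and ``number of paths through $e$'' coincide.
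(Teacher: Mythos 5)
Your proof is correct and follows essentially the same route as the paper's: both argue that a violated cut has exactly one safe and two unsafe edges, hence capacity exactly $4$, is therefore saturated by the max flow, forcing the safe edge to carry $2$ units and thus lie on at least two of the four unit-flow paths, so it must lie in one of $P_1,P_2,P_3$. Your additional remark about integrality of the path decomposition is a reasonable point to check but is already guaranteed by the paper's setup.
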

\begin{proof}
    Let $A \in \calC$. By Lemma \ref{lem:st_22_starting_criteria}, $A$ has 
    exactly two unsafe edges and one safe edge. In particular, this means 
    that the capacity of $\delta(A)$ on the given flow network is 
    4, so all edges on $\delta(A)$ must be fully saturated for any max flow $f$. 
    This implies that the unique safe edge $e'$ in $\delta_\calS(A)$ must have
    flow value 2. Thus $e'$ belongs to at least two of the paths 
    $P_1, \dots, P_4$, so $e' \in P_1 \cup P_2 \cup P_3$. Therefore, $A$ 
    must be in $C_1 \cup C_2 \cup C_3$ as desired.
\end{proof}

\begin{theorem}
\label{thm:st_22_main}
    There exists a $5$-approximation for $(2,2)$-Flex-ST.
\end{theorem}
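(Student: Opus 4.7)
The plan is to combine the preparatory lemmas already stated in this subsection with the ring-family covering machinery from Section \ref{subsec:st}, carefully tracking the approximation factor in the special case $(p,q) = (2,2)$.

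First, compute a $2$-approximation $F$ to the instance of $4$-Cap-ST in which safe edges have capacity $2$ and unsafe edges have capacity $1$; this exists because the underlying LP is solvable optimally, and the rounding blows up by at most $\max_e u_e = 2$. So $\cost(F) \le 2 \cdot \opt$. By Lemma \ref{lem:st_22_starting_criteria}, every $s$-$t$ cut $A$ either already has $|\delta_{F \cap \calS}(A)| \ge 2$, or has $|\delta_F(A)| \ge 4$, or is a violated cut of the very specific form (two unsafe and one safe edge). In particular $F$ is feasible for $(2,1)$-Flex-ST, so the only nontrivial stage of Algorithm \ref{algo:augmentation_in_stages} is the single iteration $i = p-1 = 1$.

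Next, perform the flow decomposition in Section \ref{subsec:st} on the capacitated graph $(V, F)$: we obtain exactly four unit-flow paths $\calP = \{P_1, P_2, P_3, P_4\}$, and define $\calC^j$ as in the statement just above the theorem. By Lemma \ref{lem:st_22_allcutscovered}, the full family of violated cuts satisfies $\calC \subseteq \calC^1 \cup \calC^2 \cup \calC^3$, so it is enough to cover only \emph{three} of the four subfamilies. The key point is that a violated cut has $\delta(A)$ fully saturated in the max flow, forcing the unique safe edge in $\delta_\calS(A)$ to carry two units of flow and therefore to lie in at least two of the $P_j$, hence in $P_1 \cup P_2 \cup P_3$.

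Finally, by Lemma \ref{lemma:st_ringfamily} each $\calC^j$ is a ring family, so it admits an \emph{exact} polynomial-time min-cost cover algorithm (see the discussion after Theorem \ref{thm:uncrossable}). Since any optimal $(2,2)$-Flex-ST solution covers all of $\calC$ and in particular each $\calC^j$, the min-cost cover of $\calC^j$ in $G - F$ costs at most $\opt$. Summing over $j = 1, 2, 3$ gives an augmentation cost of at most $3 \cdot \opt$, and adding $\cost(F) \le 2 \cdot \opt$ yields the claimed $5 \cdot \opt$ bound. The argument is essentially a bookkeeping exercise on top of the lemmas already proved; the only conceptual step is the observation that three ring families (rather than four) suffice, which is exactly what Lemma \ref{lem:st_22_allcutscovered} provides, so there is no remaining obstacle.
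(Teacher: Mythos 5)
Your proposal is correct and follows essentially the same route as the paper's own proof: a $2$-approximate $4$-Cap-ST solution $F$, the structural Lemma~\ref{lem:st_22_starting_criteria}, the four-path flow decomposition with the observation that the saturated safe edge lies in at least two paths (hence three ring families $\calC^1,\calC^2,\calC^3$ suffice by Lemma~\ref{lem:st_22_allcutscovered}), and exact ring-family covers costing at most $\opt$ each, for a total of $5\opt$. No gaps.
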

\begin{proof}
    We can solve the augmentation problems of finding the minimum cost
    subset $F_i \subseteq E \setminus F$ s.t. $\delta_{F_i}(A) \geq 1$ for
    all $A \in \calC^i$ for each $i = 1,2,3$. Since
    $\calC \subseteq \cup_{i=1}^3 \calC^i$ (by Lemma \ref{lem:st_22_allcutscovered}), 
    $F' = \cup_{i=1}^3 F_i$ is a
    feasible solution to the augmentation problem of finding the minimum
    cost subset of $E \setminus F$ s.t. $\delta_{F'}(A) \geq 1$
    for all $A \in \calC$. Therefore, $F \cup F'$ is a feasible solution
    to $(2,2)$-Flex-ST.
    From the discussion in Section \ref{sec:prelim}, we see that the three 
    corresponding augmentation problems can be solved exactly since each 
    $C^i$ is a ring family, so
    $\cost(F_i) \leq \opt$.  Thus,
    $\cost(F') \leq 3\opt$. Since $\cost(F) \leq 2 \opt$, 
    we obtain an overall cost of
    $\cost(F \cup F') \leq 5 \opt$.
\end{proof}

\end{document}